%
\documentclass[runningheads]{llncs}
%


\usepackage{graphicx}
\usepackage{booktabs}   
\usepackage{okexplain}
\usepackage{tikz-cd}
\usepackage{graphicx}
\usepackage{marvosym}
\usepackage{tikz}
\usepackage{wrapfig}
\usepackage[shortlabels]{enumitem}
\usepackage{xcolor}
\usepackage{amsmath}
\usepackage{mathtools}
\usepackage{suffix}
\usepackage{amsfonts}
\usepackage{mathpartir}
\usepackage{enumitem}
\usepackage{stmaryrd}
\usepackage[all]{xy}
\usepackage{amssymb}
\usepackage{twoopt}

\usepackage{amsthm}


\makeatletter
\newtheorem*{rep@theorem}{\rep@title}
\newcommand{\newreptheorem}[2]{%
\newenvironment{rep#1}[1]{%
 \def\rep@title{#2 ##1}%
 \begin{rep@theorem}\def\@currentlabel{##1}}%
 {\end{rep@theorem}}}
\makeatother
\newreptheorem{theorem}{Theorem}
\newreptheorem{corollary}{Corollary}
\newreptheorem{proposition}{Proposition}
\newtheorem*{theorem*}{Theorem}
\newtheorem*{corollary*}{Corollary}

\newcommand\computationtype[1]{\underline{#1}}
\makeatletter
\newcommand\@TyAlph[1]{%
\ifcase #1\or \tau\or \sigma\or \rho\else \@ctrerr \fi%
}
\newcommand\ty[1][1]{{\@TyAlph{#1}}}

\newcommand\@CTyAlph[1]{%
\computationtype{\ifcase #1\or \tau\or \sigma\or \rho\else \@ctrerr \fi}%
}
\newcommand\cty[1][1]{{\@CTyAlph{#1}}}

\newcommand\tvar[1][1]{{\@TyVarAlph{#1}}}
\newcommand\@TyVarAlph[1]{%
\ifcase #1\or \alpha\or \beta\or \gamma\else \@ctrerr \fi%
}
\newcommand\var[1][1]{{\@VarAlph{#1}}}
\newcommand\@VarAlph[1]{%
\ifcase #1\or x\or y\or z\or u\or v\or w\else \@ctrerr \fi%
}

\newcommand\trm[1][1]{{\@TermAlph{#1}}}
\newcommand\@TermAlph[1]{%
\ifcase #1\or t\or s\or r\else \@ctrerr \fi%
}

\newcommand\val[1][1]{%
\ifcase #1\or v\or w\or u\else \@ctrerr \fi%
}

\newcommand\op[1][1]{%
\ifcase #1\or \mathsf{op}\or \mathsf{op}'\or \mathsf{op}''\else \@ctrerr \fi%
}
\makeatother
\newcommand\lop{\mathsf{lop}}
\newcommand\Op{\mathsf{Op}}
\newcommand\LOp{\mathsf{LOp}}

\newcommand\cnst{\underline{c}}
\newcommand\zero{\underline{0}}
\newcommand\sigmoid{\varsigma}
\newcommand\tSum{\mathrm{sum}}
\newcommand\applin[2]{\mathbf{lapp}(#1,#2)}
\newcommand\lPair[2]{\mathbf{lpair}(#1,#2)}
\newcommand\lFst{\mathbf{lfst}\,}
\newcommand\lSnd{\mathbf{lsnd}\,}
\newcommand\lcomp{;\!\!;}

\newcommand\tUnit{\tTuple{}}

\newcommand\tPair[2]{\langle #1, #2\rangle}

\newcommand\tTuple[1]{\langle #1\rangle}

\newcommand\fun[1]{\lambda #1.}

\newcommand\lfun[1]{\underline{\lambda} #1.}
\newcommand\lapp[2]{#1\{#2\}}
\newcommand\letin[3]{\mathbf{let}\,#1=\,#2\,\mathbf{in}\,#3}

\newcommand\lLambda{\underline{\Lambda}}

\newcommand\projection{\pi}
\newcommand\injection{\iota}
\newcommand\CMon{\mathbf{CMon}}
\newcommand\DiffMon{\mathbf{Diff_{CM}}}
\newcommand\semext[1]{\{\!|#1|\!\}}

\newcommand\DiffMonNL{\mathbf{Diff_{CM}^{\mathrm{non-lin}}}}

\newcommand\tensMatch[5][\,]{\mathbf{case}\,#2\,\mathbf{of}#1{#3}\otimes{#4}\To#5}

\newcommand\tZip{\mathbf{zip}}
\newcommand\tZipWith{\mathbf{zipWith}}

\newcommand\tFst{\mathbf{fst}\,}
\newcommand\tSnd{\mathbf{snd}\,}

\newcommand\Cat{\mathbf{Cat}}

\newcommand\ctx{\Gamma}
\newcommand\tinf{\vdash}
\newcommand\Ginf[3][]{\ctx #1\tinf #2 : #3}
\newcommand\subst[2]{#1{}[#2]}
\newcommand\sfor[2]{^{#2}\!/\!_{#1}}

\newcommand\creals{\underline{\mathbf{real}}}
\newcommand\reals{\mathbf{real}}

\newcommand\Unit{\mathbf{1}}

\WithSuffix\newcommand\t*{\boldsymbol{\mathop{*}}}

\newcommand\ListSym{\mathbf{List}}

\newcommand\MapSym{\mathbf{Tens}}

\newcommand\List[1]{\ListSym(#1)}
\newcommand\Map[2]{\MapSym(#1,#2)}
\newcommand\LinFunSym{\mathbf{LFun}}
\newcommand\LinFun[2]{\LinFunSym(#1,#2)}

\newcommand\To{\to}

\newcommand\EmptyList{\mathbf{[\,]}}

\newcommand\ListCons[2]{#1:: #2}
\newcommand\PlusList[2]{#1+\!\!+\, #2}
\newcommand\ListFold[5]{\mathbf{fold}\,#1\,\mathbf{over}\,#2\,\mathbf{in}\,#3\,\mathbf{from}\,#4=#5}

\newcommand\Dsynsymbol[1][]{\scalebox{0.8}{$\overrightarrow{\mathcal{D}}$}_{#1}}
\newcommand\Dsyn[2][]{\Dsynsymbol[#1](#2)}

\newcommand\Dsynrevsymbol[1][]{\scalebox{0.8}{$\overleftarrow{\mathcal{D}}$}_{#1}}
\newcommand\Dsynrev[2][]{\Dsynrevsymbol[#1](#2)}

\newcommand\LDcatsymbol[1][]{\scalebox{0.8}{$\overrightarrow{\mathfrak{D}}$}_{#1}}
\newcommand\LDcatrevsymbol[1][]{\scalebox{0.8}{$\overleftarrow{\mathfrak{D}}$}_{#1}}
\newcommand\LDcat[2][]{\LDcatsymbol[#1][#2]}
\newcommand\LDcatrev[2][]{\LDcatrevsymbol[#1][#2]}

\newcommand\Dcatsymbol[1][]{\scalebox{0.8}{$\overrightarrow{\mathfrak{D}}$}_{#1}}
\newcommand\Dcatrevsymbol[1][]{\scalebox{0.8}{$\overleftarrow{\mathfrak{D}}$}_{#1}}
\newcommand\Dcat[2][]{\Dcatsymbol[#1][#2]}
\newcommand\Dcatrev[2][]{\Dcatrevsymbol[#1][#2]}


\newcommand\ALSyn{\mathbf{ALSyn}}
\newcommand\CSyn{{\mathbf{CSyn}}}
\newcommand\LSyn{{\mathbf{LSyn}}}
\newcommand\Syn{\mathbf{Syn}}

\newcommand\tFromMaybe[1]{\mathrm{fromMaybe}}

\newcommand\tMap{\mathbf{map}}

\newcommand{\plots}[1]{\mathcal{P}_{#1}}
\newcommand\freeeq[1]{\stackrel{\# #1}{=}}
\newcommand\beeq{\stackrel{\beta\eta}{=}}
\newcommand\bepeq{\!\stackrel{\beta\eta+}{=}\!}

\makeatletter
\newcommand{\pushright}[1]{\ifmeasuring@#1\else\omit$\displaystyle#1$\ignorespaces\fi}
\makeatother

\newcommand\explainr[1]{&\pushright{\color{gray}\scriptsize\{\;\textnormal{#1}\;\}}}

\newcommand\citeappx[1]{\ifx\fossacsversion\undefined Appx.~#1\else\cite[Appx.~#1]{vakar2020reverse}\fi}



\definecolor{shade}{RGB}{223,223,223}
\definecolor{unshade}{RGB}{255,255,255}

\usepackage{tcolorbox}
\newtcbox{\shadebox}{on line,arc=1pt, outer arc=2pt,%
  colback=shade,colframe=shade,boxsep=0pt,%
  left=1pt,right=1pt,top=2pt,bottom=2pt,%
  boxrule=0pt,bottomrule=1pt,toprule=1pt}

\newtcbox{\unshadebox}{on line,arc=1pt, outer arc=2pt,%
  colback=unshade,colframe=shade,boxsep=0pt,%
  left=1pt,right=1pt,top=2pt,bottom=2pt,%
  boxrule=0pt,bottomrule=1pt,toprule=1pt}

\newcommand\syncat[1]{\mspace{-25mu}\synname{#1}}
\newcommand\synname[1]{\qquad\text{#1}}

\newenvironment{syntax}[1][]{%
\(
  \begin{array}[t]{#1l@{\quad\!\!}*3{l@{}}@{\,}l}
}{
\end{array}
\)%
}

\newcommand\gdefinedby{::=}
\newcommand\gor{\mathrel{\lvert}}


\newcommand{\Diff}{\mathbf{Diff}}

\makeatletter
\def\MTrightharpoonupfill{%
  \arrowfill@\relbar\relbar\rightharpoonup}
\def\MTleftharpoondownfill{%
  \arrowfill@\leftharpoondown\relbar\relbar}
\def\MTleftharpoonupfill{%
  \arrowfill@\leftharpoonup\relbar\relbar}
\def\MTrightharpoondownfill{%
  \arrowfill@\relbar\relbar\rightharpoondown}
\newcommand*\xhookrightleftharpoons[2][]{\mathrel{%
  \raise.22ex\hbox{%
    $\lhook\joinrel\ext@arrow 0359\MTrightharpoonupfill{\phantom{#1}}{#2}$}%
  \setbox0=\hbox{%
    $\ext@arrow 3095\MTleftharpoondownfill{#1}{\phantom{\lhook\joinrel#2}}$}%
  \kern-\wd0 \lower.22ex\box0}}
\newcommand*\xleftrighthookharpoons[2][]{\mathrel{%
  \raise.22ex\hbox{%
    $\ext@arrow 3095\MTleftharpoonupfill{\phantom{#1\mspace{15mu}}}{#2}$}%
  \setbox0=\hbox{%
    $\mathrel{\raise-.4837ex\hbox{$\lhook$}}\joinrel\ext@arrow 0359\MTrightharpoondownfill{#1}{\phantom{#2}}$}%
  \kern-\wd0 \lower.22ex\box0}}
\makeatother

\newcommand\pair[2]{\parent{#1, #2}}
\newcommand\parent[1]{\left(#1\right)}
\WithSuffix\newcommand\pair-[2]{(#1, #2)}

\usepackage[bbgreekl]{mathbbol}

\newcommand{\lUnit}{\underline{\Unit}}

\newcommand{\CartSp}{\mathbf{CartSp}}
\newcommand{\Man}{\mathbf{Man}}

\newcommand{\Set}{\mathbf{Set}}

\newcommand\inv[1]{#1^{-1}}
\WithSuffix\newcommand\inv+[1]{\parent{#1}^{-1}}

\newcommand\initial{\mathbb{0}}
\newcommand\terminal{\mathbb{1}}

\newcommand\isomorphic\cong

\newcommand{\sem}[1]{\llbracket #1\rrbracket}
\newcommand{\semgl}[1]{\llparenthesis #1\rrparenthesis^f}
\newcommand{\semglrev}[1]{\llparenthesis #1\rrparenthesis^r}
\newcommand{\RR}{\mathbb{R}}
\newcommand{\NN}{\mathbb{N}}
\newcommand\cat[1]{\mathcal{#1}}
\newcommand\catC{\cat{C}}

\newcommand\catL{\cat{L}}


\newcommand\ev[1][]{\mathrm{ev}^{#1}}
\newcommand\evRsymbol[1][]{\mathrm{evR}^{#1}}
\newcommandtwoopt\evR[3][][]{\evRsymbol[#2]_{#1}(#3)}
\newcommand\lamRsymbol[1][]{\mathrm{lamR}^{#1}}
\newcommandtwoopt\lamR[3][][]{\lamRsymbol[#2]_{#1}(#3)}

\newcommand{\Gl}[1][]{\scalebox{0.8}{$\overrightarrow{\mathbf{SScone}}$}_{#1}}
\newcommand{\GlRev}[1][]{\scalebox{0.8}{$\overleftarrow{\mathbf{SScone}}$}_{#1}}

\newcommand{\sPair}[2]{( #1, #2 )}

\newcommand\transpose[1]{{#1}^{t}}

\newcommand\leval[1]{{\mathbf {leval}}_{#1}}
\newcommand\lsing[1]{\{(#1,-)\}}
\newcommand\lcurry[1][{}]{{\mathbf {lcur}}_{#1}}
\newcommand\lswap[1][{}]{{\mathbf {lswap}}_{#1}}
\newcommand\linearid[1][{}]{{\mathbf {lid}}_{#1}}
\newcommand\id[1][{}]{{\rm id}_{#1}}

\newcommand\xto\xrightarrow

\makeatletter
\newcommand*\bigcdot{\mathpalette\bigcdot@{.6}}
\newcommand*\bigcdot@[2]{\mathbin{\vcenter{\hbox{\scalebox{#2}{$\m@th#1\bullet$}}}}}
\makeatother

\newcommand\innerprod[2]{#1\bigcdot #2}

\newcommand\seq[2][]{\left(#2\right)_{#1}}
\newcommand\coseq[2][]{\left[#2\right]_{#1}}
\newcommand\set[1]{\left\{#1\right\}}
\newcommand\carrier[1]{\left\lvert#1\right\rvert}

\newcommand\Domain[1]{\mathop{\rm Dom}\parent{#1}}

\renewcommand\lim{\mathrm{lim}}

\newcommand\ob[1]{\mathrm{ob}\,#1}

\newcommand{\defeq}{\stackrel {\mathrm{def}}=}

\usepackage{todonotes}

\makeatletter
\RequirePackage{zref}

\zref@newprop{oktheoremfreetext}{\oktheorem@parameter}

\newcommand\OKTheoremAddReferences[2]{
  \expandafter\newcommand\csname#1ref\endcsname[1]{#2~\ref{#1:##1}}
  \expandafter\newcommand\csname#1label\endcsname[1]{\label{#1:##1}}
  \WithSuffix\expandafter\newcommand\csname#1ref\endcsname*[1]{\ref{#1:##1}}
  \WithSuffix\expandafter\newcommand\csname#1label\endcsname+[1]{\hypertarget{#1+:##1}{}\zref@labelbyprops{#1:##1}{oktheoremfreetext}}
  \WithSuffix\expandafter\newcommand\csname#1ref\endcsname+[1]{\hyperlink{#1+:##1}{{{\let\ref\@refstar#2~\zref@extract{#1:##1}{oktheoremfreetext}}}}}
  \WithSuffix\expandafter\newcommand\csname#1ref\endcsname-[1]{\hyperlink{#1+:##1}{{\let\ref\@refstar{\zref@extract{#1:##1}{oktheoremfreetext}}}}}
}
\makeatother
\theoremstyle{definition}
\newtheorem{insight}{Insight}

\newtheorem*{exampleEnv*}{Example}
\newtheorem*{exampleEnv+}{Example \oktheorem@parameter}
\newenvironment{example*}{\begin{exampleEnv*}}{\qed\end{exampleEnv*}}
\newenvironment{example+}[1]{\def\oktheorem@parameter{#1}\begin{exampleEnv+}}{\qed\end{exampleEnv+}}
\OKTheoremAddReferences{example}{Example}

\newcommand\fsqsection[1]{\section{#1}\vspace{-5pt}}
\newcommand\sqsection[1]{\vspace{-4pt}\fsqsection{#1}}
\newcommand\sqsubsection[1]{\vspace{-4pt}\subsection{#1}\vspace{-3pt}}
\newcommand\sqsubsubsection[1]{\subsubsection*{#1}}

\newenvironment{tightitemize}{\begin{itemize}[leftmargin=*,noitemsep,topsep=0pt]}{\end{itemize}\ignorespacesafterend}

  \renewcommand\Domain[1]{\mathbf{Dom}(#1)}
  \newcommand\LDomain[1]{\mathbf{LDom}(#1)}

\newcommand\ct[1]{\underline{#1}}
\newcommand\cRR{\ct{\RR}}
\newcommand\cNN{\ct{\NN}}

\allowdisplaybreaks
%
%

\begin{document}
\title{Reverse AD at Higher Types:
Pure, Principled and Denotationally Correct}
%
%
\author{Matthijs V\'ak\'ar}
\authorrunning{M. V\'ak\'ar}

\institute{Utrecht University}
\maketitle              
\begin{abstract}
    We show how to define forward- and reverse-mode automatic differentiation source-code transformations or  on a standard higher-order functional language. The transformations generate purely functional code, and they are principled in the sense that their definition arises from a categorical universal property. We give a semantic proof of correctness of the transformations. In their most elegant formulation, the transformations generate code with linear types. However, we demonstrate how the transformations can be implemented in a standard functional language without sacrificing correctness. To do so, we make use of abstract data types to represent the required linear types, e.g. through the use of a basic module system.

\keywords{automatic differentiation \and program correctness \and semantics.}
\end{abstract}
\fsqsection{Introduction}\label{sec:introduction}
Automatic differentiation (AD) is a technique for
transforming code that implements a function $f$ into code 
that computes $f$'s derivative, essentially by 
using the chain rule for derivatives.
Due to its efficiency and numerical stability, 
AD is the technique of choice whenever derivatives 
need to be computed of functions
that are implemented as programs, particularly in 
high dimensional settings.
Optimization and Monte-Carlo integration 
algorithms, such as gradient descent and Hamiltonian Monte-Carlo methods,
rely crucially on the calculation of derivatives.
These algorithms are used in virtually every machine learning 
and computational statistics application, and the calculation 
of derivatives is usually the computational bottleneck.
These applications explain the recent surge of interest in AD,
which has resulted in the proliferation of popular AD systems such as TensorFlow \cite{abadi2016tensorflow},
PyTorch \cite{paszke2017automatic}, and Stan Math \cite{carpenter2015stan}.\\
\indent AD, roughly speaking, comes in two modes: forward-mode and reverse-mode.
When differentiating a function $\RR^n\to \RR^m$, forward-mode tends to 
be more efficient if $m\gg n$, while reverse-mode generally is more performant
if $n\gg m$.
As most applications reduce to optimization or Monte-Carlo integration of an 
objective function $\RR^n\to\RR$ with $n$ very large (today, in the order of $10^4-10^7$),
reverse-mode AD is in many ways the more interesting algorithm.

However, reverse AD is also more complicated to understand and implement than forward AD.
Forward AD can be implemented as a structure-preserving 
program transformation, even on languages with complex features \cite{shaikhha2019efficient}.
As such, it admits an elegant proof of correctness \cite{hsv-fossacs2020}.
By contrast, reverse-AD is only well-understood as a source-code transformation
(also called \emph{define-then-run} style AD) 
on limited programming languages.
Typically, its implementations on more expressive languages that have features 
such as higher-order functions make use of \emph{define-by-run}
approaches. These approaches first build a computation graph during runtime,
effectively evaluating the program until a straight-line first-order program is left,
and then they evaluate this new program \cite{paszke2017automatic,carpenter2015stan}.
Such approaches have the severe downside that 
the differentiated code cannot benefit from existing optimizing compiler architectures.
As such, these AD libraries need to be implemented using carefully, manually 
optimized code, that for example does not contain any common subexpressions.
This implementation process is precarious and labour intensive.
Further, some whole-program optimizations that a compiler would detect
go entirely unused in such systems.

Similarly, correctness proofs of reverse AD have taken a define-by-run approach 
and have relied on non-standard 
operational semantics, using forms of symbolic execution \cite{abadi-plotkin2020,mak-ong2020,brunel2019backpropagation}.
Most work that treats reverse-AD as a source-code transformation 
does so by making use of complex transformations which introduce
mutable state and/or 
non-local control flow \cite{pearlmutter2008reverse,wang2018demystifying}.
As a result, we are not sure whether and why such techniques are correct.
Another approach has been to  compile high-level languages to a low-level 
imperative representation first, and then to perform AD at that level \cite{innes2018don}, using mutation and jumps.
This approach has the downside that we might lose important opportunities for compiler 
optimizations, such as map-fusion and embarrassingly parallel maps,
which we can exploit if we perform define-then-run AD on a high-level representation.

A notable exception to these define-by-run and non-functional approaches to 
AD is \cite{elliott2018simple}, which presents an elegant, purely functional, 
define-then-run version of reverse AD.
Unfortunately, their techniques are limited to first-order programs over tuples of real numbers.
This paper extends the work of \cite{elliott2018simple} to
apply to higher-order programs over (primitive) arrays of reals:
\begin{itemize}
\item It defines purely functional define-then-run reverse-mode AD on a higher-order language.
\item It shows how the resulting, mysterious looking program transformation
arises from a universal property if we phrase the problem in a suitable categorical language.
Consequently, the transformations automatically respect equational reasoning principles.
\item It explains, from this categorical setting, precisely in what sense reverse AD is the ``mirror image''
of forward AD.
\item It presents an elegant proof of semantic correctness of the AD transformations, based on a 
semantic logical relations argument, 
demonstrating that the transformations calculate the derivatives of the program in the usual mathematical sense.
\item It shows that the AD definitions and correctness proof are extensible to higher-order primitives such as a
$\tMap$-operation over our primitive arrays.
\item  It discusses how our techniques are readily implementable in standard functional languages 
to give purely functional, principled, semantically correct, define-then-run reverse-mode~AD.
\end{itemize}\vspace{-8pt}
\sqsection{Key Ideas}\label{sec:key-ideas}
Consider a simple programming language.
Types are statically sized arrays $\reals^n$ for some $n$,
and programs are obtained from a collection of (unary) primitive
operations $\var:\reals^n\vdash \op(\var):\reals^m$ (intended to implement differentiable functions 
like linear algebra operations and sigmoid functions)
by sequencing.

We can implement both forward mode $\Dsynsymbol$
and reverse mode AD $\Dsynrevsymbol$ on this~language as source-code translations to the larger language of a
simply typed~$\lambda$-calculus over the ground types 
$\reals^n$ that includes at least the same operations.
Forward (resp. reverse) AD translates a type $\ty$ to a pair of types $\Dsyn{\ty}=(\Dsyn{\ty}_1,\Dsyn{\ty}_2)$
(resp. $\Dsynrev{\ty}=(\Dsynrev{\ty}_1,\Dsynrev{\ty}_2)$) --
the first component for holding function values, also called \emph{primals} in the AD literature;
the second component for holding derivative values, also called \emph{tangents} (resp. \emph{adjoints} or \emph{cotangents}):\vspace{-6pt}
\vspace{-2pt}\[
\Dsyn{\reals^n}\defeq\Dsynrev{\reals^n}=(\reals^n,\reals^n).\vspace{-4pt}
\]
We translate terms $\var:\ty\vdash \trm :\ty[2]$ to pairs of terms $\Dsyn{\trm}=(\Dsyn{\trm}_1,\Dsyn{\trm}_2)$ 
for forward AD and $\Dsynrev{\trm}=(\Dsynrev{\trm}_1,\Dsynrev{\trm}_2)$ for reverse AD, which have types\vspace{-3pt}
\[\begin{array}{lllll} 
    \var:\Dsyn{\ty}_1&\vdash \Dsyn{\trm}_1:\Dsyn{\ty[2]}_1 &\textnormal{and}\quad&\var:\Dsynrev{\ty}_1&\vdash \Dsynrev{\trm}_1:\Dsynrev{\ty[2]}_1\\
    \var:\Dsyn{\ty}_1&\vdash \Dsyn{\trm}_2:
    \Dsyn{\ty}_2\To\Dsyn{\ty[2]}_2\; && 
    \var:\Dsynrev{\ty}_1&\vdash \Dsynrev{\trm}_2:
    \Dsynrev{\ty[2]}_2\To\Dsynrev{\ty}_2.
    \end{array}\vspace{-6pt}\]
$\Dsyn{\trm}_1$ and $\Dsynrev{\trm}_1$ perform the primal computations for the program $\trm$,
while $\Dsyn{\trm}_2$ and $\Dsynrev{\trm}_2$ compute the derivatives, resp., for forward and reverse AD.

Indeed, we define, by induction on the syntax:
\vspace{-6pt}\\
\resizebox{\linewidth}{!}{\parbox{\linewidth}{
\begin{align*}
    &\Dsyn{\var} \defeq \Dsynrev{\var} \defeq \sPair{\var}{\fun{\var[2]}\var[2]}\quad\;
    \Dsyn{\op(\trm)}_1  \defeq {\op(\Dsyn{\trm}_1)}\quad\; \Dsynrev{\op(\trm)}_1\defeq \op(\Dsynrev{\trm}_1)\\ 
    &\Dsyn{\op(\trm)}_2   \defeq  \fun{\var[2]}(D\op)(\Dsyn{\trm}_1)\,(\Dsyn{\trm}_2\,\var[2])  \quad
    \Dsynrev{\op(\trm)}_2   \defeq\fun{\var[2]}\Dsynrev{\trm}_2\,(\transpose{(D\op)}(\Dsynrev{\trm}_1)\,\var[2]),\vspace{-10pt}
\end{align*}}}\\
where we assume that we have chosen 
suitable terms $\var:\reals^n\vdash (D\op)(\var):\reals^n\To\reals^m$ and 
$\var:\reals^n\vdash \transpose{(D\op)}(\var):\reals^m\To\reals^n$ 
to represent the (multivariate) derivative and transposed  (multivariate) derivative, respectively,
of the primitive operation $\op:\reals^n\To\reals^m$.

For example, in case  of multiplication $\var:\reals^n\vdash \op(\var)=(*)(\var):\reals$,
we can choose $D(*)(x)=\lambda y:\reals^2.\innerprod{\mathbf{swap}(x)}{y}$ and
$\transpose{(D(*))}(x)=\lambda y:\reals.y\cdot \mathbf{swap}(x)$,
where $\mathbf{swap}$ is a unary operation on $\reals^2$ that swaps both components,
$(\innerprod{}{})$ is a binary inner product operation on $\reals^2$ and 
$(\cdot)$ is a binary scalar product operation for rescaling a vector in $\reals^2$
by a real number $\real$.

To illustrate the difference between $\Dsynsymbol$ and $\Dsynrevsymbol$, consider the program $\trm=\op_2(\op_1(\var))$ performing two
operations in sequence.
Then, $\Dsyn{\trm}_1= \op_2(\op_1(\var)) = \Dsynrev{\trm}_1$ and (after $\beta$-reducing, for legibility)\vspace{-6pt}
$$\Dsyn{\trm}_2 = \fun{\var[2]}(D\op_2)(\op_1(\var))((D\op_1)(\var)(\var[2]))\vspace{-6pt}$$
$$\Dsynrev{\trm}_2 = \fun{\var[2]}\transpose{(D\op_1)}(\var)(\transpose{(D\op_2)}(\op_1(\var))(\var[2])).$$
In general, $\Dsynsymbol$ computes the derivative of a program that is a composition of operations $\op_1,\ldots,\op_n$ as 
the composition $(D\op_1),\ldots , (D\op_n)$ of the (multivariate) derivatives, in the same order as the original 
computation. By constrast, $\Dsynrevsymbol$ computes the \emph{transposed} derivative of such a composition of $\op_1,\ldots,\op_n$ as the 
composition of the transposed derivatives $\transpose{(D\op_n)},\ldots , \transpose{(D\op_1)}$.
{Observe the \emph{reversed order} 
compared to the original composition!}

While this AD technique works on the limited first-order language we described,
it is far from satisfying.
Notably, it has the following two shortcomings:
\begin{enumerate}[nosep]
\item it does not tell us how to perform AD on programs that involve tuples or operations of multiple 
arguments;
\item it does not tell us how to perform AD on higher-order programs, that is, programs involving $\lambda$-abstractions 
and applications.
\end{enumerate}
The key contributions of this paper are its extension of this transformation (see \S\ref{sec:combinator-macro}) to apply to a full simply typed $\lambda$-calculus
(of \S\ref{sec:language}),
and its proof that this transformation is correct (see \S\ref{sec:glueing-correctness}).

Shortcoming (1) seems easy to address, at first sight.
Indeed, as the (co)tangent vectors to a product of spaces are simply tuples of (co)tangent vectors,
one would expect to define, for a product type $\ty\t* \ty[2]$,\vspace{2pt}\\
\resizebox{\linewidth}{!}{$
\Dsyn{\ty\t* \ty[2]}\defeq (\Dsyn{\ty}_1\t* \Dsyn{\ty[2]}_1,\Dsyn{\ty}_2\t* \Dsyn{\ty[2]}_2)\qquad
\Dsynrev{\ty\t* \ty[2]}\defeq (\Dsynrev{\ty}_1\t* \Dsynrev{\ty[2]}_1,\Dsynrev{\ty}_2\t* \Dsynrev{\ty[2]}_2).
$}\vspace{2pt}\\
Indeed, this technique straightforwardly applies to forward mode AD:\vspace{-2pt}\\
\resizebox{\linewidth}{!}{\parbox{\linewidth}{
\begin{align*}
\Dsyn{\tPair{\trm}{\trm[2]}} &\defeq (\tPair{\Dsyn{\trm}_1}{\Dsyn{\trm[2]}_1},\fun{\var[2]}\tPair{\Dsyn{\trm}_2(\var[2])}{\Dsyn{\trm[2]}_2(\var[2])})\\
\Dsyn{\tFst \trm} &\defeq (\tFst\Dsyn{\trm}_1,\fun{\var[2]}\tFst\Dsyn{\trm}_2(\var[2]))\qquad
\Dsyn{\tSnd \trm} \defeq (\tSnd\Dsyn{\trm}_1,\fun{\var[2]}\tSnd\Dsyn{\trm}_2(\var[2])).
\end{align*}}}\vspace{-2pt}\\
For reverse mode AD, however, tuples already present challenges.
Indeed, we would like to use the definitions below, but they require terms $\vdash \zero: \ty$
and $\trm+\trm[2]:\ty$ for any two $\trm,\trm[2]:\ty$ for each type $\ty$:
\vspace{-2pt}\\
\resizebox{\linewidth}{!}{\parbox{\linewidth}{
\begin{align*}
    \Dsynrev{\tPair{\trm}{\trm[2]}} &\defeq (\tPair{\Dsynrev{\trm}_1}{\Dsynrev{\trm[2]}_1},\fun{\var[2]}\Dsynrev{\trm}_2\, (\tFst\var[2]) + \Dsynrev{\trm[2]}_2\, (\tSnd\var[2]))\\
    \Dsynrev{\tFst \trm}  &\defeq (\tFst\Dsynrev{\trm}_1,\fun{\var[2]}\tPair{\Dsynrev{\trm}_2(\var[2])}{\zero})\qquad
    \Dsynrev{\tSnd \trm}  \defeq (\tSnd\Dsynrev{\trm}_1,\fun{\var[2]}\tPair{\zero}{\Dsynrev{\trm}_2(\var[2])}).
    \end{align*}}}\vspace{-2pt}\\
These formulae capture the well-known issue of fanout translating to addition in reverse AD,
caused by the contravariance of its second component \cite{pearlmutter2008reverse}.
Such $\zero$ and $+$ could indeed be defined by induction on the structure of types, using 
$\zero$ and $+$ at $\reals^n$.
However, more problematically, $\tPair{-}{-}$, $\tFst{-}$ and $\tSnd{-}$ represent explicit uses of structural rules 
of contraction and weakening at types $\ty$, which, in a $\lambda$-calculus, can also be used \emph{implicitly} in the typing context $\Gamma$.
Thus, we should also make these implicit uses \emph{explicit} to account for their presence in the code.
Then, we can appropriately translate them into their ``mirror image'': we map the contraction-weakening comonoids
to the monoid structures $(+,\zero)$.\vspace{-2pt}
\begin{insight}\textit{
    In functional define-then-run reverse AD,
    we need to make use of explicit structural rules 
    and "mirror them", which we can do by first translating our language into combinators.
    This translation allows us to avoid the usual practice (e.g. \cite{wang2018demystifying}) of
     accumulating adjoints at run-time with~mutable state:
     instead, we detect all adjoints to accumulate at compile-time.
    }\vspace{-2pt}
\end{insight}
\noindent Put differently: we define AD on the syntactic category $\Syn$ with types $\ty$ as~objects and 
$(\alpha)\beta\eta$-equivalence classes of programs $\var:\ty\vdash\trm:\ty[2]$ as morphisms~$\ty\to\ty[2]$.

Yet the question remains: why should this translation for tuples be correct?
What is even less clear is how to address shortcoming (2). What should the spaces 
of tangents $\Dsyn{\ty\To\ty[2]}_2$ and adjoints $\Dsynrev{\ty\To\ty[2]}_2$ look like?
This is not something we are taught in Calculus 1.01.
Instead, we again employ category~theory:\vspace{-3pt}
\begin{insight}
    \textit{Follow where the categorical structure of the syntax leads you,
    as doing so produces principled definitions 
    that are easy to prove correct.}\vspace{-3pt}
\end{insight}
With the aim of categorical compositionality in mind, we note that our translations
compose according to a sort of ``syntactic chain-rule'', which says that
\vspace{-4pt}\\
\parbox{\linewidth}{\begin{align*}
    \Dsyn{\subst{\trm}{\sfor{\var}{\trm[2]}}} &\defeq
    (\subst{\Dsyn{\trm}_1}{\sfor{\var}{\Dsyn{\trm[2]}_1}}, \fun{\var[2]}\subst{\Dsyn{\trm}_2}{\sfor{\var}{\Dsyn{\trm[2]}_1}}(\Dsyn{\trm[2]}_2(\var[2])))\\[-4pt]
    \Dsynrev{\subst{\trm}{\sfor{\var}{\trm[2]}}} &\defeq
    (\subst{\Dsynrev{\trm}_1}{\sfor{\var}{\Dsynrev{\trm[2]}_1}},\fun{\var[2]}
    \Dsynrev{\trm[2]}_2(\subst{\Dsynrev{\trm}_2(\var[2])}{\sfor{\var}{\Dsynrev{\trm[2]}_1}})).
\end{align*}}\vspace{-4pt}
By the following trick, these equations are functoriality laws.
Given a Cartesian closed category $(\catC,\terminal,\times,\Rightarrow)$,
define categories $\Dcat{\catC}$ and $\Dcatrev{\catC}$ as having objects 
pairs $(A_1,A_2)$ of objects $A_1,A_2$ of $\catC$ and morphisms \vspace{-6pt}
\begin{align*}
\Dcat{\catC}((A_1,A_2),(B_1,B_2))&\defeq \catC(A_1,B_1)\times \catC(A_1,A_2\Rightarrow B_2)\\[-4pt]
\Dcatrev{\catC}((A_1,A_2),(B_1,B_2))&\defeq \catC(A_1,B_1)\times \catC(A_1,B_2\Rightarrow A_2).
\end{align*}\vspace{-16pt}\\
Both have identities $\id[(A_1,A_2)]\defeq (\id[A_1], \Lambda(\pi_2))$,
where we write $\Lambda$ for categorical currying and $\pi_2 $ for the second projection.
Composition in $\Dcat{\catC}$ and $\Dcatrev{\catC}
$, respectively, of 
$(A_1,A_2)\xto{(k_1,k_2)}(B_1,B_2)\xto{(l_1,l_2)} (C_1,C_2)$   are  \vspace{-8pt}
\begin{align*}
(k_1,k_2);(l_1,l_2)\defeq& (k_1;l_1, \fun{a_1:A_1}\fun{a_2:A_2}l_2(k_1(a_1))(k_2(a_1,a_2)))\\[-4pt]
(k_1,k_2);(l_1,l_2)\defeq& (k_1;l_1, \fun{a_1:A_1}\fun{c_2:C_2}k_2(a_1)( l_2(k_1(a_1),c_2))),
\end{align*}
where we work in the internal language of $\catC$.
Then, we have defined two functors:\vspace{-3pt}
\[
\Dsynsymbol:\Syn_1\to \Dcat{\Syn}\qquad\qquad\qquad\Dsynrevsymbol:\Syn_1\to \Dcatrev{\Syn},
\]\vspace{-15pt}\\
where we write $\Syn_1$ for the syntactic category of our restrictive first-order language,
and we write $\Syn$ for that of the full $\lambda$-calculus.
We would like to extend these to functors\vspace{-3pt}
\[
\Syn\to \Dcat{\Syn}\qquad\qquad\qquad\Syn\to \Dcatrev{\Syn}.
\]\vspace{-15pt}\\
$\Dcat{\catC}$ turns out to be a category with finite products, given by 
$(A_1,A_2)\times (B_1,B_2)=(A_1\times B_1, A_2\times B_2)$.
Thus, we can easily extend $\Dsynsymbol$ to apply to an extension of $\Syn_1$ with tuples
by extending the functor in the unique structure-preserving way.
However, $\Dcatrev{\Syn}$ does not have products and neither $\Dcat{\Syn}$ nor $\Dcatrev{\Syn}$ 
supports function types. (The reason turns out to be that 
not all functions are linear in the sense of respecting 
$\zero$ and $+$.)
Therefore, the categorical structure does not give us guidance on how to extend our translation 
to all of $\Syn$.
\vspace{-3pt}
\begin{insight}
\textit{Linear types can help. By using a more fine-grained type system, we can capture the linearity 
of the derivative.
As a result, we can phrase AD on our full language simply as the unique structure-preserving functor that extends the uncontroversial definitions given so far.}
\vspace{-3pt}\end{insight}
To implement this insight, we extend our $\lambda$-calculus to a language  $\LSyn$
with limited linear types (in \S\ref{sec:minimal-linear-language}): linear function types $\multimap$ and
a kind of multiplicative conjunction 
$!(-)\otimes (-)$, in the sense of the enriched effect calculus \cite{egger2009enriching}.
The algebraic effect giving rise to these linear types, in this instance, is that 
of the theory of commutative monoids. As we have seen, such monoids are intimately related to reverse AD.
Consequently, we demand that every $f$ with a linear function type $\ty\multimap \ty[2]$ is
indeed linear,
 in the sense that 
$f\, \zero = \zero$ and $f\, (\trm+\trm[2])=(f\,\trm )+ (f\, \trm[2])$.
For the categorically inclined reader: that is, we enrich $\LSyn$ over the category of commutative monoids.

Now, we can give more precise types to our derivatives, as we know they are linear functions:
for $\var:\ty\vdash \trm:\ty[2]$, we have $\var:\Dsyn{\ty}_1 \vdash \Dsyn{\trm}_2:\Dsyn{\ty}_2\multimap \Dsyn{\ty[2]}_2$
and $\var:\Dsynrev{\ty}_1\vdash \Dsynrev{\trm}_2:\Dsynrev{\ty[2]}_2\multimap \Dsynrev{\ty}_2$.
Therefore, given any model $\catL$ of our linear type theory,
we generalise our previous construction of the categories $\LDcat{\catL}$ and $\LDcatrev{\catL}$,
but now we work with linear functions in the second component.
Unlike before, both $\LDcat{\catL}$ and $\LDcatrev{\catL}$ are now Cartesian closed (by \S\ref{sec:self-dualization})!

Thus, we find the following corollary, by the universal property of $\Syn$.
This property states that any well-typed choice of interpretations $F(\op)$ of the primitive 
operations in a Cartesian closed category $\catC$
extends to a unique Cartesian closed functor $F:\Syn\To \catC$.
It gives a principled definition of AD and explains in what sense 
reverse AD is the ``mirror image'' of forward AD.
 \begin{corollary*}[Definition of AD, \S \ref{sec:combinator-macro}]Once we fix the interpretation of the 
    primitives operations $\op$ to their respective 
    derivatives and transposed derivatives,
    we obtain unique structure-preserving forward and reverse AD functors
$
\Dsynsymbol:\Syn\to{} \LDcat{\LSyn}$ and $
\Dsynrevsymbol:\Syn\to{}\LDcatrev{\LSyn}.
$
\end{corollary*}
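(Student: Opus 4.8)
The plan is to obtain the corollary as a direct instance of the universal property of $\Syn$ recalled immediately above. That property says that a Cartesian closed functor out of $\Syn$ into any Cartesian closed category $\catC$ is \emph{uniquely} determined by, and exists for, any well-typed choice of images for the generators, namely the ground objects $\reals^n$ and the primitive morphisms $\op$. Hence nothing is left to build by hand: it suffices (a) to know that the intended targets $\LDcat{\LSyn}$ and $\LDcatrev{\LSyn}$ are themselves Cartesian closed, and (b) to pin down well-typed images of the generators. Both existence and uniqueness of $\Dsynsymbol$ and $\Dsynrevsymbol$ then follow simultaneously.

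For (a) I would simply cite \S\ref{sec:self-dualization}, where $\LDcat{\LSyn}$ and $\LDcatrev{\LSyn}$ are proved Cartesian closed; I treat this as a black box, since it is the genuine content underlying the corollary. For (b) I would set, on objects, $\reals^n\mapsto(\reals^n,\reals^n)$ for both functors, matching $\Dsyn{\reals^n}$ and $\Dsynrev{\reals^n}$. On each generating morphism $\op\colon\reals^n\to\reals^m$ I would take the image whose first component is $\op$ and whose second component is assembled from the chosen derivative term $(D\op)$ in the forward case and from the transposed derivative $\transpose{(D\op)}$ in the reverse case, exactly matching the base clauses for $\Dsyn{\op(\trm)}$ and $\Dsynrev{\op(\trm)}$ from \S\ref{sec:key-ideas}.

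The step demanding the most care is verifying that these images really are morphisms of the target categories, i.e. that the second components admit the \emph{linear} function types $\reals^n\multimap\reals^m$ and $\reals^m\multimap\reals^n$ respectively, and so respect $\zero$ and $+$. This is precisely the role played by the passage to the linearly typed calculus $\LSyn$: because $(D\op)$ and $\transpose{(D\op)}$ denote a Jacobian and its transpose, they act linearly in the tangent, respectively cotangent, argument, so they are soundly typeable as linear maps and land in $\LDcat{\LSyn}$ and $\LDcatrev{\LSyn}$. I expect the main obstacle to be contained entirely in point (a) — the Cartesian closure of the self-dualized targets — which has been isolated in \S\ref{sec:self-dualization}; granting it, together with the routine well-typedness check of (b), the corollary is an immediate application of the universal property. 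By construction the resulting functors reproduce, on the primitive operations, exactly the base clauses of \S\ref{sec:key-ideas}, now carrying the sharper linear types $\multimap$, and they agree on the first-order fragment $\Syn_1$ with the hand-built transformations given there.
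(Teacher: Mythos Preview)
Your proposal is correct and follows essentially the same approach as the paper: invoke the Cartesian closure of $\LDcat{\LSyn}$ and $\LDcatrev{\LSyn}$ established in \S\ref{sec:self-dualization}, fix well-typed images of the generators $\reals^n$ and $\op$ (using the linear typing of $D\op$ and $\transpose{(D\op)}$), and apply the universal property of $\Syn$. One cosmetic point: in the linear target language the second component of the image of $\reals^n$ is the linear type $\creals^n$ rather than the Cartesian $\reals^n$, and the operations $\op$ are in general $k$-ary with domain $\reals^{n_1}\t*\cdots\t*\reals^{n_k}$; neither affects the argument.
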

\noindent In particular, the following definitions
are forced on us by the theory:\vspace{-2pt}
\begin{insight}
\emph{For reverse AD, an adjoint at function type $\ty\to\ty[2]$, 
needs to keep track of the incoming adjoints $v$ of type $\Dsynrev{\ty[2]}_2$
for each a primal $x$ of type $\Dsynrev{\ty}_1$ on which we call the function.
We store these pairs $(x,v)$ in the type $!\Dsynrev{\ty}_1\otimes \Dsynrev{\ty[2]}_2$
(which we will see is essentially a quotient of a list of
pairs of type $\Dsynrev{\ty}_1\t*\Dsynrev{\ty[2]}_2$).
Less surprisingly, for forward AD, a tangent at function type $\ty\to\ty[2]$
consists of a function
sending each argument primal of type $\Dsyn{\ty[1]}_1$ to the outgoing tangent of type $\Dsyn{\ty[2]}_2$.}\vspace{-4pt}
\vspace{-2pt}\end{insight}\vspace{-16pt}
\begin{align*}
    \Dsyn{\ty\To\ty[2]}&\defeq 
    (\Dsyn{\ty}_1\To (\Dsyn{\ty[2]}_1 \t* (\Dsyn{\ty}_2\multimap\Dsyn{\ty[2]}_2)), \Dsyn{\ty[1]}_1\To \Dsyn{\ty[2]}_2)\\[-2pt]
    \Dsynrev{\ty\To\ty[2]}&\defeq
    (\Dsynrev{\ty}_1\To (\Dsynrev{\ty[2]}_1 \t* (\Dsynrev{\ty[2]}_2\multimap\Dsynrev{\ty[1]}_2)), !\Dsynrev{\ty[1]}_1\otimes \Dsynrev{\ty[2]}_2)\vspace{-4pt}
\end{align*}
\indent With these definitions in place, we turn to the correctness of the source-code transformations.
To phrase correctness, we first need to construct a suitable denotational semantics with an uncontroversial 
notion of semantic differentiation.
A technical challenge arises, as the usual calculus setting of 
Euclidean spaces (or manifolds) and smooth functions 
cannot interpret higher-order functions.
To solve this problem, we work with a conservative extension of this standard calculus setting (see \S\ref{sec:semantics}):
the category $\Diff$ of diffeological spaces.
We model our types as diffeological spaces, and programs as smooth functions.
By keeping track of a commutative monoid structure on these spaces, we are also able to 
interpret the required linear types.
We write $\DiffMon$ for this ``linear'' category of commutative diffeological monoids and 
smooth monoid homomorphisms. 

By the universal properties of the syntax, we obtain canonical, structure-preserving functors
$\sem{-}:\LSyn\to\DiffMon$ and $\sem{-}:\Syn\to\Diff$ once we fix interpretations $\RR^n$ of $\reals^n$ 
and well-typed interpretations $\sem{\op}$ for each operation $\op$.
These functors define a semantics for our language.

Having constructed the semantics, we can turn to the correctness proof (of \S\ref{sec:glueing-correctness}).
Because calculus does not provide an unambiguous notion of derivative 
at function spaces, we cannot prove that the AD transformations
correctly implement mathematical derivatives by plain induction on the syntax.
Instead, we use a logical relations argument over the semantics, which 
we phrase categorically:\vspace{-3pt}
\begin{insight}
\emph{Once we show that the derivatives of primitive operations $\op$
are correctly implemented, correctness of derivatives of other programs
follows from a standard logical relations construction over the semantics
that relates a curve to its (co)tangent curve.
By the chain-rule, all programs respect the logical relations.
}\vspace{-12pt}
\end{insight}
To show correctness of forward AD, we construct a category $\Gl$ whose objects are triples $((X, (Y_1,Y_2)), P)$
of an object $X$ of $\Diff$, an object $(Y_1,Y_2)$ of $\Dcat{\DiffMon}$ and a predicate 
$P$ on $\Diff(\RR, X)\times \Dcat{\DiffMon}((\RR,\RR), (Y_1,Y_2))$.
It has morphisms $((X, (Y_1,Y_2)), P)\xto{(f, (g, h))}((X', (Y'_1,Y'_2)), P')$,
which are a pair of morphisms $X\xto{f}X'$ and $(Y_1,Y_2)\xto{(g,h)}(Y'_1,Y'_2)$
such that for any $(\gamma, (\delta_1,\delta_2))\in P$, we have that $(\gamma;f, (\delta_1,\delta_2);(g,h))\in P'$.
$\Gl$ is a standard category of logical relations, or subscone, and it is widely known to 
inherit the Cartesian closure of $\Diff\times \Dcat{\DiffMon}$ (see \S\S\ref{sssec:subsconing}).
It also comes equipped with a Cartesian closed 
functor $\Gl\xto{}\Diff\times \Dcat{\DiffMon}$.
Therefore, once we fix predicates $P^f_{\reals^n}$ on $(\sem{-}, \Dcat{\sem{-}})(\reals^n)$
and show that all operations $\op$ respect these predicates, it follows that our denotational
semantics lifts to give a unique structure-preserving functor $\Syn\xto{\semgl{-}}\Gl$,
such that the left diagram below commutes (by the universal property of $\Syn$).
\begin{figure}[!h]\vspace{-18pt}
\begin{tikzcd}
    \Syn \arrow[d, "\semgl{-}"'] \arrow[r, "\sPair{\id}{\Dsynsymbol}"] & \Syn\times \Dcat{\LSyn} \arrow[d, "\sem{-}\times \Dcat{\sem{-}}"] &  & \Syn \arrow[r, "\sPair{\id}{\Dsynrevsymbol}"] \arrow[d, "\semglrev{-}"'] & \Syn\times\Dcatrev{\LSyn} \arrow[d, "\sem{-}\times\Dcatrev{\sem{-}}"] \\
    \Gl \arrow[r]                                                      & \Diff\times\Dcat{\DiffMon}                                        &  & \GlRev \arrow[r]                                                         & \Diff\times\Dcatrev{\DiffMon}                                        
    \end{tikzcd}\vspace{-24pt}
\end{figure}

Consequently, we can work with
$
P^f_{\reals^n}\!\!\defeq\! \set{(f,(g,h))\!\mid g\!=\!f\textnormal{ and } h\!=\!D\! f},
$
where we write $Df(x)(v)$ for the multivariate calculus derivative of $f$ at a point $x$ 
evaluated at a tangent vector $v$.
By an application of the chain rule for differentiation,
we see that every $\op$ respects this predicate, as long as $\sem{D\op}=D\sem{\op}$.
The commuting of our diagram then virtually establishes the correctness of forward AD.
The only remaining step in the argument is to note that any tangent vector at $\sem{\ty}\cong\RR^N$,
for first-order $\ty$, can be 
represented by a curve $\RR\to \sem{\ty}$.
For reverse AD, the same construction works, if $\sem{\transpose{D\op}}=\transpose{D\sem{\op}}$, by replacing $\Dcat{-}$ with $\Dcatrev{-}$
and $\Dsynsymbol$ with $\Dsynrevsymbol$.
We can then choose $
P^r_{\reals^n}\defeq \set{(f,(g,h))\mid g=f\textnormal{ and } h = x\mapsto \transpose{(Df(x))}},
$ as the predicates for constructing $\semglrev{\reals^n}$,
where we write $\transpose{A}$ for the matrix transpose of $A$.
We obtain our main theorem, which
crucially holds even for $\trm$ that involve higher-order subprograms.
\begin{theorem*}[Correctness of AD, Thm. \ref{thm:AD-correctness}]
For any typed term  $\var:\ty\vdash \trm:\ty[2]$
in $\Syn$ between first-order types $\ty,\ty[2]$, we have that\vspace{3pt}\\
\phantom{..........................} $\sem{\Dsyn{\trm}_2}(x)=D\sem{\trm}(x)$
\quad and \quad
$\sem{\Dsynrev{\trm}_2}(x)=\transpose{D\sem{\trm}(x)}$.
\end{theorem*}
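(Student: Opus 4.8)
The plan is to prove both equations at once by a single categorical logical-relations (subscone) argument, the two cases differing only by swapping $\Dcat{-}$ for $\Dcatrev{-}$, $\Dsynsymbol$ for $\Dsynrevsymbol$, and $D$ for its transpose. First I would set up the glueing category $\Gl$ of \S\ref{sssec:subsconing}: objects are triples $((X,(Y_1,Y_2)),P)$ with $P$ a predicate on $\Diff(\RR,X)\times\Dcat{\DiffMon}((\RR,\RR),(Y_1,Y_2))$, and morphisms are pairs $(f,(g,h))$ of a $\Diff$-map and a $\Dcat{\DiffMon}$-map that preserve the predicates. The first key step is the standard subsconing lemma: $\Gl$ inherits the Cartesian closed structure of $\Diff\times\Dcat{\DiffMon}$, and the forgetful functor $\Gl\to\Diff\times\Dcat{\DiffMon}$ is Cartesian closed. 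For this I must already have available (via the self-dualization of \S\ref{sec:self-dualization}) that $\LDcat{\LSyn}$, and hence the semantic $\Dcat{\DiffMon}$, carries the Cartesian closed structure needed to interpret the function types $\ty\To\ty[2]$ together with the linear types $\multimap$ and $!(-)\otimes(-)$ occurring in the codomain of $\Dsynsymbol$.

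With the category in hand, I would fix the ground-type predicates $P^f_{\reals^n}\defeq\set{(f,(g,h))\mid g=f \text{ and } h=Df}$, relating a curve to the pair consisting of itself and its derivative. The crucial verification is that every primitive operation $\op$ respects these predicates: this is precisely an instance of the chain rule, and it consumes the standing hypothesis $\sem{D\op}=D\sem{\op}$ (respectively $\sem{\transpose{D\op}}=\transpose{D\sem{\op}}$ in the reverse case). Once the operations are checked, the universal property of $\Syn$ — that any well-typed assignment of the operations into a Cartesian closed category extends uniquely to a Cartesian closed functor out of $\Syn$ — delivers a unique structure-preserving functor $\semgl{-}\colon\Syn\to\Gl$ making the left square of the diagram commute. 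Commutation says concretely that for every $\var:\ty\vdash\trm:\ty[2]$ the pair $(\sem{\trm},(\sem{\Dsyn{\trm}_1},\sem{\Dsyn{\trm}_2}))$ is a $\Gl$-morphism, hence carries $P_{\sem{\ty}}$-related elements to $P_{\sem{\ty[2]}}$-related ones.

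The final step is to extract the pointwise statement at first-order endpoints. For first-order $\ty,\ty[2]$ we have $\sem{\ty}\cong\RR^N$ and $\sem{\ty[2]}\cong\RR^M$, so $\sem{\trm}$ is an ordinary smooth map $\RR^N\to\RR^M$, and, because $\semgl{-}$ preserves products, the predicate at these types reduces to the ground-type one. Instantiating the preserved relation at a curve $\gamma$ with $\gamma(0)=x$ and $D\gamma(0)=v$, the tangent component of the output is $\sem{\Dsyn{\trm}_2}(\gamma(t))(D\gamma(t))$, while membership in the output predicate forces this to be the derivative of the curve $t\mapsto\sem{\trm}(\gamma(t))$; by the chain rule the latter equals $D\sem{\trm}(\gamma(t))(D\gamma(t))$. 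Evaluating at $t=0$ gives $\sem{\Dsyn{\trm}_2}(x)(v)=D\sem{\trm}(x)(v)$, and since every tangent vector $v$ at $x$ is realised by such a curve (e.g. $\gamma(t)=x+tv$), the linear maps $\sem{\Dsyn{\trm}_2}(x)$ and $D\sem{\trm}(x)$ coincide. Replaying the identical argument in $\GlRev$ with the predicates $P^r_{\reals^n}$ yields $\sem{\Dsynrev{\trm}_2}(x)=\transpose{D\sem{\trm}(x)}$.

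I expect the main obstacle to lie in the subscone construction with linear types, rather than in the base-case chain-rule computations. Subsconing over a plain Cartesian closed category is routine, but here the predicate lives over the mixed product $\Diff\times\Dcat{\DiffMon}$ and the logical relation must be compatible with the commutative-monoid enrichment: it has to be preserved by the linear-type formers $\multimap$ and $!(-)\otimes(-)$ as well as by $\To$, which is exactly what the self-dualization machinery of \S\ref{sec:self-dualization} is engineered to supply. A secondary subtlety worth flagging is that the endpoints must be first-order for $D\sem{\trm}$ to denote the usual calculus derivative, even though $\trm$ may freely use higher-order subprograms internally; the logical relation is precisely the device that lets the higher-order interior pass through while the first-order statement is recovered at the boundary.
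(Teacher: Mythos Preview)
Your outline is essentially the paper's own argument, and it would go through. Two small corrections are worth making.

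First, the reverse-mode extraction is \emph{not} literally identical to the forward one. In the forward case, a curve $\gamma$ with $D\gamma(0)(1)=v$ feeds $v$ directly into $\sem{\Dsyn{\trm}_2}(x)$ and the relation hands you back $D\sem{\trm}(x)(v)$. In the reverse case, composing with $\gamma$ post-composes the cotangent map with $\transpose{D\gamma(0)}:\cRR^N\to\cRR$, so what the relation delivers is $\innerprod{v}{\sem{\Dsynrev{\trm}_2}(x)(w)}=\innerprod{v}{\transpose{D\sem{\trm}(x)}(w)}$ for each $w$. You then need to vary $v$ over a spanning set (the paper uses the standard basis $e_i$) to conclude equality of the vectors themselves. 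This is routine but should not be waved away as ``identical.''

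Second, your anticipated obstacle is misplaced. The subscone does \emph{not} need to interact with $\multimap$ or $!(-)\otimes(-)$ at all: those connectives are consumed entirely in \S\ref{sec:self-dualization} to establish that $\Sigma_{\Diff}\DiffMon$ and $\Sigma_{\Diff}\DiffMon^{op}$ are already Cartesian closed. Once that is in hand, $\Diff\times\Sigma_{\Diff}\DiffMon$ is an ordinary Cartesian closed category, the hom-functor $F=\catC((\RR,(\RR,\cRR)),-)$ preserves products, and the subscone construction is the textbook one for CCCs with no linear ingredients. The self-dualization machinery is input to the subscone, not part of it.
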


Next, we address the practicality of our method (in \S\ref{sec:practical-short}).
The code transformations we employ are not too daunting to implement.
It is well-known how to mechanically translate $\lambda$-calculus and functional languages
into a (categorical) combinatory form \cite{curien1986categorical}.
However, the implementation of the required linear types presents a challenge.
Indeed, types like $!(-)\otimes (-)$ and $(-)\multimap (-)$  are absent from
 languages such as Haskell and O'Caml.
Luckily, in this instance, we can implement them using abstract data types 
by using a (basic) module system:\vspace{-2pt}
\begin{insight}\emph{
Under the hood, $!\ty\otimes \ty[2]$ can consist of a list of values of 
type $\ty\t*\ty[2]$.
Its API ensures that the list order and the difference between
$\PlusList{xs}{\PlusList{[(\trm, \trm[2]), (\trm, \trm[2]')]\\}{ys}}$ and
$\PlusList{xs}{\PlusList{[(\trm, \trm[2]+\trm[2]')]}{ys}}$
cannot be observed: as such, 
it is a quotient type.
Meanwhile, $\ty\multimap \ty[2]$  
can be implemented as a standard function 
type $\ty\To\ty[2]$ with a limited API that enforces that we can 
only ever construct linear functions: as such, it is a subtype.}\vspace{-2pt}
\end{insight}
We phrase the correctness proof of the AD transformations in~elementary terms,
such that it holds in the applied setting where we use abstract types to implement linear types.
We show that our correctness results are meaningful,~as they make use of a denotational semantics 
that is adequate with respect~to the standard operational semantics.
Finally, to stress the applicability of our method, we show that it extends to 
higher-order (primitive) operations, such as $\tMap$.
\sqsection{$\lambda$-Calculus as a Source Language for AD}\label{sec:language}
As a source language for our AD translations,
we can begin with a standard, simply typed $\lambda$-calculus which has 
ground types $\reals^n$ of statically sized arrays of $n$ real numbers, for all $n\in\NN$,
and sets $\Op_{n_1,...,n_k}^m$ of primitive operations $\op$
for all $k, m, n_1,\ldots, n_k\in\NN$. These operations will be interpreted
as smooth functions $(\RR^{n_1}\times\ldots\times\RR^{n_k})\To\RR^m$.
Examples to keep in mind for  $\op$ include
\begin{itemize}
  \item constants $\cnst\in \Op_{}^n$ for each $c\in \RR^n$, for which 
   we slightly abuse notation and write $\cnst(\tUnit)$ as $\cnst$;
    \item elementwise addition and product $(+),(*)\!\in\!\Op_{n,n}^n$
    and matrix-vector product $(\star)\!\in\!\Op_{n\cdot m, m}^n$;
    \item operations for summing all the elements in an array: $\tSum\in\Op_{n}^1$;
    \item some non-linear functions like the sigmoid function $\sigmoid\in \Op_{1}^1$.
\end{itemize}
We intentionally present operations in a schematic way, as primitive operations tend to form a collection that is added to in a 
by-need fashion, as an AD library develops.
The precise operations needed will depend on the applications, but,
in statistics and machine learning applications, $\Op$
tends to include
a mix of multi-dimensional linear algebra operations and mostly one-dimensional non-linear 
functions.
A typical library for use in machine learning would work with multi-dimensional arrays 
(sometimes called ``tensors'').
We focus here on one-dimensional arrays as the issues of how precisely to represent the arrays
are orthogonal to the concerns of our development.

The types $\ty,\ty[2],\ty[3]$ and terms $\trm,\trm[2],\trm[3]$ of our AD source language are as follows:

\noindent\begin{syntax}
    \ty, \ty[2], \ty[3] & \gdefinedby & & \syncat{types}                          \\
    &\gor& \reals^n                      & \synname{real arrays}\\
    &\gor\quad\, & \Unit & \synname{nullary product}\hspace{-10pt}\\
    &&&\\[-6pt]
    \trm, \trm[2], \trm[3] & \gdefinedby & & \syncat{terms}    \\
    &    & \var                          & \synname{variable} \\
    &\gor& \op(\trm)
        & \synname{operations}                      \\
        &\gor& \tUnit\ \gor \tPair{\trm}{\trm[2]} & \synname{product tuples}\hspace{-10pt}\\
  \end{syntax}%
  ~
  \begin{syntax}
    &\gor\quad\,& \ty_1\t* \ty_2 & \synname{binary product} \\
  &\gor& \ty \To \ty[2]              & \synname{function}      \\
  &&&\\
  &&&\\[-6pt]
& \gor & \tFst{\trm}\ \gor\tSnd{\trm}\hspace{-10pt} \;& \synname{product projections}\\
&\gor& \fun \var    \trm &\synname{function abstraction}\\
&\gor &\trm\,\trm[2] & \synname{function application}
\end{syntax}\\ 

The typing rules are in Fig.~\ref{fig:types1}, where 
we write $\Domain{\op}\defeq \reals^{n_1}\t*\ldots\t* \reals^{n_k}$ 
for an operation $\op\in \Op_{n_1,...,n_k}^m$.
We employ the usual syntactic sugar $\letin{\var}{\trm}{\trm[2]}\defeq (\fun{\var}{\trm[2]})\,\trm$
and  write $\reals$ for $\reals^1$.
\begin{figure}[b]\vspace{-8pt}
  \framebox{\resizebox{\linewidth}{!}{\begin{minipage}{1.05\linewidth}\noindent\input{type-system}\end{minipage}}}
    \vspace{-4pt}
    \caption{Typing rules for the AD source language.\label{fig:types1}}\vspace{-12pt}\;
  \end{figure}
As Fig. \ref{fig:beta-eta} displays, we consider the terms of our language up to the standard $\beta\eta$-theory.
We could consider further equations for our operations, but we do not as we will not need~them.

  \begin{figure}[t]\vspace{-2pt}
    \framebox{\resizebox{\linewidth}{!}{\begin{minipage}{1.05\linewidth}\noindent \input{beta-eta}
  \end{minipage}}}\vspace{-6pt}
  \caption{Standard $\beta\eta$-laws for products and functions. 
  We write $\freeeq{\var_1,\ldots,\var_n}$ to indicate that the variables $\var_1,\ldots,\var_n$ need to be fresh in the left hand side.
  Equations hold on pairs of terms of the same type.
  As usual, we only distinguish terms up to $\alpha$-renaming of bound variables.\label{fig:beta-eta}\vspace{-12pt}\;
}
  \end{figure}



This standard $\lambda$-calculus is widely known to be equivalent to the free 
Cartesian closed category $\Syn$ generated by the objects $\reals^n$ and the morphisms $\op$.
$\Syn$ effectively represents programs as (categorical) combinators, also known as ``point-free style''
in the functional programming community.
Indeed, there are well-studied mechanical translations from the $\lambda$-calculus 
to the free Cartesian closed category (and back) \cite{lambek1988introduction,curien1985typed}.
The translation from $\Syn$ to $\lambda$-calculus is self-evident, while the translation in the 
opposite direction is straightforward after we first convert our $\lambda$-terms to de Bruijn
indexed form.
Concretely,
\begin{itemize}[nosep]\itemsep0em 
  \item  $\Syn$ has types $\ty,\ty[2],\ty[3]$ objects;
  \item  $\Syn$ has morphisms $\trm\in\Syn(\ty,\ty[2])$ which are in 1-1 correspendence with terms $\var:\ty\vdash \trm:\ty[2]$ up to 
  $\beta\eta$-equivalence (which includes $\alpha$-equivalence); explicitly, they can be represented by 
  \begin{itemize}\itemsep0em 
    \item identities: $\id[\ty]\in \Syn(\ty,\ty)$ (corresponding to variables up to $\alpha$-equivalence);
    \item composition: $\trm;\trm[2]\in \Syn(\ty,\ty[3])$ for any $\trm\in \Syn(\ty,\ty[2])$ and $\trm[2]\in \Syn(\ty[2],\ty[3])$
    (corresponding to the capture avoiding substitution $\subst{\trm[2]}{\sfor{\var[2]}{\trm}}$ 
    if we represent $\var:\ty\vdash \trm:\ty[2]$ and $\var[2]:\ty[2]\vdash \trm[2]:\ty[3]$);
    \item terminal morphisms: $\tUnit_{\ty}\in \Syn(\ty,\Unit)$;
    \item product pairing: $\tPair{\trm}{\trm[2]}\in \Syn(\ty,\ty[2]\t* \ty[3])$  for any $\trm\in \Syn(\ty,\ty[2])$ and $\trm[2]\in \Syn(\ty,\ty[3])$;
    \item product projections: $\tFst_{\ty,\ty[2]}\in \Syn(\ty\t*\ty[2],\ty)$ and $\tSnd_{\ty,\ty[2]}\in\Syn(\ty\t*\ty[2],\ty[2])$;
    \item function evaluation: $\ev_{\ty,\ty[2]}\in \Syn((\ty\To\ty[2])\t*\ty,\ty[2])$;
    \item currying: $\Lambda_{\ty,\ty[2],\ty[3]}(\trm)\in \Syn(\ty,\ty[2]\To\ty[3])$ for any $\trm\in\Syn(\ty\t*\ty[2],\ty[3])$;
    \item operations: $\op\in \Syn(\reals^{n_1}\t*\ldots\t*\reals^{n_k},\reals^m)$ for any
    $\op\in\Op^m_{n_1,\ldots,n_k}$.
  \end{itemize}
  \item all subject to the usual equations of a Cartesian closed category \cite{lambek1988introduction}.\vspace{4pt}
\end{itemize}
$\Unit$ and $\t*$ give finite products in $\Syn$, while $\To$ gives categorical exponentials.

$\Syn$ has the following universal property: for any Cartesian closed category $(\catC,\terminal,\times,\Rightarrow)$,
we obtain a unique Cartesian closed functor $F:\Syn\to\catC$, once we choose objects $F{\reals^n}$ of $\catC$ as well as, for each  $\op\in\Op^m_{n_1,\ldots,n_k}$, make
well-typed choices of $\catC$-morphisms\;\;
$
  F{\op}:(F{\reals^{n_1}}\times \ldots\times F{\reals^{n_k}})\To F{\reals^m}.
$
\vspace{-0pt}\sqsection{Linear $\lambda$-Calculus as an Idealised AD Target Language}\label{sec:minimal-linear-language} 
\vspace{-1pt}As a target language for our AD source code transformations, 
we consider a language that extends the language of \S  \ref{sec:language}
with limited linear types.
We could opt to work with a full linear logic as in \cite{benton1994mixed} or 
\cite{barber1996dual}.
Instead, however, we will only include the bare minimum of
linear type formers that we actually need to phrase the AD transformations.
The resulting language is closely related to, but more minimal than, the Enriched Effect Calculus 
of \cite{egger2009enriching}.
We limit our language in this way because we want to stress that the resulting 
code transformations can easily be implemented in existing functional languages 
such as Haskell or O'Caml.
As we discuss in \S  \ref{sec:practical-short}, the idea will be to make use of a module system
to implement the required linear types as abstract data types.

In our idealised target language, we consider \emph{linear types} (aka computation types)
$\cty$, $\cty[2]$, $\cty[3]$,
in addition to the \emph{Cartesian types} (aka value types) $\ty$, $\ty[2]$, $\ty[3]$ that
we have considered so far.
We think of Cartesian types as denoting spaces and linear types as 
denoting spaces equipped with an algebraic structure.
As we are interested in studying differentiation, the relevant space structure in this instance 
is a geometric structure that suffices to define differentiability.
Meanwhile, the relevant algebraic structure on linear types turns out to be 
that of a commutative monoid, as this algebraic structure is needed to 
phrase automatic differentiation algorithms.
Indeed, we will use the linear types to denote spaces of (co)tangent vectors
to the spaces of primals denoted by Cartesian types.
These spaces of (co)tangents form a commutative monoid under addition.

Concretely, we extend the types and terms of our language as follows:\\
 \begin{syntax}
    \cty, \cty[2], \cty[3] & \gdefinedby & &  \syncat{linear types}\\     
  &\gor & \creals^n & \synname{real array}\\
  & \gor & \lUnit & \synname{unit type}\vspace{6pt}\\
  \ty, \ty[2], \ty[3] & \gdefinedby & & \syncat{Cartesian types}               \\
  &\gor& \ldots                      & \synname{as in \S  \ref{sec:language}}\\
\end{syntax}
~
\begin{syntax}
  & \gor & \cty \t* \cty[2] & \synname{binary product}\\
  & \gor & \ty \To \cty[2] & \synname{function}\\
  &\gor &!\ty \otimes \cty[2]& \synname{tensor product}\vspace{6pt}\\
  &\gor\quad\, & \cty\multimap \cty[2] & \synname{linear function}\\
  &&&\\
\end{syntax}\\
\begin{syntax}
  \trm, \trm[2], \trm[3] & \gdefinedby  & \syncat{terms}\hspace{-4pt}             \\
&\gor& \ldots                      & \synname{as in \S  \ref{sec:language}}\\
&\gor & \lop(\trm;\trm[2])\hspace{-4pt} & \synname{linear operation}\hspace{-10pt}\\
\end{syntax}
~
\begin{syntax}
  &\gor\;\; & !\trm\otimes \trm[2] & \synname{tensor product}\\
  &\gor & \lfun{\var}{\trm}\,\gor  \lapp\trm{\trm[2]}\hspace{-5pt}  & \synname{abstraction/application}\\
  &\gor & \zero\,\gor \trm+\trm[2] & \synname{monoid structure.}
\end{syntax}\\
We work with linear operations $\lop\in\LOp_{n_1,...,n_k;
n'_1,\ldots,n'_l}^m$,
which are intended to represent functions which are linear (in the sense of 
respecting $\zero$ and $+$) in the last $l$ 
arguments but not in the first $k$.
We 
 write $\Domain{\lop}\defeq \reals^{n_1}\t*\ldots\t* \reals^{n_k}$
and $\LDomain{\lop}\defeq \reals^{n'_1}\t*\ldots\t* \reals^{n'_l}$
for $\lop\in\LOp_{n_1,...,n_k;
n'_1,\ldots,n'_l}^m$.
These operations can include e.g. dense and sparse matrix-vector multiplications.
Their purpose is to serve as primitives to
implement derivatives~$D\op(\var;\var[2])$ and
$ (D\op)^t(\var;\var[2])$ 
of the  
operations $\op$ from the source language as terms that are linear in $\var[2]$.

In addition to the judgement $\Ginf \trm\ty$, which we encountered in \S  
\ref{sec:language}, we now consider an additional judgement 
$\Ginf[;\var:\cty]\trm{\cty[2]}$.
While we think of the former as denoting a (structure-preserving) function 
between spaces, we think of the latter as a (structure-preserving) function 
from the space which $\Gamma$ denotes to the space of (structure-preserving)
monoid homomorphisms from the denotation of $\cty$ to that of $\cty[2]$.
In this instance, ``structure-preserving'' will mean differentiable.

Fig. \ref{fig:minimal-linear-types} displays the typing rules of our language.
\begin{figure}[t]
\framebox{\resizebox{\linewidth}{!}{\begin{minipage}{1.06\linewidth}\noindent\input{minimal-linear-type-system}\end{minipage}}}
  \vspace{-3pt}
  \caption{Typing rules for the idealised AD target language with linear types.\label{fig:minimal-linear-types}}
\vspace{-6pt}
\end{figure}
We consider the terms of this language up to the $\beta\eta+$-equational theory 
of Fig. \ref{fig:minimal-linear-beta-eta}.
It includes $\beta\eta$-rules as well as commutative monoid and homomorphism laws.
\begin{figure}[b]\vspace{-4pt}
    \framebox{\resizebox{\linewidth}{!}{\begin{minipage}{1.01\linewidth}\vspace{-8pt}\input{minimal-linear-beta-eta}
  \end{minipage}}}\vspace{-3pt}
  \caption{Equational rules for the idealised, linear AD language, which we use on top of the 
  rules of Fig. \ref{fig:beta-eta}. 
  In addition to standard $\beta\eta$-rules for $!(-)\otimes(-)$- and $\multimap$-types,
  we add rules making $(\zero,+)$ into a commutative monoid on the terms of 
  each linear type as well as rules which say that terms of linear types are homomorphisms in their linear variable.
  Equations hold on pairs of terms of the same type.
  \label{fig:minimal-linear-beta-eta}
  }
\vspace{-30pt}  
\end{figure}

\section{Semantics of the Source and Target Languages}\label{sec:semantics}

\sqsubsection{Preliminaries}
\sqsubsubsection{Category theory}
We assume familiarity with categories, functors, natural transformations,
and their theory of (co)limits and adjunctions.
We write:
\begin{itemize}
\item unary, binary, and $I$-ary products
as $\terminal$, $X_1\times X_2$, and $\prod_{i\in I}X_i$, writing
$\projection_i$ for the projections and
$()$, $\sPair{x_1}{x_2}$, and $\seq[i\in I]{x_i}$ for the tupling maps;
\item unary, binary, and $I$-ary coproducts
as $\initial$, $X_1 + X_2$, and $\sum_{i\in I}X_i$, writing
$\injection_i$ for the injections and
$[]$, $[{x_1},{x_2}]$, and $\coseq[i\in I]{x_i}$ for the cotupling
maps;
\item exponentials as $Y\Rightarrow X$, writing $\Lambda$ and $\ev$ for currying
 and evaluation.
\end{itemize}
\vspace{-14pt}
\sqsubsubsection{Monoids}
We assume familiarity with the category $\CMon$ of commutative monoids $X=(|X|,0_X,+_X)$, such as $\cRR^n\defeq (\RR^n, 0, +)$,
their cartesian product $X\times Y$, tensor product $X\otimes Y$, and the free monoid $!S$ on a set $S$ (write $\delta$ for the inclusion $S\hookrightarrow |!S|$).
We will sometimes write $\sum_{i=1}^nx_i$ for $((x_1+x_2)+\ldots)\ldots +x_n$.

Recall that a category $\catC$ is called $\CMon$-enriched if
we have a commutative monoid structure on each homset $\catC(C,C')$ 
and function composition gives monoid homomorphisms $\catC(C,C')\otimes \catC(C',C'')\to 
\catC(C,C'')$.
Finite products in a category $\catC$ are well-known to be biproducts (i.e. simultaneously products and coproducts) if and only if $\catC$ 
is $\CMon$-enriched (see e.g. \cite{fiore2007differential}):
 define $[]\defeq 0$ and $[f,g]\defeq \pi_1;f + \pi_2;g$
and, conversely, $0\defeq []$ and $f + g\defeq \sPair{\id}{\id};[f,g]$.

\sqsubsection{Abstract Semantics}
The language of \S  \ref{sec:language} has a 
canonical interpretation in any Cartesian closed category $(\catC,\terminal,\times,\Rightarrow~)$,
once we fix $\catC$-objects  $\sem{\reals^n}$ to interpret $\reals^n$ and 
 $\catC$-morphisms
$\sem{\op}\in\catC(\sem{\Domain{\op}},\sem{\reals^m})$
to interpret $\op\!\in\! \Op_{n_1,...,n_k}^m$\!.
We interpret types $\ty$ and contexts $\Gamma$ as $\catC$-objects $\sem{\ty}$ and~$\sem{\Gamma}$:
$\sem{\var_1:\ty_1,\ldots,\var_n:\ty_n}\defeq \sem{\ty_1}\times\ldots\times \sem{\ty_n}$\qquad
$\sem{\Unit}\defeq \terminal$\qquad
$\sem{\ty\t*\ty[2]}\defeq \sem{\ty}\times \sem{\ty[2]}$\qquad
 $\sem{\ty\To\ty[2]}\defeq \sem{\ty}\Rightarrow \sem{\ty[2]}$  
.\\
We interpret terms $\Ginf\trm\ty$ as morphisms $\sem{\trm}$ in $\catC(\sem{\Gamma},\sem{\ty})$:\\
 $\sem{\var_1:\ty_1,\ldots,\var_n:\ty_n\vdash \var_k:\ty_k}\defeq \pi_k$\qquad\quad
 $\sem{\tUnit}\!\defeq\!()$\qquad\quad
 $\sem{\tPair\trm{\trm[2]}}\!\defeq\sPair{\sem{\trm}}{\sem{\trm[2]}}$\\
$\sem{\tFst}\defeq\pi_1\quad\!\sem{\tSnd}\defeq \pi_2$ \qquad \quad
$\sem{\fun{\var}\trm}\defeq \Lambda(\sem{\trm})$\qquad\quad
$\sem{\trm\,\trm[2]}\defeq \sPair{\sem{\trm}}{\sem{\trm[2]}};\ev$.\\
This is an instance of the universal property of $\Syn$ mentioned in \S  \ref{sec:language}.

We discuss how to extend $\sem{-}$ to apply to the full target language 
of \S  \ref{sec:minimal-linear-language}.
Suppose that $\catL:\catC^{op}\to \Cat$ is a locally indexed category (see e.g. \cite[\S\S\S 9.3.4]{levy2012call}),
i.e. a (strict) contravariant functor from $\catC$ to the category $\Cat$ of 
categories, such that $\ob\catL(C)=\ob\catL(C')$ and $\catL(f)(L)=L$ for any object $L$ of $\ob\catL(C)$ 
and any $f:C'\to C$ in $\catC$.
We say that $\catL$ is \emph{biadditive} if each category $\catL(C)$ has (chosen) finite biproducts
 $(\terminal,\times)$
and $\catL(f)$ preserves them, for any  $f:C'\to C$ in $\catC$, 
in the sense that $\catL(f)(\terminal)=\terminal$ and $\catL(f)(L\times L')=
\catL(f)(L)\times\catL(f)(L')$.
We say that it \emph{supports $!(-)\otimes(-)$-types and 
$\Rightarrow$-types}, if $\catL(\pi_1)$ has a left adjoint $!C'\otimes_C -$ and a right 
adjoint functor
$C'\Rightarrow_C -$,  for each product projection $\pi_1:C\times C'\to C$
in $\catC$, satisfying a Beck-Chevalley condition:
$!C'\otimes_C L=!C'\otimes_{C''} L$ and 
$C'\!\Rightarrow_C \! L=C'\! \Rightarrow_{C''}\! L$
for any $C,C''\in\ob \catC$.
We simply write $!C'\otimes L$ and $C'\Rightarrow L$.
Let us write $\Phi$ and $\Psi$ for the natural isomorphisms
$\catL(C)(!C'\otimes L, L')\xto{\cong} \catL(C\times C')(L,L')$ 
and $\catL(C\times C)(L,L')\xto{\cong}\catL(C)(L, C'\Rightarrow L')$.
We say that $\catL$ \emph{supports Cartesian $\multimap$-types} if
the functor $\catC^{op}\to \Set$;
$C\mapsto \catL(C)(L,L')$ is representable for any objects $L,L'$
of $\catL$.
That is,
we have objects $L\multimap L'$ of $\catC$ with
isomorphisms $ \lLambda:\catL(C)(L,L') \xto{\cong}\catC(C,L\multimap L')$, 
natural in $C$.
We call an $\catL$ satisfying all these conditions a \emph{categorical model} of 
the language of \S  \ref{sec:minimal-linear-language}.
In particular, any biadditive model of intuitionistic linear logic \cite{mellies2009categorical,fiore2007differential}
is such a categorical model.

If we choose $\catL$-objects $\sem{\creals^n}$ to interpret $\creals^n$
and compatible $\catL$-morphisms
$\sem{\lop}$ in $\catL(\sem{\Domain{\lop}})(\sem{\LDomain{\lop}}, \sem{\creals^k})$
for each $\LOp_{n_1,...,n_k;n'_1,\ldots, n'_l}^m$,
then
we can interpret linear types $\cty$ as 
objects $\sem{\cty}$ of $\catL$:\vspace{-2pt}
\begin{align*}
   & \sem{\lUnit}\defeq \terminal \quad \sem{\cty\t*\cty[2]}\defeq \sem{\cty}\times \sem{\cty[2]}
   \quad \sem{\ty\To\cty[2]}\defeq \sem{\ty}\Rightarrow \sem{\cty[2]}\quad 
   \sem{!\ty\otimes\cty[2]}\defeq !\sem{\ty}\otimes \sem{\cty[2]}.\\[-20pt]
    \end{align*}
We can interpret $\cty\multimap \cty[2]$ as the 
$\catC$-object $\sem{\cty\multimap \cty[2]}\defeq \sem{\cty}\multimap \sem{\cty[2]}$.
Finally, we can interpret terms $\Ginf\trm\ty$ as morphisms 
$\sem{\trm}$ in $\catC(\sem{\Gamma},\sem{\ty})$ and terms $\Ginf[;\var:\cty]\trm{\cty[2]}$
as $\sem{\trm}$ in $\catL(\sem{\Gamma})(\sem{\cty},\sem{\cty[2]})$:
\\
\resizebox{\linewidth}{!}{\parbox{\linewidth}{
\begin{align*}
&\sem{\Ginf[;\var:\cty]\var\cty}\defeq \id[\sem{\cty}]
\quad \sem{\tUnit}\defeq()\quad \sem{\tPair\trm{\trm[2]}}\defeq\sPair{\sem{\trm}}{\sem{\trm[2]}}\quad
\sem{\tFst}\defeq\pi_1\quad\sem{\tSnd}\defeq \pi_2 \\
&\sem{\fun{\var}\trm}\defeq \Psi(\sem{\trm}) \quad 
\sem{\trm\,\trm[2]}\defeq \catL(\sPair{\id}{\sem{\trm[2]}})(\inv\Psi(\sem{\trm})) \\
&\sem{!\trm\otimes \trm[2]}\defeq \catL(\sPair{\id}{\sem{\trm}})(\Phi(\id) ); (!\sem{\ty[2]}\otimes\sem{\trm[2]} )
\quad\sem{\tensMatch{\trm}{!\var[2]}{\var}{\trm[2]}}\defeq \sem{\trm};\inv\Phi(\sem{\trm[2]}) \\ & 
\sem{\lfun{\var}\trm}\defeq \lLambda({\sem{\trm}}) \quad 
\sem{\lapp{\trm}{\trm[2]}}\defeq \inv{\lLambda}(\sem{\trm});\sem{\trm[2]}\quad 
\sem{\zero}\defeq []\quad \sem{\trm+\trm[2]}\defeq \sPair{\id}{\id};[\sem{\trm},\sem{\trm[2]}].
\end{align*}}}\vspace{-2pt}\\
Observe that we interpret $\zero$ and $+$ using the biproduct structure of $\catL$.

\begin{proposition}
The interpretation $\sem{-}$ of the language of 
\S  \ref{sec:minimal-linear-language} in 
categorical models
is both sound and complete 
with respect to the $\beta\eta+$-equational theory: $\trm\bepeq \trm[2]$ iff 
$\sem{\trm}=\sem{\trm[2]}$ in each such~model.
\end{proposition}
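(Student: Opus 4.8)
The plan is to prove soundness and completeness separately, following the standard recipe for a categorical type theory. For soundness, I would argue by induction on the derivation of $\trm\bepeq\trm[2]$. Reflexivity, symmetry and transitivity are immediate since equality of $\catL$- and $\catC$-morphisms is an equivalence relation, and the congruence rules for all term formers hold because $\sem{-}$ is defined by structural recursion and is therefore compositional: if immediate subterms receive equal interpretations, so does the compound term. What remains is to check that each \emph{axiom} of Fig.~\ref{fig:minimal-linear-beta-eta} is validated in an arbitrary categorical model. The $\beta\eta$-laws for $!(-)\otimes(-)$ follow from $\Phi$ being a bijection natural in the base object with inverse $\inv\Phi$, together with the Beck--Chevalley condition; the $\beta\eta$-laws for $\multimap$ from $\lLambda$ and $\inv\lLambda$ being mutually inverse and natural; and the $\beta\eta$-laws for $\To$-types from $\Psi$. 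Naturality is precisely what matches the reindexing (substitution) appearing on the right-hand sides of the $\beta$-rules. The commutative-monoid laws hold because, by $\CMon$-enrichment, each homset of $\catL(\sem{\Gamma})$ carries a commutative monoid with unit $[]$ and addition $\sPair{\id}{\id};[-,-]$, which is exactly how $\sem{\zero}$ and $\sem{\trm+\trm[2]}$ are defined. Finally, the two homomorphism laws $\subst{\trm}{\sfor{\var}{\zero}}=\zero$ and $\subst{\trm}{\sfor{\var}{\trm[2]+\trm[3]}}=\subst{\trm}{\sfor{\var}{\trm[2]}}+\subst{\trm}{\sfor{\var}{\trm[3]}}$ are the statement that precomposition in the biproduct-enriched fibre is additive, again part of $\CMon$-enrichment.

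For completeness I would build a syntactic (term) model and observe that interpretation into it is essentially the identity on $\beta\eta+$-classes. Concretely, take the base category to be $\Syn$ and define a locally indexed category $\catL^{\Syn}:\opposite{\Syn}\to\Cat$ whose fibre $\catL^{\Syn}(\ty)$ has the linear types as objects and, as morphisms $\cty\to\cty[2]$, the $\beta\eta+$-equivalence classes of terms $\var[4]:\ty\,;\,\var:\cty\vdash\trm:\cty[2]$, with reindexing along $f:\ty[2]\to\ty$ given by substitution. I would then verify that $\catL^{\Syn}$ is a categorical model in the precise sense defined above: the fibrewise biproducts are $\lUnit$ and $\t*$, with structure maps read off from $\zero,+,\tPair{-}{-},\tFst,\tSnd$ and biproduct equations supplied by the monoid and $\beta\eta$ laws; the left and right adjoints presenting $!\ty[2]\otimes(-)$ and $\ty[2]\Rightarrow(-)$ are witnessed by natural isomorphisms $\Phi,\Psi$ whose components are the $\beta\eta$-rules for $!(-)\otimes(-)$ and $\To$; and the representability giving $\multimap$ is witnessed by $\lLambda$, whose inverse is $\multimap$-application, mutual inverseness coming from the $\multimap$-$\beta\eta$ rules. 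A Beck--Chevalley condition holds on the nose because reindexing is literally substitution.

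The crux is then a coherence lemma: under the interpretation $\sem{-}$ into this syntactic model, with $\sem{\reals^n}=\reals^n$, $\sem{\creals^n}=\creals^n$ and $\sem{\op},\sem{\lop}$ taken to be $\op,\lop$ themselves, the interpretation of every term is its own $\beta\eta+$-class, i.e. $\sem{\trm}=[\trm]$. This I would prove by a routine induction on the typing derivation, in which each clause of the definition of $\sem{-}$ unfolds to exactly the syntactic construction used to equip $\catL^{\Syn}$ with the corresponding structure, up to the $\beta\eta+$-laws. Granting the lemma, if $\sem{\trm}=\sem{\trm[2]}$ in \emph{every} categorical model then in particular it holds in $\catL^{\Syn}$, whence $[\trm]=[\trm[2]]$, that is $\trm\bepeq\trm[2]$, which is completeness.

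The main obstacle I anticipate is verifying that $\catL^{\Syn}$ genuinely meets all the structural requirements of a categorical model. In particular, I must confirm that the putative adjoints really are adjoints, i.e. that the syntactic $\Phi,\Psi$ are bijections natural in the base object (and that $\lLambda$ is a natural representing isomorphism), and that the fibre categories carry honest \emph{biproducts} rather than mere products. Each of these reduces to a bookkeeping check that a specific $\beta$- or $\eta$-rule, or a specific monoid or homomorphism law, does the required job; the delicate points are the naturality and Beck--Chevalley bookkeeping for the two adjoints and confirming that the homomorphism laws are exactly what promote the fibrewise products to biproducts and make precomposition additive.
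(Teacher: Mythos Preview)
Your approach is essentially the paper's: soundness by case analysis on the $\beta\eta+$-rules, completeness via a syntactic term model. Your soundness sketch is more detailed than the paper's one-line remark and is correct.

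There is one technical slip in your completeness argument. You take the base category of the syntactic model to be $\Syn$, but $\Syn$ is the syntactic category of the \emph{source} language and contains no objects of the form $\cty\multimap\cty[2]$. The definition of categorical model requires Cartesian $\multimap$-types: objects $L\multimap L'$ living in the base $\catC$ with a representing isomorphism $\lLambda:\catL(C)(L,L')\cong\catC(C,L\multimap L')$. With $\catC=\Syn$ there is nothing for $\lLambda$ to land in, so your $\catL^{\Syn}$ would fail to be a model in the required sense and the completeness step would not go through. The paper fixes this by taking the base to be $\CSyn$, defined as $\Syn$ extended with the Cartesian types $\cty\multimap\cty[2]$; your $\catL^{\Syn}$ should be indexed over $\CSyn$ rather than $\Syn$. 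With that correction the rest of your argument is the paper's argument.
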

Soundness follows by case analysis on the $\beta\eta+$-rules.
Completeness follows by the construction of the syntactic model 
$\LSyn:\CSyn^{op}\to\Cat$:
\begin{itemize}
    \item $\CSyn$ extends its full subcategory $\Syn$ with Cartesian $\multimap$-types;
    \item Objects of $\LSyn(\ty)$ are linear types $\cty[2]$ of our target language.
    \item Morphisms in $\LSyn(\ty)(\cty[2],\cty[3])$ are terms
    $\var:\ty;\var[2]:\cty[2]\vdash \trm:\cty[3]$ modulo $(\alpha)\beta\eta+$-equivalence.
    \item Identities in $\LSyn(\ty)$ are represented by the terms 
    $\var:\ty;\var[2]:\cty[2]\vdash\var[2]:\cty[2]$.
    \item Composition of $\var:\ty;\var[2]_1:\cty[2]_1\vdash \trm:\cty[2]_2$
    and $\var:\ty;\var[2]_2:\cty[2]_2\vdash \trm:\cty[2]_3$ in $\LSyn(\ty)$
    is defined by the capture avoiding substitution  $\var:\ty;\var[2]_1:\cty[2]_1\vdash \subst{\trm[2]}{\sfor{\var[2]_2}{\trm}}:\cty[2]_3$.
    \item Change of base $\LSyn(\trm):\LSyn(\ty)\to\LSyn(\ty')$ along 
    $(\var':\ty'\vdash \trm:\ty)\in \CSyn(\ty',\ty)$ is defined  
    $\LSyn(\trm)(\var:\ty;\var[2]:\cty[2]\vdash \trm[2]:\cty[3])\!\defeq \!
    \var'\!:\ty';\var[2]:\cty[2]\vdash\! \subst{\trm[2]}{\sfor{\var}{\trm}}~:~\cty[3]
    $.
    \item All type formers are interpreted as one expects based on their notation,
    using introduction and elimination rules for the required structural isomorphisms.
\end{itemize}
\vspace{-12pt}
\sqsubsection{Concrete Semantics}
\sqsubsubsection{Diffeological Spaces}
Throughout this paper, we have an instance of the abstract semantics of our languages in mind,
as we intend to interpret $\reals^n$ as the usual Euclidean space 
$\RR^n$ and to interpret each program $\var_1:\reals^{n_1},\ldots,\var_k:\reals^{n_k}\vdash\trm:{\reals^m}$ as a smooth 
($C^\infty$-) function $\RR^{n_1}\times\ldots\times \RR^{n_k}\to \RR^m$.
A challenge is that the usual settings for multivariate calculus and differential 
geometry do not form Cartesian closed categories, obstructing the interpretation of higher types 
 (see \cite[Appx.  A]{hsv-fossacs2020}).
A solution, recently employed by \cite{hsv-fossacs2020}, is to work with 
\emph{diffeological spaces} \cite{souriau1980groupes,iglesias2013diffeology}, which generalise the usual 
notions of differentiability from Euclidean spaces and smooth manifolds to 
apply to higher types (as well as a range of other types such a sum and inductive types).
We will also follow this route and use such spaces to construct our concrete semantics.
Other valid options for a concrete semantics exist:
convenient vector spaces \cite{frolicher1988linear,blute2012convenient},
Fr\"olicher spaces \cite{frolicher1982smooth}, or synthetic differential geometry \cite{kock2006synthetic},
to name a few.
We choose to work with diffeological spaces mostly because they seem to us 
to provide simplest way to define and analyse the semantics of a 
rich class of language features.

Diffeological spaces formalise the intuition that a higher-order function
is smooth if it sends smooth functions to smooth functions, meaning that we can never use it
to build non-smooth first-order functions.
This intuition is reminiscent of a logical relation, and it is realised~by \emph{directly axiomatising smooth maps 
into the space}, rather than treating smoothness as a derived property.
\begin{definition}
    A \emph{diffeological space} $X=(\carrier{X},\plots{X})$ consists of a set $\carrier{X}$ together with, for each $n\in\NN$ and each open subset $U$ of $\RR^n$,
     a set $\plots{X}^U$ of functions $U\to\carrier{X}$ called \emph{plots}, such that
	\begin{tightitemize}
	 	\item \textbf{(constant)} all constant functions are plots;
	 	\item \textbf{(rearrangement)} if $f:V\to U$ is smooth and $p\in\plots{X}^U$, then $f;p\in\plots{X}^V$;
     \item \textbf{(gluing)} if $\seq[i\in I]{p_i\in\plots{X}^{U_i}}$ is a compatible family of plots $(x\in U_i\cap U_j\Rightarrow p_i(x)=p_j(x))$
     and $\seq[i\in I]{U_i}$ covers $U$,
     then the gluing $p:U\to \carrier{X}:x\in U_i\mapsto p_i(x)$ is a plot.
	 \end{tightitemize} 
\end{definition}
\noindent We think of plots as the 
maps that are axiomatically deemed ``smooth''.
We call a function $f:X\to Y$ between diffeological spaces \emph{smooth} if, for all plots
$p\in\plots{X}^U$, we have that $p;f\in \plots{Y}^U$. We write $\Diff(X,Y)$ for the set
of smooth maps from $X$ to $Y$. 
Smooth functions compose, and so we have a category $\Diff$ of diffeological spaces and smooth functions.
We give some examples of such spaces.

\begin{example}[Manifold diffeology]
  Given any open subset $X$ of a Euclidean space $\RR^n$ (or, more generally, a smooth manifold $X$),
  we can take the set of smooth $(C^\infty)$ functions $U\to X$ in the traditional sense as $\plots{X}^U$.
  Given another such space $X'$, then $\Diff(X,X')$ coincides precisely with the
  set of smooth functions $X\to X'$ in the traditional sense of calculus and differential geometry.
\end{example}
Put differently, the categories $\CartSp$  of Euclidean spaces and $\Man$  of smooth manifolds with smooth functions
form full subcategories of $\Diff$.


\begin{example}[Product diffeology]\label{ex:prod-diff}
  Given diffeological spaces  $\seq[i\in I]{X_i}$, we can equip 
  $\prod_{i\in I}|X_i|$ with the \emph{product diffeology}:
  $\plots{\prod_{i\in I}X_i}^U\defeq \set{\seq[i\in I]{\alpha_i}\mid \alpha_i\in \plots{X_i}^U}$. 
\end{example}

\begin{example}[Functional diffeology]\label{ex:fun-diff}
  Given diffeological spaces $X,Y$, we can equip $\Diff(X,Y)$ with the
  \emph{functional diffeology} $\plots{Y^X}^U\defeq \set{\Lambda(\alpha)\mid \alpha\in \Diff(U\times X, Y)}$.
\end{example}
Examples \ref{ex:prod-diff} and \ref{ex:fun-diff} give us the categorical product and exponential objects, respectively, in $\Diff$.
The embeddings of $\CartSp$ and $\Man$ into $\Diff$ preserve
products (and coproducts).

We work with the concrete semantics, where we fix $\catC=\Diff$ as the 
target for interpreting Cartesian types and their terms.
That is, by choosing the interpretation $\sem{\reals^n}\defeq \RR^n$, 
and by interpreting each 
$\op\in\Op_{n_1,\ldots,n_k}^m$ as the smooth function 
$\sem{\op}:\RR^{n_1}\times\ldots\times \RR^{n_k}\to \RR^m$ that it is intended to 
represent, we obtain a unique interpretation $\sem{-}:\CSyn\to \Diff$.
\vspace{-4pt}
\sqsubsubsection{Diffeological Monoids}
To interpret linear types and their terms, we need a semantic setting $\catL$ 
that is both compatible with $\Diff$ and enriched over the category 
of commutative monoids.
We choose to work with \emph{commutative diffeological monoids}. That is, 
commutative monoids internal to the category $\Diff$.
\begin{definition}A \emph{diffeological monoid} $X=(|X|,\plots{X},0_X, +_X)$
consists of a diffeological space $(|X|,\plots{X})$ with a 
monoid structure $(0_X\in |X|, (+_X):|X|\times |X|\to |X|)$, such that 
$+_X$ is smooth.
We call a diffeological monoid \emph{commutative} if the underlying monoid 
structure on $|X|$ is commutative.
\end{definition}
We write $\DiffMon$ for the category whose objects are commutative diffeological monoids
and whose morphisms $(|X|,\plots{X},0_X,+_X)\to (|Y|,\plots{Y},0_Y,+_Y)$ are 
functions $f:|X|\to |Y|$ that are both smooth $(|X|,\plots{X})\to (|Y|,\plots{Y})$ 
and monoid homomorphisms $(|X|,0_X,+_X)\to (|Y|,0_Y,+_Y)$.
Given that $\DiffMon$ is $\CMon$-enriched,  
finite products are biproducts.

\begin{example}
The real numbers $\RR$ form a commutative diffeological monoid $\cRR$ by combining
its standard diffeology with its usual commutative monoid structure $(0,+)$.
Similarly, $\cNN\in\DiffMon$ by equipping $\NN$ with $(0,+)$ and the discrete diffeology, in 
which plots are locally constant functions.
\end{example}
\begin{example}
We  form the (categorical) product in $\DiffMon$
of $\seq[i\in I]{X_i}$ by equipping $\prod_{i\in I}|X_i|$ with the 
product diffeology and product monoid structure.
\end{example}
\begin{example}
For a commutative diffeological monoid $X$, we can equip the 
 monoid $!(|X|, 0_X, +_X)$ with the diffeology
$\plots{!X}^U\defeq \set{\sum_{i=1}^n\alpha_i;\delta\mid 
n\in \NN\textnormal{ and }\alpha_i\in \plots{X}^U 
}$.
\end{example}

\begin{example}\label{ex:homomorphism-monoid}
    Given commutative diffeological monoids $X$ and $Y$, we can equip the 
    tensor product monoid $(|X|, 0_X, +_X)\otimes (|Y|,0_Y,+_Y)$
    with the \emph{tensor product diffeology}:
    $\plots{X\otimes Y}^U\defeq \set{\sum_{i=1}^n\alpha_i\otimes \beta_i\mid 
    n\in \NN\textnormal{ and }\alpha_i\in \plots{X}^U, \beta_i\in\plots{Y}^U
    }$.
    \end{example}
In this paper, we only use the combined operation $!X\otimes Y$ (read: $(!X)\otimes Y$).

\begin{example}
    Given commutative diffeological monoids $X$ and $Y$, we can
    define a commutative diffeological monoid $X\multimap Y$ with 
    underlying set $\DiffMon(X,Y)$,
    $0_{X\multimap Y}(x)\defeq 0_Y$, $(f+_{X\multimap Y}g)(x)\defeq 
    f(x)+_Y g(x)$ and\\ 
    $\plots{X\multimap Y}^U\defeq \set{\alpha:U\to |X\multimap Y|\mid 
    \alpha\in \plots{(|X|,\plots{X})\Rightarrow (|Y|,\plots{Y})}^U}$.
    \end{example}
In this paper, we will primarily be interested in 
$X\multimap Y$ as a diffeological space, and 
we will mostly disregard its monoid structure, until \S\S  \ref{ssec:practical-sem}.
    \begin{example}
        Given a diffeological space $X$ and a commutative diffeological monoid $Y$, we can
        define a commutative diffeological monoid structure $X\Rightarrow Y$ 
        on $X\Rightarrow (|Y|,\plots{Y})$ by using the pointwise monoid structure:
        $0_{X\Rightarrow Y}(x)\defeq 0_Y$ and $(f+_{X\Rightarrow Y}g)(x)\defeq 
        f(x)+_Y g(x)$.
        \end{example}
Given $f\in\Diff(X,Y)$, we can define $!f\in\DiffMon(!X,!Y)$ by 
$!f(\sum_{i=1}^n x)=\sum_{i=1}^nf(x)$. $!$ is a left adjoint
to the obvious forgetful functor $\DiffMon\to\Diff$, while 
$!(X\times Y)\cong !X\otimes !Y$ and $!\terminal \cong \NN$.
Seeing that $(\NN,\otimes,\multimap)$ defines a symmetric monoidal closed 
structure on $\DiffMon$,
cognoscenti will recognise that $(\Diff,\terminal, \times,\Rightarrow)\leftrightarrows 
(\DiffMon,\NN,\terminal,\times,\otimes,\multimap)$
is a model of intuitionistic linear logic \cite{mellies2009categorical}.
In fact, seeing that $\DiffMon$ is $\CMon$-enriched, the model is biadditive \cite{fiore2007differential}.

However, we do not need such a rich type system.
For us, the following suffices.
Define $\DiffMon(X)$, for $X\in\ob\Diff$, to have the objects of $\DiffMon$ 
and homsets $\DiffMon(X)(Y,Z)\defeq \Diff(X,Y\multimap~Z)$.
Identities and composition are defined as $x\mapsto (y\mapsto y)$ and 
$f;_{\DiffMon(X)}g$ is defined by $x\mapsto (f(x);_{\DiffMon}g(x))$.
Given $f\in\Diff(X,X')$, we define change-of-base $\DiffMon(X')\to\DiffMon(X)$
as $\DiffMon(f)(g)\defeq f;_{\Diff}g$.
$\DiffMon(-)$ defines a locally indexed category.
By taking $\catC=\Diff$ and $\catL(-)=\DiffMon(-)$, we 
obtain a concrete instance of our abstract semantics.
Indeed, we have natural isomorphisms\vspace{-4pt}
$$
\DiffMon(X)(!X'\otimes Y, Z)\xto{\Phi} 
\DiffMon(X\times X')(Y,Z)\vspace{-5pt}
$$
$$
\DiffMon(X\times X')(Y, Z)\xto{\Psi} 
\DiffMon(X)(Y,X'\Rightarrow Z)\vspace{-12pt}
$$
\begin{align*}
    &\Phi(f)(x,x')(y)\defeq f(x)(\delta(x')\otimes y) && \inv\Phi(f)(x)(\sum_{i=1}^n(\delta(x'_i)\otimes y_i))\defeq 
\sum_{i=1}^n f(x,x'_i)(y_i)\\[-4pt] 
&\Psi(f)(x)(y)(x')\defeq f(x,x')(y)&&
\inv\Psi(f)(x,x')(y)\defeq f(x)(y)(x').\\[-18pt]
\end{align*}

The prime motivating examples of morphisms in this category are derivatives.
Recall that the \emph{derivative at $x$}, $Df(x)$, and \emph{transposed derivative at $x$},
$\transpose{(Df)}(x)$, 
of a smooth function $f:\RR^n\to\RR^m$ are defined as the unique functions 
$Df(x):\RR^n\to \RR^m$ and $\transpose{(Df)}(x):\RR^m\to\RR^n$ satisfying
$$
Df(x)(v)=\lim_{\delta\to 0}\frac{f(x+\delta\cdot v)-f(x)}{\delta}
\qquad 
\innerprod{\transpose{(Df)}(x)(w)}{v}=\innerprod{w}{Df(x)(v)}
,\vspace{-5pt}
$$
where we write $\innerprod{v}{v'}$ for the inner product
$\sum_{i=1}^n (\pi_i v)\cdot (\pi_i v')$ 
of vectors $v,v'\in\RR^n$.
Now, for $f\in\Diff(\RR^n,\RR^m)$, 
 $Df$ and  $\transpose{(Df)}$
give maps in $\DiffMon(\RR^n)(\RR^n,\RR^m)$ and 
$\DiffMon(\RR^n)(\RR^m,\RR^n)$, respectively.
Indeed, derivatives $Df(x)$ of $f$ at $x$ are linear functions,
as are transposed derivatives $\transpose{(Df)}(x)$.
Both depend smoothly on $x$ in case $f$ is $C^\infty$-smooth.
Note that the derivatives are not merely linear in the sense of preserving $0$ and 
$+$.
They are also multiplicative in the sense that
$(Df)(x)(c\cdot v)=c\cdot (Df)(x)(v)$.
\emph{We could have captured this property by working with vector spaces internal to 
$\Diff$.
However, we will not need this property to phrase or establish correctness of AD.}
Therefore, we restrict our attention to the more straightforward structure of 
commutative monoids.

Defining $\sem{\creals^n}\defeq \cRR^n$ and interpreting
each $\lop\in\LOp$ as the smooth function
$\sem{\lop}:(\RR^{n_1}\times \ldots\times \RR^{n_k})\To 
(\cRR^{n'_1}\times \ldots\times \cRR^{n'_l})\multimap \cRR^m$  
it is intended to represent, we obtain a canonical interpretation 
of our target language in $\DiffMon$.
\sqsection{Pairing Primals with Tangents/Adjoints, Categorically}\label{sec:self-dualization}
In this section, we show that any categorical model $\catL:\catC^{op}\to\Cat$
of our target language gives rise to two Cartesian closed categories 
$\Sigma_{\catC}\catL$ and $\Sigma_{\catC}\catL^{op}$
(which we wrote $\Dcat{\catL}$ and $\Dcatrev{\catL}$ in \S  \ref{sec:key-ideas}).
We believe these observations of Cartesian closure are novel.
Surprisingly, they are highly relevant for obtaining a principled understanding 
of AD on a higher-order language: the former for forward AD, and the latter for reverse AD.
Applying these constructions to the syntactic category $\LSyn:\CSyn^{op}\to \Cat$ of our language, 
we produce
a canonical definition of the AD macros, as the canonical interpretation of the $\lambda$-calculus 
in the Cartesian closed categories $\Sigma_{\CSyn}\LSyn$ and $\Sigma_{\CSyn}\LSyn^{op}$.
In addition, when we apply this construction to the denotational semantics $\DiffMon:\Diff^{op}\to\Cat$
and invoke a categorical logical relations technique, known as \emph{subsconing},
we find an elegant correctness proof of the source code transformations.
The abstract construction delineated in this section is in many ways the theoretical crux of this paper.

\sqsubsection{Grothendieck Constructions on Strictly Indexed Categories}\label{ssec:grothendieck-construction}
Recall that for any strictly indexed category, i.e. a (strict) functor $\catL:\catC^{op}\to\Cat$, 
we can consider its total category (or Grothendieck construction) $\Sigma_\catC \catL$,
which is a fibred category over $\catC$  (see \cite[sections A1.1.7, B1.3.1]{johnstone2002sketches}).
We can view it as a $\Sigma$-type of categories, which 
generalizes the Cartesian product.
Concretely, its objects are pairs  $(A_1,A_2)$ of objects $A_1$ of $\catC$ and 
$A_2$ of $\catL(A_1)$.
Its morphisms $(A_1,A_2)\to (B_1,B_2)$ are pairs $\sPair{f_1}{f_2}$ of a morphism $f_1:A_1\to{} B_1$ in
$\catC$ 
and a morphism $f_2:A_2\to \catL(f_1)(B_2)$ in $\catL(A_1)$.
Identities are $\id[(A_1,A_2)]\defeq (\id[A_1], \id[A_2])$
and composition is $(f_1,f_2);(g_1,g_2)\defeq (f_1;g_1, f_2; \catL(f_1)(g_2))$.
Further, given a strictly indexed category $\catL:\catC^{op}\to \Cat$,
we can consider its fibrewise dual category $\catL^{op}:\catC^{op}\to \Cat$,
which is defined as the composition $\catC^{op}\xto{\catL}\Cat\xto{op}\Cat$.
Thus, we can apply the same construction to $\catL^{op}$ to obtain a category~$\Sigma_{\catC}\catL^{op}$.

\sqsubsection{Structure of $\Sigma_{\catC}\catL$ and $\Sigma_{\catC}\catL^{op}$ for Locally Indexed Categories}
\S\S  \ref{ssec:grothendieck-construction} applies, in particular, to the locally indexed categories of \S  \ref{sec:semantics}.
In this case, we will analyze the categorical structure of $\Sigma_{\catC}\catL$ and $\Sigma_{\catC}\catL^{op}$.
For reference, we first give a concrete description.

$\Sigma_{\catC}\catL$ is the following category:
\begin{itemize}
\item objects are pairs $(A_1,A_2)$ of objects $A_1$ of $\catC$ and $A_2$ of $\catL$;
\item morphisms $(A_1,A_2)\to (B_1,B_2)$ are pairs $(f_1,f_2)$ with $f_1:A_1\to B_1\in\catC$ and $f_2:A_2\to B_2\in \catL(A_1)$;
\item identities $\id[(A_1,A_2)]$ are $(\id[A_1],\id[A_2])$ and composition of $(A_1,A_2)\xto{(f_1,f_2)}(B_1,B_2)$
and $(B_1,B_2)\xto{(g_1,g_2)}(C_1,C_2)$ is given by 
$(f_1;g_1, f_2;\catL(f_1)(g_2))$.
\end{itemize}

$\Sigma_{\catC}\catL^{op}$ is the following category:
\begin{itemize}
    \item objects are pairs $(A_1,A_2)$ of objects $A_1$ of $\catC$ and $A_2$ of $\catL$;
    \item morphisms $(A_1,A_2)\to (B_1,B_2)$ are pairs  $(f_1,f_2)$ with $f_1:A_1\to B_1\in\catC$ and $f_2:B_2\to A_2\in \catL(A_1)$;
    \item identities $\id[(A_1,A_2)]$ are $(\id[A_1],\id[A_2])$ and composition of $(A_1,A_2)\xto{(f_1,f_2)}(B_1,B_2)$ and 
    $(B_1,B_2)\xto{(g_1,g_2)}(C_1,C_2)$ is given by 
    $(f_1;g_1, \catL(f_1)(g_2);f_2)$.
\end{itemize}

We examine the categorical structure present in $\Sigma_{\catC}\catL$ and 
$\Sigma_{\catC}\catL^{op}$ for categorical models $\catL$ in the sense of \S\ref{sec:semantics}
(i.e., in case $\catL$ has biproducts and supports $\Rightarrow$-,
$!(-)\otimes(-)$-, and Cartesian $\multimap$-types).
We believe this is a novel observation.
We will make heavy use of it to define our AD algorithms and to prove them~correct.
\begin{proposition}
$\Sigma_{\catC}\catL$ has terminal object 
$\terminal=(\terminal,\terminal)$, binary product 
$(A_1,A_2)\times (B_1,B_2)=(A_1\times B_1,A_2\times  B_2)$, and exponential $(A_1,A_2)\Rightarrow (B_1,B_2)=\linebreak
(A_1\Rightarrow  (B_1\times (A_2\multimap B_2)), A_1\Rightarrow B_2).$
\end{proposition}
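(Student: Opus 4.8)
The plan is to check the three universal properties separately, in each case reducing to the corresponding structure in $\catC$ and in the fibres $\catL(A_1)$, and exploiting throughout that each reindexing functor $\catL(f)$ is the identity on objects and (by biadditivity) preserves the chosen biproducts. Using the concrete description of $\Sigma_{\catC}\catL$, a morphism $(A_1,A_2)\to(\terminal,\terminal)$ is a pair $(f_1,f_2)$ with $f_1\in\catC(A_1,\terminal)$ forced by terminality in $\catC$ and $f_2\in\catL(A_1)(A_2,\terminal)$ forced by terminality in the fibre, so $(\terminal,\terminal)$ is terminal. For binary products I take the projections $(\pi_1,\pi_1)$ and $(\pi_2,\pi_2)$, where the first entry is the $\catC$-projection and the second is the fibre biproduct projection, and I pair a cone $(h_1,h_2),(k_1,k_2)$ out of $(C_1,C_2)$ componentwise as $(\langle h_1,k_1\rangle,\langle h_2,k_2\rangle)$. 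The projection equations survive the change-of-base twist in $\Sigma_{\catC}\catL$-composition $(f_1,f_2);(g_1,g_2)\defeq(f_1;g_1,\,f_2;\catL(f_1)(g_2))$ precisely because $\catL(f_1)$ carries fibre projections to fibre projections; uniqueness is inherited from $\catC$ and from the fibres.

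The exponential is the crux. Writing $E_1\defeq A_1\Rightarrow(B_1\times(A_2\multimap B_2))$ and $E_2\defeq A_1\Rightarrow B_2$, I establish the defining bijection
$$\Sigma_{\catC}\catL\bigl((C_1,C_2)\times(A_1,A_2),(B_1,B_2)\bigr)\;\cong\;\Sigma_{\catC}\catL\bigl((C_1,C_2),(E_1,E_2)\bigr)$$
by a chain of standard isomorphisms. A left-hand morphism is a pair $f_1\in\catC(C_1\times A_1,B_1)$ and $f_2\in\catL(C_1\times A_1)(C_2\times A_2,B_2)$. First I use the biproduct $C_2\times A_2=C_2\oplus A_2$ to split $f_2$ along the injections $\injection_1,\injection_2$ into $f_2^C\in\catL(C_1\times A_1)(C_2,B_2)$ and $f_2^A\in\catL(C_1\times A_1)(A_2,B_2)$. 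Then $\Psi$ transports $f_2^C$ to $\catL(C_1)(C_2,A_1\Rightarrow B_2)$, which is exactly the desired second component $g_2$ (legitimately, since $\catL(g_1)(E_2)=E_2$); the representability isomorphism $\lLambda$ transports $f_2^A$ to $\catC(C_1\times A_1,A_2\multimap B_2)$; and currying $\Lambda$ in $\catC$ converts this together with $f_1$ into a single map into $(A_1\Rightarrow B_1)\times(A_1\Rightarrow(A_2\multimap B_2))\cong A_1\Rightarrow(B_1\times(A_2\multimap B_2))=E_1$, using that the $\catC$-exponential $A_1\Rightarrow(-)$ preserves products. The assembled map is $g_1$, and each step above is a bijection, so the composite is.

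The main obstacle is naturality of this bijection in $(C_1,C_2)$: I must verify that it commutes with precomposition by an arbitrary $(u_1,u_2)\colon(C_1',C_2')\to(C_1,C_2)$, which on the left-hand side acts as $(u_1,u_2)\times\id_{(A_1,A_2)}$. Here the reindexing $\catL(u_1)(-)$ appearing in $\Sigma_{\catC}\catL$-composition interacts nontrivially with the currying isomorphisms, and the Beck--Chevalley conditions are exactly what is needed: they make $A_1\Rightarrow B_2$ and the isomorphisms $\Psi$ and $\lLambda$ stable under reindexing, so that $\catL(u_1)(-)$ slides coherently through $\Psi$ and $\lLambda$, while naturality of $\Lambda$ in $\catC$ and preservation of biproducts by $\catL(u_1)$ dispatch the remaining pieces. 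I expect confirming that the three component transports are \emph{jointly} natural with respect to this twisted composition to be the only genuinely delicate part of the argument. The alternative of exhibiting $\ev$ and $\Lambda$ explicitly in the internal language and checking the $\beta\eta$-laws would trade this abstract bookkeeping for a direct but more computational verification.
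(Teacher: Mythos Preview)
Your proposal is correct and follows essentially the same approach as the paper: both establish the universal properties via the same chain of natural bijections, splitting the fibre map out of the biproduct $C_2\times A_2$, using $\Psi$ for the $\Rightarrow$-type, $\lLambda$ for the Cartesian $\multimap$-type, and currying in $\catC$. The only difference is presentational: the paper simply writes ``(natural) bijections'' and lists the chain, whereas you spell out why the twisted composition in $\Sigma_{\catC}\catL$ interacts correctly with these isomorphisms and flag Beck--Chevalley as the reason the $\Rightarrow$- and $\multimap$-transports are stable under reindexing---a point the paper leaves implicit.
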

\begin{proof}
    We have (natural) bijections 
    \vspace{-2pt}\\
    \resizebox{\linewidth}{!}{\parbox{\linewidth}{
    \begin{align*}
    &\Sigma_{\catC}\catL((A_1,A_2), (\terminal,\terminal))
    =
    \catC(A_1,\terminal)\times \catL(A_1)(A_2,\terminal)
    \cong\terminal\times \terminal
    \cong \terminal\explainr{$\terminal$ terminal in $\catC$ and $\catL(A_1)$}\\
    &\\[-6pt]
    &\Sigma_{\catC}\catL((A_1,A_2), (B_1\times C_1,B_2\times  C_2))
    =\catC(A_1,B_1\times C_1)\times \catL(A_1)(A_2, B_2\times C_2)\hspace{-40pt}\;\\
    &\cong \catC(A_1,B_1)\times \catC(A_1,C_1)\times \catL(A_1)(A_2, B_2)\times 
    \catL(A_1)(A_2, C_2)\explainr{$\times $ product in $\catC$ and $\catL(A_1)$}\\
    &\cong \Sigma_{\catC}\catL((A_1,A_2),(B_1,B_2))\times \Sigma_{\catC}\catL((A_1,A_2), (C_1,C_2))\hspace{-40pt}\;\\
    &\\[-6pt]
    &\Sigma_{\catC}\catL((A_1,A_2)\times (B_1,B_2), (C_1,C_2)) =
        \Sigma_{\catC}\catL((A_1\times B_1,A_2\times  B_2), (C_1,C_2))\hspace{-40pt}\; \\
        &=\catC(A_1\times B_1, C_1)\times \catL(A_1\times B_1)(A_2\times B_2, C_2)\\
        &\cong
        \catC(A_1\times B_1, C_1)\times \catL(A_1\times B_1)(A_2, C_2)\times  \catL(A_1\times B_1)(B_2,C_2)\explainr{$\times$ coproducts in $\catL(A_1\times B_1)$}\\
        &\cong
        \catC(A_1\times B_1, C_1)\times \catL(A_1)(A_2, B_1\Rightarrow C_2)\times  \catL(A_1\times B_1)(B_2,C_2)
        \explainr{$\Rightarrow$-types in $\catL$}\\
        &\cong
        \catC(A_1\times B_1, C_1)\times \catL(A_1)(A_2, B_1\Rightarrow C_2)\times  \catC(A_1\times B_1,B_2\multimap C_2)
        \explainr{Cartesian $\multimap$-types}\\
        &\cong\catC(A_1\times B_1, C_1\times  (B_2\multimap C_2))\times 
        \catL(A_1)(A_2, B_1\Rightarrow C_2)\explainr{$\times$ is product in $\catC$}\\
        &\cong \catC(A_1, B_1\Rightarrow (C_1\times  (B_2\multimap C_2)))\times 
        \catL(A_1)(A_2, B_1\Rightarrow C_2)
        \explainr{$\Rightarrow$ is exponential in $\catC$}\\
        &= \Sigma_{\catC}\catL((A_1,A_2), (B_1\Rightarrow (C_1\times  (B_2\multimap C_2)), B_1\Rightarrow C_2))\\
        &= \Sigma_{\catC}\catL((A_1,A_2), (B_1,B_2)\Rightarrow (C_1,C_2)).\end{align*}\\[-22pt]
    }}\end{proof}
    We observe that
    we need $\catL$ to have biproducts (equivalently: to be $\CMon$ enriched) in order to show Cartesian closure.
    Further, we need linear $\Rightarrow$-types and Cartesian $\multimap$-types to construct exponentials.
\begin{proposition}
    $\Sigma_{\catC}\catL^{op}$ has terminal object $\terminal=(\terminal,\terminal)$, binary product
    $(A_1,A_2)\times (B_1,B_2)=(A_1\times B_1,A_2\times  B_2)$, and exponential $(A_1,A_2)\Rightarrow (B_1,B_2)=\linebreak
    (A_1\Rightarrow  (B_1\times (B_2\multimap A_2)), !A_1\otimes B_2).$
    \end{proposition}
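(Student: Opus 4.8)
The plan is to mirror the proof of the preceding proposition, making only the adjustments forced by the fact that in $\Sigma_{\catC}\catL^{op}$ the linear component of a morphism $(A_1,A_2)\to(B_1,B_2)$ runs backwards: it is a map $B_2\to A_2$ in $\catL(A_1)$, so that $\Sigma_{\catC}\catL^{op}((A_1,A_2),(B_1,B_2))=\catC(A_1,B_1)\times\catL(A_1)(B_2,A_2)$. As before, I would establish each claim by exhibiting a chain of bijections natural in the relevant variable, so that the universal properties follow by Yoneda; most of the links reuse the same products and exponentials in $\catC$ and $\catL$ that appeared in the forward case.

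\textbf{Terminal object and products.} For the terminal object, the linear factor of $\Sigma_{\catC}\catL^{op}((A_1,A_2),(\terminal,\terminal))$ is now $\catL(A_1)(\terminal,A_2)$; since $\catL$ is biadditive, $\terminal$ is a zero object of $\catL(A_1)$ and hence initial, so this factor is a singleton, and $\catC(A_1,\terminal)$ is a singleton as well. For binary products the linear factor of a morphism into $(B_1\times C_1,B_2\times C_2)$ is $\catL(A_1)(B_2\times C_2,A_2)$; here I would use that the biproduct $\times$ is simultaneously a \emph{coproduct} in $\catL(A_1)$, so this splits as $\catL(A_1)(B_2,A_2)\times\catL(A_1)(C_2,A_2)$. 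Combined with $\catC(A_1,B_1\times C_1)\cong\catC(A_1,B_1)\times\catC(A_1,C_1)$, this yields the product $(A_1\times B_1,A_2\times B_2)$.

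\textbf{Exponentials (the crux).} Starting from
\[
\Sigma_{\catC}\catL^{op}\bigl((A_1,A_2)\times(B_1,B_2),(C_1,C_2)\bigr)=\catC(A_1\times B_1,C_1)\times\catL(A_1\times B_1)(C_2,A_2\times B_2),
\]
I would first split the linear factor along the coproduct $A_2\times B_2$ into $\catL(A_1\times B_1)(C_2,A_2)\times\catL(A_1\times B_1)(C_2,B_2)$. The two resulting factors are then treated differently: to the $A_2$-factor I apply the $!(-)\otimes(-)$ adjunction $\Phi$, rewriting $\catL(A_1\times B_1)(C_2,A_2)\cong\catL(A_1)(!B_1\otimes C_2,A_2)$ --- this is exactly where the second exponential component $!B_1\otimes C_2$ is produced; to the $B_2$-factor I apply representability of Cartesian $\multimap$-types, rewriting $\catL(A_1\times B_1)(C_2,B_2)\cong\catC(A_1\times B_1,C_2\multimap B_2)$. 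Recombining the first and third factors with the product of $\catC$ and then the exponential of $\catC$ gives $\catC(A_1,B_1\Rightarrow(C_1\times(C_2\multimap B_2)))\times\catL(A_1)(!B_1\otimes C_2,A_2)$, which is precisely $\Sigma_{\catC}\catL^{op}\bigl((A_1,A_2),(B_1\Rightarrow(C_1\times(C_2\multimap B_2)),\,!B_1\otimes C_2)\bigr)$, as claimed.

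\textbf{Main obstacle.} Each link is routine once the correct adjunction is chosen, so the only real subtlety is bookkeeping: checking that every bijection is natural in $(A_1,A_2)$ (so the chain genuinely computes an exponential rather than a bare hom-set isomorphism) and keeping the variances straight. The decisive difference from the forward proposition is that the context $B_1$ must now be absorbed into the \emph{source} $C_2$ of the backward linear map; this forces the use of the left adjoint $\Phi$ (the $!(-)\otimes(-)$-type) in place of the right adjoint $\Psi$ (the $\Rightarrow$-type), which is the categorical reason that the adjoint storage type for reverse AD at function type is $!(-)\otimes(-)$ rather than $(-)\Rightarrow(-)$.
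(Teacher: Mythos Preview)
Your proposal is correct and follows essentially the same approach as the paper: the same chain of natural bijections using biproducts, Cartesian $\multimap$-types, the $!(-)\otimes(-)$-adjunction, and products/exponentials in $\catC$, differing only in the order in which you apply the $!(-)\otimes(-)$-step. One small terminological slip: when you split $\catL(A_1\times B_1)(C_2,A_2\times B_2)$ you are using that $\times$ is a \emph{product} in $\catL(A_1\times B_1)$ (the biproduct sits in the codomain), not its coproduct structure; the paper is explicit about this, and your written bijection is correct regardless.
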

    \begin{proof}
        We have (natural) bijections
        \vspace{-2pt}\\
        \resizebox{\linewidth}{!}{\parbox{\linewidth}{
        \begin{align*}
            &\Sigma_{\catC}\catL^{op}((A_1,A_2), (\terminal,\terminal))
            =
            \catC(A_1,\terminal)\times \catL(A_1)(\terminal,A_2)
            \cong\terminal\times \terminal
            \cong \terminal
            \explainr{$\terminal$ terminal in $\catC$, initial in $\catL(A_1)$}\\
&\\[-6pt]
        &\Sigma_{\catC}\catL^{op}((A_1,A_2), (B_1\times C_1,B_2\times  C_2))
        =\catC(A_1,B_1\times C_1)\times \catL(A_1)(B_2\times C_2,A_2)\hspace{-60pt}\;\\
        &\cong
        \catC(A_1,B_1)\times \catC(A_1,C_1)\times \catL(A_1)(B_2,A_2)\times 
        \catL(A_1)(C_2,A_2)
        \explainr{$\times $ product in $\catC$, coproduct in $\catL(A_1)$}\\
        &= \Sigma_{\catC}\catL^{op}((A_1,A_2),(B_1,B_2))\times \Sigma_{\catC}\catL^{op}((A_1,A_2), (C_1,C_2))
        \\
        \\[-6pt]
            &\Sigma_{\catC}\catL^{op}((A_1,A_2)\times (B_1,B_2), (C_1,C_2)) =
            \Sigma_{\catC}\catL^{op}((A_1\times B_1,A_2\times  B_2), (C_1,C_2))\hspace{-60pt}\; \\
            &=\catC(A_1\times B_1, C_1)\times \catL(A_1\times B_1)(C_2, A_2\times B_2)\\
            &\cong
            \catC(A_1\times B_1, C_1)\times \catL(A_1\times B_1)(C_2, A_2)\times \catL(A_1\times B_1)(C_2, B_2)
            \explainr{$\times$ is product in $\catL(A_1\times B_1)$}\\
            &\cong            \catC(A_1\times B_1, C_1)\times\catC(A_1\times B_1,C_2\multimap B_2)\times 
            \catL(A_1\times B_1)(C_2, A_2)\hspace{-40pt}\;
             \explainr{Cartesian $\multimap$-types}\\
            &\cong
            \catC(A_1\times B_1, C_1\times  (C_2\multimap B_2))\times 
            \catL(A_1\times B_1)(C_2, A_2)
            \explainr{$\times$ is product in $\catC$}\\
            &\cong
            \catC(A_1,B_1\Rightarrow ( C_1\times  (C_2\multimap B_2)))\times 
            \catL(A_1\times B_1)(C_2, A_2)
            \explainr{$\Rightarrow$ is exponential in $\catC$}\\
            &\cong
            \catC(A_1,B_1\Rightarrow ( C_1\times  (C_2\multimap B_2)))\times 
            \catL(A_1)(!B_1\otimes C_2, A_2)
            \explainr{$!(-)\otimes(-)$-types} \\
            &=
            \Sigma_{\catC}\catL^{op}((A_1,A_2), (B_1\Rightarrow ( C_1\times  (C_2\multimap B_2)), !B_1\otimes C_2))\\
            &= 
            \Sigma_{\catC}\catL^{op}((A_1,A_2), (B_1,B_2)\Rightarrow (C_1,C_2)).\\[-22pt]
        \end{align*}}}
    \end{proof}
    Observe that we need the biproduct structure of $\catL$ to construct finite products in
    $\Sigma_{\catC}\catL^{op}$.
    Further, we need Cartesian $\multimap$-types 
    and $!(-)\otimes (-)$-types, but not biproducts, to construct exponentials.
\sqsection{Novel AD Algorithms as Source-Code~Transformations}\label{sec:combinator-macro}
As $\Sigma_{\CSyn}\LSyn$ and $\Sigma_{\CSyn}\LSyn^{op}$
are both Cartesian 
closed categories by \S  \ref{sec:self-dualization},
the universal property of $\Syn$ yields unique structure-preserving 
macros, $\Dsyn{-}:\Syn\to\Sigma_{\CSyn}\LSyn$ (forward AD) and $\Dsynrev{-}:\Syn\to\Sigma_{\CSyn}\LSyn^{op}$
(reverse AD),
once we fix a compatible definition for the macros on $\reals^n$ and basic operations $\op$.
By definition of equality in $\Syn$, $\Sigma_{\CSyn}\LSyn$ and $\Sigma_{\CSyn}\LSyn^{op}$,
\emph{these macros automatically respect equational
reasoning principles}, in the sense that 
$\trm\beeq\trm[2]$ implies that $\Dsyn{\trm}\bepeq\Dsyn{\trm[2]}$ and 
$\Dsynrev{\trm}\bepeq\Dsynrev{\trm[2]}$.

We need to choose suitable terms $D\op(\var;\var[2])$ and 
$\transpose{D\op}(\var;\var[2])$ to represent the forward- and 
reverse-mode derivatives of the basic operations $\op\!\in\!\Op_{n_1,\ldots,n_k}^m$.
For example, for elementwise multiplication $(*)\in\Op_{n,n}^n$, we can define 
$D(*)(\var;\var[2])=(\tFst\var)*(\tSnd\var[2])+(\tSnd\var)*(\tFst\var[2])$
and $\transpose{D(*)}(\var;\var[2])=\tPair{(\tSnd \var )*\var[2]}{(\tFst\var)*\var[2]}$,
where we use (linear) elementwise multiplication $(*)\in\LOp_{n;n}^n$.
We represent derivatives as linear functions.
This representation allows for efficient 
Jacobian-vector/adjoint product implementations,
which avoid first calculating a full Jacobian and next~taking a product.
Such implementations are known to be 
important to achieve performant AD~systems.\vspace{-4pt}\\
\resizebox{\linewidth}{!}{\parbox{\linewidth}{
\begin{align*}
    &\Dsyn{\reals^n}_1 \defeq \reals^n
    && {\Dsyn{\reals^n}_2} \defeq \creals^n \quad \Dsynrev{\reals^n}_1 \defeq \reals^n
\quad \Dsyn{\reals^n}_2 \defeq \creals^n\end{align*}}}
\vspace{-16pt}\\
\resizebox{\linewidth}{!}{\parbox{\linewidth}{\begin{align*}
    &\Dsyn{\op}_1\!\defeq\! \op
    && \Dsyn{\op}_2\!\defeq\! \var:\reals^{n_1}\t* .. \t*\reals^{n_k};\var[2]:\creals^{n_1}\t* .. \t*\creals^{n_k}\!\vdash D\op(\var;\var[2]):\creals^m\\
    &\Dsynrev{\op}_1 \!\defeq\! \op
    && \Dsynrev{\op}_2\!\defeq\! \var:\reals^{n_1}\t* .. \t*\reals^{n_k};\var[2]:\creals^m\!\vdash 
    \transpose{D\op}(\var;\var[2]):\creals^{n_1}\t* .. \t*\creals^{n_k}
    \end{align*}}}\\
    For the AD transformations to be correct, it is important that these derivatives of language
    primitives are implemented correctly in the sense that
    $$
\sem{\var;\var[2]\vdash D\op(\var;\var[2])}=D\sem{\op}\qquad \sem{\var;\var[2]\vdash \transpose{D\op}(\var;\var[2])}=\transpose{D\sem{\op}}.
    $$
    In practice, AD library developers tend to assume the subtle task 
    of correctly implementing such derivatives $D\op(\var;\var[2])$ and $\transpose{D\op}(\var;\var[2])$ whenever a new primitive operation\,$\op$\,
    is added to the library.

    The extension of the AD macros $\Dsynsymbol$ and $\Dsynrevsymbol$ 
    to the full source language are now  canonically determined, as the unique 
    Cartesian closed functors that extend the previous definitions, following 
    the categorical structure described in \S  \ref{sec:self-dualization}.
    Because of the counter-intuitive nature of the Cartesian closed
    structures on $\Sigma_{\CSyn}\LSyn$ and $\Sigma_{\CSyn}\LSyn^{op}$, we list 
    the full macros explicitly in \citeappx{A}.

\sqsection{Proving Reverse and Forward AD Semantically Correct}\label{sec:glueing-correctness}
In this section, we will show that the source code transformations 
described in \S  \ref{sec:combinator-macro} correctly implement 
mathematical derivatives.
We make correctness precise as the statement that for programs $\var:\ty\vdash\trm:\ty[2]$
between first-order types $\ty$ and $\ty[2]$, i.e. types not containing any function type constructors,
we have that $\sem{\Dsyn{\trm}_2}=D\sem{\trm}$ and 
$\sem{\Dsynrev{\trm}_2}=\transpose{(D\sem{\trm})}$, where
$\sem{-}$ is the semantics
of \S  \ref{sec:semantics}.
The proof mainly consists of logical relations arguments over the semantics in 
$\Sigma_{\Diff}\DiffMon$ and $\Sigma_{\Diff}\DiffMon^{op}$.
This logical relations proof can be phrased in elementary terms, but the resulting argument is 
technical and would be hard to discover.
Instead, we prefer to phrase it in terms of a categorical subsconing 
construction, a more abstract and elegant perspective on logical relations.
We discovered the proof by taking this categorical perspective,
and, while we have verified the elementary argument (see \citeappx{D}),
we would not otherwise have come up with it.

\sqsubsection{Preliminaries}
\sqsubsubsection{Subsconing}\label{sssec:subsconing}
Logical relations arguments provide a powerful proof technique for demonstrating
properties of typed programs.
The arguments proceed by induction on the structure of types.
Here, we briefly review the basics of categorical logical relations arguments,
or \emph{subsconing constructions}.
We restrict to the level of generality that we need here, but we would like to 
point out that the theory applies much more generally.

Consider a Cartesian closed category $(\catC,\terminal,\times,\Rightarrow)$.
Suppose that we are given a functor 
$F:\catC\to\Set$ to the category $\Set$ of sets and functions
which preserves finite products in the sense that $F(\terminal)\cong\terminal$
and $F(C\times C')\cong F(C)\times F(C')$.
Then, we can form the \emph{subscone} of $F$, or category of logical relations 
over $F$, which is Cartesian closed, with a
faithful Cartesian closed functor $\pi_1$ to $\catC$ which forgets about the predicates \cite{johnstone-lack-sobocinski}:
\begin{itemize}
\item objects are pairs $(C,P)$ of an object $C$ of $\catC$ and a predicate 
$P\subseteq FC$;
\item morphisms $(C,P)\to (C',P')$ are $\catC$ morphisms $f:C\to C'$
which respect the predicates in the sense that $F(f)(P)\subseteq P'$;
\item identities and composition are as in $\catC$;
\item $(\terminal, F\terminal)$ is the terminal object, and products and exponentials are given by
$(C,P)\times (C',P')= (C\times C', \set{\alpha\in F(C\times C')\mid 
F(\pi_1)(\alpha)\in P, F(\pi_2)(\alpha)\in P'})$
$(C,P)\Rightarrow (C', P')= (C\Rightarrow C',
\{F(\pi_1)(\gamma)\mid 
 \gamma\in F((C\Rightarrow C')\times C)
\textnormal{ s.t. }\\ F(\pi_2)(\gamma) \in P \textnormal{ implies }  F(\ev)(\gamma)\in P'
\}
)$.
\end{itemize}

In typical applications, $\catC$ can be the syntactic category of a 
language (like $\Syn$), the codomain of a denotational semantics
$\sem{-}$ (like $\Diff$), or a product of the above, if we want to 
consider $n$-ary logical relations.
Typically, $F$ tends to be a hom-functor (which always preserves products), like $\catC(\terminal,-)$ or 
$\catC(C_0,-)$, for some important object $C_0$.
When applied to the syntactic category $\Syn$ and $F=\Syn(\Unit,-)$, the 
formulae for products and exponentials in the subscone clearly
reproduce the usual recipes in traditional, syntactic logical relations arguments.
As such, subsconing generalises standard logical relations methods.

\sqsubsection{Subsconing for Correctness of AD}
We will apply the subsconing construction above to\vspace{4pt}\\
\resizebox{\linewidth}{!}{
$
    \begin{array}{lll}
\catC=\Diff\times \Sigma_{\Diff}\DiffMon& F=\Diff\times \Sigma_{\Diff}\DiffMon((\RR,(\RR,\cRR)),-)&\hspace{-3pt}\textnormal{(forward AD)}\\
\catC=\Diff\times \Sigma_{\Diff}\DiffMon^{op} & F=\Diff\times \Sigma_{\Diff}\DiffMon^{op}((\RR,(\RR,\cRR)),-)&\textnormal{(reverse AD)},
    \end{array}
$}\vspace{4pt}\\
where we note that $\Diff$, $\Sigma_{\Diff}\DiffMon$, and $\Sigma_{\Diff}\DiffMon^{op}$
are Cartesian closed (given the arguments of \S \ref{sec:semantics} and \S  \ref{sec:self-dualization})
and that the product of Cartesian closed categories is again Cartesian closed.
Let us write $\Gl$ and $\GlRev$, respectively, for the resulting categories of logical 
relations.

Seeing that $\Gl$ and $\GlRev$ are Cartesian closed, we obtain unique 
Cartesian closed functors $\semgl{-}:\Syn\to\Gl$ and $\semglrev{-}:\Syn\to\GlRev$
once we fix an interpretation of $\reals^n$ and all operations $\op$.
We write $P_{\ty}^f$ and $P_{\ty}^r$, respectively, for the relations 
 $\pi_2\semgl{\ty}$ and $\pi_2\semglrev{\ty}$.
Let us~interpret
\begin{align*}
&\semgl{\reals^n}\defeq (((\RR^n,(\RR^n,\cRR^n)), \set{(f,(g,h))\mid f=g\textnormal{ and } 
h=Df
}))\\
&\semglrev{\reals^n}\defeq (((\RR^n,(\RR^n,\cRR^n)),\{(f,(g,h))\mid f=g\textnormal{ and } 
h=\transpose{(Df)}
\}))\vspace{-4pt}\\
&\semgl{\op}\defeq (\sem{\op}, (\sem{\Dsyn{\op}_1}, \sem{\Dsyn{\op}_2}))\qquad
\semglrev{\op}\defeq (\sem{\op}, (\sem{\Dsynrev{\op}_1}, \sem{\Dsynrev{\op}_2})),
\end{align*}
where we write $Df$ for the semantic derivative of $f$ (see \S  \ref{sec:semantics}).
We need to verify, respectively, that $(\sem{\op}, (\sem{\Dsyn{\op}_1}, \sem{\Dsyn{\op}_2}))$
and $(\sem{\op}, (\sem{\Dsynrev{\op}_1}, \sem{\Dsynrev{\op}_2}))$
respect the logical relations $P^f$ and $P^r$.
This respecting of relations follows immediately from the chain rule for multivariate differentiation, 
as long as we have implemented our derivatives correctly for the basic operations~$\op$:\vspace{-2pt}
\begin{align*}
&\sem{\var;\var[2]\vdash D\op(\var;\var[2])}=D\sem{\op}\;\;\quad\qquad\textnormal{and}\quad\qquad\;\;
\sem{\var;\var[2]\vdash \transpose{(D\op)}(\var;\var[2])}=\transpose{(D\sem{\op})}.
\end{align*}
Writing $\reals^{n_1,..,n_k}\!\defeq\! \reals^{n_1}\t*..\t*\reals^{n_k}$
and $\RR^{n_1,..,n_k}\!\defeq\! \RR^{n_1}\times..\times \RR^{n_k}$,
we compute\vspace{-4pt}\\
\resizebox{\linewidth}{!}{\parbox{\linewidth}{
\begin{align*}
    &\semgl{\reals^{n_1,..,n_k}}\!=\! ((\RR^{n_1,..,n_k},(\RR^{n_1,..,n_k},\cRR^{n_1,..,n_k})), \set{(f,(g,h))\mid f=g, 
    h=Df
    })\\
    &\semglrev{\reals^{n_1,..,n_k}}\!=\! ((\RR^{n_1,..,n_k},(\RR^{n_1,..,n_k},\cRR^{n_1,..,n_k})),  \{(f,(g,h))\mid f=g, 
    h=\transpose{(Df)}\})
\end{align*}}}\\
since derivatives of tuple-valued functions are computed component-wise.
(In fact, the corresponding facts hold more generally for any first-order type,
as an iterated product of $\reals^n$.)
Suppose that $(f,(g,h))\in P^f_{\reals^{n_1,..,n_k}}$,
i.e. $g=f$ and $h=Df$.
Then, using the chain rule in the last step, we have\vspace{-4pt}\\
\resizebox{\linewidth}{!}{\parbox{\linewidth}{
\begin{align*}
    &(f,(g,h));(\sem{\op},(\sem{\Dsyn{\op}_1},\sem{\Dsyn{\op}_2}))=
    (f,(f,Df));(\sem{\op},(\sem{{\op}},\sem{\var;\var[2]\vdash D\op(x;y)}))\\
 &  = (f,(f,Df));(\sem{\op},(\sem{\op},D\sem{\op}))
    = (f;\sem{\op},(f;\sem{\op}, x\mapsto r\mapsto D\sem{\op}(f(x))(Df(x)(r)) ))\\
    &= (f;\sem{\op},(f;\sem{\op}, D(f;\sem{\op}) ))\in P_{\reals^m}^f.
\end{align*}}}\\
Similarly, if $(f, (g,h))\in P^r_{\reals^{n_1,..,n_k}}$, then by the chain rule 
and linear algebra
\vspace{-4pt}\\
\resizebox{\linewidth}{!}{\parbox{1.14\linewidth}{
\begin{align*}
    &(f,(g,h));(\sem{\op},(\sem{\Dsynrev{\op}_1},\sem{\Dsynrev{\op}_2}))=
    (f,(f,\transpose{(Df)}));(\sem{\op},(\sem{{\op}},\sem{\var;\var[2]\vdash \transpose{(D\op)}(x;y)}))=\\
 &   (f,(f,\transpose{Df}));(\sem{\op},(\sem{\op},\transpose{(D\sem{\op})}))=
     (f;\sem{\op},(f;\sem{\op}, x\mapsto v\mapsto 
    \transpose{Df}(x)(\transpose{D\sem{\op}}(f(x))(v))
    ))=\\
    & (f;\sem{\op},(f;\sem{\op}, x\mapsto v\mapsto 
    \transpose{(Df(x);D\sem{\op}(f(x)))}
    (v))
    )=
     (f;\sem{\op},(f;\sem{\op}, \transpose{(D(f;\sem{\op}))} ))\in P_{\reals^m}^r.
\end{align*}}}\\
\noindent Consequently, we obtain our Cartesian closed functors $\semgl{-}$ and $\semglrev{-}$.

Further, observe that  $\Sigma_{\sem{-}}\sem{-}(\trm_1,\trm_2)\defeq (\sem{\trm_1},\sem{\trm_2})$
defines a Cartesian closed functor $\Sigma_{\sem{-}}\sem{-}:\Sigma_{\CSyn}\LSyn\to \Sigma_{\Diff}\DiffMon$.
Similarly, we get a Cartesian closed functor
$\Sigma_{\sem{-}}\sem{-}^{op}:\Sigma_{\CSyn}\LSyn^{op}\to \Sigma_{\Diff}\DiffMon^{op}$.
As a consequence, the two squares below commute.\vspace{-22pt}
\begin{figure}[!h]
\begin{tikzcd}
    \Syn \arrow[r, "\sPair{\id}{\Dsynsymbol}"] \arrow[d, "\semgl{-}"'] & \Syn\times \Sigma_{\CSyn}\LSyn \arrow[d, "\sem{-}\times\Sigma_{\sem{-}}\sem{-}"] &  & \hspace{-10pt}\Syn \arrow[r, "\sPair\id\Dsynrevsymbol"] \arrow[d, "\semglrev{-}"'] & \Syn\times\Sigma_{\CSyn}\LSyn^{op} \arrow[d, "\sem{-}\times\Sigma_{\sem{-}}\sem{-}^{op}"] \\
    \Gl \arrow[r, "\pi_1"']                                            & \Diff\times\Sigma_{\Diff}\DiffMon                                               & &\hspace{-10pt} \GlRev \arrow[r, "\pi_1"']                                           & \Diff\times\Sigma_{\Diff}\DiffMon^{op}.                                                       
    \end{tikzcd}\vspace{-22pt}\end{figure}\\
\noindent Indeed, going around the squares in both directions define Cartesian closed functors
that agree on their action on $\reals^n$ and all operations $\op$.
So, by the universal property 
of $\Syn$, they must coincide.
In particular, $(\sem{\trm}, (\sem{\Dsyn{\trm}_1},\sem{\Dsyn{\trm}_2}))$ is a morphism in $\Gl$
and therefore respects the logical relations $P^f$ for any well-typed term $\trm$ of the source language of 
\S  \ref{sec:language}.
Similarly, $(\sem{\trm}, (\sem{\Dsynrev{\trm}_1},\sem{\Dsynrev{\trm}_2}))$ is a morphism in $\GlRev$
and therefore respects the logical relations $P^r$.

Most of the work is now in place to show correctness of 
AD.
We finish the proof below.
To ease notation, we work with terms in a context with a single type.
Doing so is not a restriction as our language has products,
and the theorem holds for arbitrary terms between 
first-order types.
\begin{theorem}[Correctness of AD]\label{thm:AD-correctness}
For programs $\var:\ty\vdash \trm:\ty[2]$ between first-order types $\ty$ and $\ty[2]$,\vspace{-4pt}
$$
\sem{\Dsyn{\trm}_1}=\sem{\trm}\qquad \sem{\Dsyn{\trm}_2}=D\sem{\trm}
\qquad\sem{\Dsynrev{\trm}_1}=\sem{\trm}\qquad\sem{\Dsynrev{\trm}_2}=\transpose{D\sem{\trm}},\vspace{-4pt}
$$
where we write $D$ and $\transpose{(-)}$ for the usual calculus derivative and matrix transpose.
\end{theorem}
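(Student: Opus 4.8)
Almost all of the work is already in place. The two commuting squares, together with faithfulness of $\pi_1$, show that $(\sem{\trm},(\sem{\Dsyn{\trm}_1},\sem{\Dsyn{\trm}_2}))$ is a morphism $\semgl{\ty}\to\semgl{\ty[2]}$ in $\Gl$, and likewise $(\sem{\trm},(\sem{\Dsynrev{\trm}_1},\sem{\Dsynrev{\trm}_2}))$ is a morphism in $\GlRev$. Being morphisms in the subscone, they respect the logical relations: postcomposition sends $P^f_{\ty}$ into $P^f_{\ty[2]}$ and $P^r_{\ty}$ into $P^r_{\ty[2]}$. The plan is to feed these respecting properties a well-chosen element of $P^f_{\ty}$ (resp.\ $P^r_{\ty}$) and read the desired equation off the output. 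Since $\ty$ is first order, $\sem{\ty}\cong\RR^N$, and by the component-wise computation already carried out above the abstract relation $P^f_{\ty}$ coincides with the concrete diagonal-plus-derivative relation $\set{(\gamma,(\gamma,D\gamma))\mid \gamma\in\Diff(\RR,\sem{\ty})}$; the same holds for $P^r_{\ty}$ with $\transpose{(D\gamma)}$ in place of $D\gamma$.

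For forward AD, fix a point $x\in\sem{\ty}\cong\RR^N$ and a tangent vector $v$, and take the affine curve $\gamma(t)\defeq x+t\cdot v$, which is smooth and satisfies $\gamma(0)=x$ and $D\gamma(0)(1)=v$. Then $(\gamma,(\gamma,D\gamma))\in P^f_{\ty}$, so its image lands in $P^f_{\ty[2]}$. Unwinding the composite in $\Sigma_{\Diff}\DiffMon$ (using the change-of-base $\DiffMon(\gamma)$ and the pointwise composition of $\DiffMon(\RR)$), the first component of the output is $\gamma;\sem{\Dsyn{\trm}_1}$ and the second is $t\mapsto D\gamma(t);\sem{\Dsyn{\trm}_2}(\gamma(t))$. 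Membership in $P^f_{\ty[2]}$ forces $\gamma;\sem{\Dsyn{\trm}_1}=\gamma;\sem{\trm}$, so evaluating at $t=0$ for arbitrary $x$ gives $\sem{\Dsyn{\trm}_1}=\sem{\trm}$, and it forces the second component to equal $D(\gamma;\sem{\trm})$. Evaluating the latter at $t=0$ and at $1\in\cRR$ and invoking the chain rule yields $\sem{\Dsyn{\trm}_2}(x)(v)=D(\gamma;\sem{\trm})(0)(1)=D\sem{\trm}(x)(v)$. As $x,v$ were arbitrary, $\sem{\Dsyn{\trm}_2}=D\sem{\trm}$.

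The reverse case runs along the same lines, now inside $\Sigma_{\Diff}\DiffMon^{op}$, where the second-component composition is reversed. Using the same affine $\gamma$ but the element $(\gamma,(\gamma,\transpose{(D\gamma)}))\in P^r_{\ty}$, respecting of $P^r_{\ty[2]}$ forces, for every cotangent $w\in\sem{\ty[2]}_2$ and at $t=0$, the identity $\transpose{(D\gamma(0))}(\sem{\Dsynrev{\trm}_2}(x)(w))=\transpose{(D(\gamma;\sem{\trm})(0))}(w)$. Since $\transpose{(D\gamma(0))}(u)=\innerprod{v}{u}$ and, by the chain rule, $D(\gamma;\sem{\trm})(0)(1)=D\sem{\trm}(x)(v)$, this reads $\innerprod{v}{\sem{\Dsynrev{\trm}_2}(x)(w)}=\innerprod{D\sem{\trm}(x)(v)}{w}$. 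By the defining adjunction property of the transpose, $\innerprod{D\sem{\trm}(x)(v)}{w}=\innerprod{v}{\transpose{(D\sem{\trm}(x))}(w)}$, and since $v$ is arbitrary we conclude $\sem{\Dsynrev{\trm}_2}(x)(w)=\transpose{(D\sem{\trm}(x))}(w)$, i.e.\ $\sem{\Dsynrev{\trm}_2}=\transpose{D\sem{\trm}}$; the equation $\sem{\Dsynrev{\trm}_1}=\sem{\trm}$ follows exactly as before.

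The genuinely load-bearing step is the reduction that makes curves sufficient: for first-order $\ty$ every pair (point, tangent vector) is realized by an actual smooth curve into $\sem{\ty}\cong\RR^N$, and the abstract subscone relation $P^f_{\ty}$ (resp.\ $P^r_{\ty}$) coincides there with the concrete diagonal-plus-(transposed-)derivative relation. This is precisely where first-order-ness is indispensable: at a function type the subscone exponential produces a quantified, non-diagonal predicate with no curve representation, so the clean reading-off above breaks down. The remaining manipulations---unwinding the two Grothendieck compositions and the single application of the chain rule, plus the adjointness of the transpose in the reverse case---are routine once the curve is in hand.
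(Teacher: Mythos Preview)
Your proof is correct and follows essentially the same route as the paper: feed the subscone-preservation property a curve $\gamma$ with prescribed $\gamma(0)=x$ and $D\gamma(0)(1)=v$, then evaluate at $0$ (and $1$) to read off the equations. The only cosmetic difference is that for reverse mode the paper instantiates $v$ at the standard basis vectors $e_i$ and concludes via $\innerprod{e_i}{-}$, whereas you keep $v$ arbitrary and conclude via non-degeneracy of the inner product; these are equivalent finishing moves.
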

\begin{proof}[Proof (sketch, see \citeappx{B} for details)]
    To show that $\sem{\Dsyn{\trm}_1}(x)=\sem{\trm}(x)$ and
    $\sem{\Dsyn{\trm}_2}(x)(v)=D\sem{\trm}(x)(v)$,
    we choose a smooth curve $\gamma: \RR\to\sem{\ty}$ such that $\gamma(0)=0$ and $D\gamma(0)(1)=v$
    and use that $\trm$ respects the logical relations $P^f$.

    To show that $\sem{\Dsynrev{\trm}_1}(x)=\sem{\trm}(x)$ and $\sem{\Dsynrev{\trm}_2}(x)(v)=
    \transpose{D\sem{\trm}(x)}(v)$, we~choose smooth curves $\gamma_i:\RR\to\sem{\ty}$ such that 
    $\gamma_i(0)=x$ and $\gamma_i(0)(1)=e_i$, for all standard basis vectors $e_i$ of $\sem{\Dsynrev{\ty}_2}\cong \cRR^N$.
    It now follows that 
    $\sem{\Dsynrev{\trm}_1}(x)=\sem{\trm}(x)$ and $\innerprod{e_i}{\sem{\Dsynrev{\trm}_2}(x)(v)}=\innerprod{e_i}{
    \transpose{D\sem{\trm}(x)}(v)}$
    as $\trm$ respects the logical relations $P^r$.
\end{proof}
\sqsection{Practical Relevance and Implementation}\label{sec:implementation}
\label{sec:practical-short}
Popular functional languages, such as Haskell and O'Caml, do not 
natively support linear types.
As such, the transformations described in this paper may seem 
hard to implement. 
However, as we summarize in this section (and detail~in 
\citeappx  C), we can easily implement the limited linear types 
needed for the transformations
as abstract data types by using merely a basic module~system.

Specifically, we consider, as an alternative, applied target language for our transformations,
the extension of the source language of \S \ref{sec:language} with the terms and types of Fig. \ref{fig:types-maps-short}.
We can define a faithful translation $(-)^\dagger$ from our linear target language of \S\ref{sec:minimal-linear-language}
to this language: define $(!\ty\otimes \cty[2])^\dagger\defeq \Map{\ty^\dagger,\cty[2]^\dagger}$,
$(\cty\multimap \cty[2])^\dagger\defeq \LinFun{\cty^\dagger}{\cty[2]^\dagger}$,
$(\creals^n)^\dagger\defeq \reals^n$
and extend $(-)^\dagger$ structurally recursively, letting it preserve all other type formers.
We then translate
$(\var_1:\ty,\ldots,\var_n:\ty;\var[2]:\cty[2]\vdash \trm:{\cty[3]})^\dagger\defeq
\var_1:\ty^\dagger,\ldots,\var_n:\ty^\dagger\vdash \trm^\dagger:{(\cty[2]\multimap\cty[3])^\dagger}$
and $(\var_1:\ty,\ldots,\var_n:\ty\vdash \trm:{\ty[2]})^\dagger\defeq 
\var_1:\ty^\dagger,\ldots,\var_n:\ty^\dagger\vdash \trm^\dagger : \ty[2]^\dagger$.
We believe an interested reader can fill in the details.
This exhibits the linear target language as a sublanguage of the applied target language.
The applied target language merely collapses the distinction between linear and Cartesian types
and it adds the constructs $\applin{\trm}{\trm[2]}$ for practical usability and to 
ensure that our adequacy result below is meaningful.

\begin{figure}[!t]
    \framebox{\resizebox{\linewidth}{!}{\begin{minipage}{1.17\linewidth}\noindent
    \input{type-system-maps}\end{minipage}}}
    \caption{Typing rules for the applied target language, to extend the source language.\label{fig:types-maps-short}}
\vspace{-10pt}\end{figure}

    We can implement the API of Fig. \ref{fig:types-maps-short} as a module that defines 
    the abstract types $\LinFun{\ty}{\ty[2]}$, under the hood implemented as a plain function type $\ty\To\ty[2]$, and $\Map{\ty}{\ty[2]}$, which is implemented as lists of pairs $\List{\ty\t*\ty[2]}$.
    Then, the required terms of Fig. \ref{fig:types-maps-short}  can be implemented as follows, using standard idiom $\EmptyList$, $\ListCons{\trm}{\trm[2]}$, $\ListFold{op}{\var}{\trm}{acc}{init}$ for empty lists, cons-ing, and folding:\vspace{-4pt}\\
    \resizebox{\linewidth}{!}{\parbox{\linewidth}{
    \begin{align*}
    &\zero_{\Unit} = \tUnit
    \quad\trm +_{\Unit} \trm[2] = \tUnit
    \quad\zero_{\cty\t*\cty[2]} = \tPair{\zero_{\cty}}{\zero_{\cty[2]}}
    \quad \trm +_{\cty\t*\cty[2]}\trm[2] = \tPair{\tFst\trm+_{\cty}\tFst\trm[2]}{\tSnd\trm+_{\cty[2]}\tSnd\trm[2]}
\\
    &\zero_{\ty\To\cty[2]} = \fun{\_}\zero_{\cty[2]}\quad \trm +_{\ty\To\cty[2]} \trm[2] = \fun{\var}\trm\,\var +_{\cty[2]} \trm[2]\,\var
    \quad 
    \zero_{\LinFun{\ty}{\cty[2]}} = \fun{\_}\zero_{\cty[2]}
    \quad
    \trm +_{\LinFun{\ty}{\cty[2]}} \trm[2] = \fun{\var}\trm\,\var +_{\cty[2]} \trm[2]\,\var
    \\ 
    &\zero_{\Map{\ty}{\ty[2]}}\defeq \EmptyList\quad \trm +_{\Map{\ty}{\ty[2]}} \trm[2] \defeq \ListFold{\ListCons{\var}{acc}}{\var}{\trm}{acc}{\trm[2]}
    \\ &\linearid \defeq \fun{\var}\var  \quad \trm\lcomp\trm[2] \defeq \fun{\var}\trm[2]\,(\trm\,\var)\quad \applin{\trm}{\trm[2]}\defeq \trm\,\trm[2]
    \quad \lswap\,\trm\defeq \fun{\var}\fun{\var[2]}\trm\,\var[2]\,\var
    \quad \leval{\trm}\defeq \fun{\var}\var\,\trm 
    \end{align*}}}\\
    \resizebox{\linewidth}{!}{\parbox{\linewidth}{
        \begin{align*}
  & \lsing{\trm}\defeq \fun{\var}\ListCons{\tPair{\trm}{\var}}{\EmptyList}\quad  \inv\lcurry\trm\defeq \fun{\var[3]}\ListFold{\trm\, (\tFst\var)\,(\tSnd\var)+acc}{\var}{\var[3]}{acc}{\zero}
    \\ &\lFst\defeq \fun\var {\tFst\var}\quad \lSnd\defeq \fun\var{\tSnd\var}
    \quad \lPair{\trm}{\trm[2]}\defeq \fun\var{\tPair{\trm\,\var}{\trm[2]\,\var}}
    \end{align*}}}\vspace{-2pt}\\
    Our denotational semantics extends to this applied target language and is adequate with respect 
    to the operational semantics induced by the suggested implementation.
    Further, our correctness proofs of the induced source-code translations also transfer to this 
    applied setting, and they can be usefully phrased as manual, extensible logical relations proofs.
    As an application, we can extend our source language with higher-order primitives, like
   $\tMap\in \Syn((\reals\To\reals)\t*\reals^n,\reals^n)$  to ``map'' functions over the black-box arrays $\reals^n$.
   Then, our proofs extend to show that their correct forward and reverse derivatives are\vspace{-4pt}\\
   \resizebox{\linewidth}{!}{\parbox{\linewidth}{
   \begin{align*}
       &\Dsyn{\tMap}_1(f,v)\defeq \tMap (f;\tFst, v)
       \quad \Dsyn{\tMap}_2(f,v)(g,w)\defeq \tMap\,g\,v+\mathbf{zipWith} (f;\tSnd)\,v\,w \\[-2pt]
       &\Dsynrev{\tMap}_1(f,v)\defeq \tMap (f;\tFst, v)\quad
       \Dsynrev{\tMap}_2(f,v)(w)\;\;\;\defeq \tPair{\mathbf{zip}\,v\,w}{\mathbf{zipWith}\,(f;\tSnd)\,v\,w},\\[-18pt]
       \end{align*}}}\\
   where we use the standard functional programming idiom $\mathbf{zip}$ and 
   $\mathbf{zipWith}$.
   Here, we can operate directly 
on the internal representations of $\LinFun{\ty}{\ty[2]}$ and $\Map{\ty}{\ty[2]}$,
as the definitions of derivatives of primitives live inside our module.
\vspace{-12pt}
\vspace{-2pt}\sqsection{Related and Future Work}\label{sec:related-work}\vspace{-2pt}
\subsubsection{Related work}This work is closely related to \cite{hsv-fossacs2020},
which introduced a similar semantic correctness proof for a version 
of forward-mode AD, using a subsconing construction.
A major difference is that this paper also phrases and proves 
correctness of reverse-mode AD on a $\lambda$-calculus and relates reverse-mode 
to forward-mode AD.
Using a syntactic logical relations proof instead, \cite{bcdg-open-logical-relations}
also proves correctness of forward-mode AD.
Again, it does not address reverse AD.

\cite{rev-deriv-cat2020} proposes a similar construction to that of
\S\ref{sec:self-dualization}, and it relates it to the
differential $\lambda$-calculus.
This paper develops sophisticated axiomatics for
semantic reverse differentiation. 
However, it neither relates the semantics to a source-code transformation,
nor discusses differentiation of higher-order functions.
Our construction of differentiation with a (biadditive) linear target language 
might remind the reader of differential linear logic \cite{ehrhard2018introduction}.
In differential linear logic, (forward) differentiation is a first-class operation 
in a (biadditive) linear language. 
By contrast, in our treatment, differentiation is a meta-operation.

Importantly, \cite{elliott2018simple} describes and implements
 what are essentially our source-code transformations, though they were restricted to 
first-order functions and scalars.
\cite{vytiniotis2019differentiable} sketches an extension of the reverse-mode transformation
to higher-order functions in essentially the same way as proposed in this paper.
It does not motivate or derive the algorithm or show its correctness.
Nevertheless, this short paper discusses important practical considerations for
 implementing the algorithm, and it discusses a dependently typed 
variant of the algorithm.

Next, there are various lines of work relating to correctness of 
reverse-mode AD that we consider less similar to our work.
For example, \cite{mak-ong2020}
define and prove correct a formulation of reverse-mode AD on a higher-order 
language that depends on a non-standard operational semantics, essentially 
a form of symbolic execution. \cite{abadi-plotkin2020} does something similar 
for reverse-mode AD on a first-order language extended with conditionals and iteration.
\cite{brunel2019backpropagation} defines an AD algorithm 
in a simply typed $\lambda$-calculus with linear negation
(essentially, the continuation-based AD of \cite{hsv-fossacs2020})
and proves it correct using operational techniques.
Further, they show that this algorithm corresponds to reverse-mode AD 
under a non-standard operational semantics (with the ``linear factoring rule'').
These formulations of reverse-mode AD all depend on non-standard run-times 
and fall into the category of ``define-by-run'' 
formulations of reverse-mode AD. 
Meanwhile, we are concerned with ``define-then-run'' formulations:
source-code transformations producing differentiated code at compile-time,
which can then be optimized during compilation with existing compiler 
tool-chains.

Finally, there is a long history of work on reverse-mode AD,
though almost none of it applies the technique to higher-order 
functions.
A notable exception is \cite{pearlmutter2008reverse}, which 
gives an impressive source-code 
transformation implementation of reverse AD in Scheme.
While very efficient, this implementation crucially uses mutation.
Moreover, the transformation is complex and correctness is not considered.
More recently, \cite{wang2018demystifying} describes a much simpler 
implementation of a reverse AD code transformation, again very performant.
However, the transformation is quite different from the one 
considered in this paper as it relies on a combination of delimited continuations 
and mutable state.
Correctness is not considered, perhaps because of the semantic complexities
introduced by impurity.

Our work adds to the existing literature by presenting (to our knowledge) the first principled
and pure define-then-run reverse AD algorithm for a higher-order language,
by arguing its practical applicability,
and by proving semantic correctness of the algorithm.
\vspace{-3pt}
\subsubsection{Future work}
We plan to build a practical, verified AD library based on the methods introduced in this paper.
This will involve calculating the derivative of many first- and higher-order
primitives according to our method.

Next, we aim to extend our method to other expressive language features.
We conjecture that the method extends to source languages with variant and inductive types
as long as one makes the target language a linear dependent type theory \cite{cervesato2002linear,vakar2015categorical}.
Indeed, the dimension of (co)tangent spaces to a disjoint union of spaces depends on the choice of base point.
The required colimits to interpret such types in $\Sigma_{\catC}\catL$ and $\Sigma_{\catC}\catL^{op}$
should exist by standard results about arrow and container categories \cite{abbott2003categories}.
We are hopeful that the method can also be made to apply to source languages with general recursion
by calculating the derivative of fixpoint combinators similarly to our calculation for $\tMap$.
The correctness proof will then rely on a domain theoretic generalisation of our techniques \cite{vakar2020denotational}.

\vspace{-3pt}

\sqsubsubsection{Acknowledgements}                            
This project has received funding from the European Union’s Horizon 2020 research and innovation
programme under the Marie\linebreak Skłodowska-Curie grant agreement No. 895827.
We thank Michael Betancourt, Philip de Bruin, Bob Carpenter, Mathieu Huot, Danny de Jong, Ohad Kammar, Gabriele Keller, 
Pieter Knops,
Curtis Chin Jen Sem, Amir Shaikhha, Tom Smeding,
and Sam Staton for helpful discussions about automatic differentiation.

\clearpage
\bibliographystyle{splncs04}
\bibliography{bibliography}

\begin{thebibliography}{10}
\providecommand{\url}[1]{\texttt{#1}}
\providecommand{\urlprefix}{URL }
\providecommand{\doi}[1]{https://doi.org/#1}

\bibitem{abadi2016tensorflow}
Abadi, M., Barham, P., Chen, J., Chen, Z., Davis, A., Dean, J., Devin, M.,
  Ghemawat, S., Irving, G., Isard, M., et~al.: Tensorflow: A system for
  large-scale machine learning. In: 12th {U}{S}{E}{N}{I}{X} Symposium on
  Operating Systems Design and Implementation ({O}{S}{D}{I} 16). pp. 265--283
  (2016)

\bibitem{abadi-plotkin2020}
Abadi, M., Plotkin, G.D.: A simple differentiable programming language. In:
  Proc.~POPL 2020. ACM (2020)

\bibitem{abbott2003categories}
Abbott, M., Altenkirch, T., Ghani, N.: Categories of containers. In:
  International Conference on Foundations of Software Science and Computation
  Structures. pp. 23--38. Springer (2003)

\bibitem{barber1996dual}
Barber, A., Plotkin, G.: Dual intuitionistic linear logic. University of
  Edinburgh, Department of Computer Science, Laboratory for Foundations of
  Computer Science (1996)

\bibitem{bcdg-open-logical-relations}
Barthe, G., Crubill\'e, R., Lago, U.D., Gavazzo, F.: On the versatility of open
  logical relations: Continuity, automatic differentiation, and a containment
  theorem. In: Proc.~ESOP 2020. Springer (2020), to appear

\bibitem{benton1994mixed}
Benton, P.N.: A mixed linear and non-linear logic: Proofs, terms and models.
  In: International Workshop on Computer Science Logic. pp. 121--135. Springer
  (1994)

\bibitem{betancourt2020discrete}
Betancourt, M., Margossian, C.C., Leos-Barajas, V.: The discrete adjoint
  method: Efficient derivatives for functions of discrete sequences. arXiv
  preprint arXiv:2002.00326  (2020)

\bibitem{blute2012convenient}
Blute, R., Ehrhard, T., Tasson, C.: A convenient differential category. Cahiers
  de topologie et g{\'e}om{\'e}trie diff{\'e}rentielle cat{\'e}goriques
  \textbf{53}(3),  211--232 (2012)

\bibitem{brunel2019backpropagation}
Brunel, A., Mazza, D., Pagani, M.: Backpropagation in the simply typed
  lambda-calculus with linear negation. In: Proc.~POPL 2020 (2020)

\bibitem{carpenter2015stan}
Carpenter, B., Hoffman, M.D., Brubaker, M., Lee, D., Li, P., Betancourt, M.:
  The {S}tan math library: Reverse-mode automatic differentiation in {C}++.
  arXiv preprint arXiv:1509.07164  (2015)

\bibitem{cervesato2002linear}
Cervesato, I., Pfenning, F.: A linear logical framework. Information and
  Computation  \textbf{179}(1),  19--75 (2002)

\bibitem{rev-deriv-cat2020}
Cockett, J.R.B., Cruttwell, G.S.H., Gallagher, J., Lemay, J.S.P., MacAdam, B.,
  Plotkin, G.D., Pronk, D.: Reverse derivative categories. In: Proc.~CSL 2020
  (2020)

\bibitem{curien1986categorical}
Curien, P.L.: Categorical combinators. Information and Control
  \textbf{69}(1-3),  188--254 (1986)

\bibitem{curien1985typed}
Curien, P.L.: Typed categorical combinatory logic. In: Colloquium on Trees in
  Algebra and Programming. pp. 157--172. Springer (1985)

\bibitem{egger2009enriching}
Egger, J., M{\o}gelberg, R.E., Simpson, A.: Enriching an effect calculus with
  linear types. In: International Workshop on Computer Science Logic. pp.
  240--254. Springer (2009)

\bibitem{ehrhard2018introduction}
Ehrhard, T.: An introduction to differential linear logic: proof-nets, models
  and antiderivatives. Mathematical Structures in Computer Science
  \textbf{28}(7),  995--1060 (2018)

\bibitem{elliott2018simple}
Elliott, C.: The simple essence of automatic differentiation. Proceedings of
  the ACM on Programming Languages  \textbf{2}(ICFP), ~70 (2018)

\bibitem{fiore2007differential}
Fiore, M.P.: Differential structure in models of multiplicative biadditive
  intuitionistic linear logic. In: International Conference on Typed Lambda
  Calculi and Applications. pp. 163--177. Springer (2007)

\bibitem{frolicher1982smooth}
Fr{\"o}licher, A.: Smooth structures. In: Category theory. pp. 69--81. Springer
  (1982)

\bibitem{frolicher1988linear}
Fr\"olicher, A.: Linear spaces and differentiation theory. Pure and Applied
  Mathematics  (1988)

\bibitem{hsv-fossacs2020}
Huot, M., Staton, S., Vákár, M.: Correctness of automatic differentiation via
  diffeologies and categorical gluing. In: Proc.~FoSSaCS (2020)

\bibitem{iglesias2013diffeology}
Iglesias-Zemmour, P.: Diffeology. American Mathematical Soc. (2013)

\bibitem{innes2018don}
Innes, M.: {Don't unroll adjoint: differentiating SSA-Form programs}. arXiv
  preprint arXiv:1810.07951  (2018)

\bibitem{johnstone2002sketches}
Johnstone, P.T.: Sketches of an elephant: A topos theory compendium, vol.~2.
  Oxford University Press (2002)

\bibitem{johnstone-lack-sobocinski}
Johnstone, P.T., Lack, S., Sobocinski, P.: Quasitoposes, quasiadhesive
  categories and {A}rtin glueing. In: Proc.~CALCO 2007 (2007)

\bibitem{kock2006synthetic}
Kock, A.: Synthetic differential geometry, vol.~333. Cambridge University Press
  (2006)

\bibitem{kriegl1997convenient}
Kriegl, A., Michor, P.W.: The convenient setting of global analysis, vol.~53.
  American Mathematical Soc. (1997)

\bibitem{lambek1988introduction}
Lambek, J., Scott, P.J.: Introduction to higher-order categorical logic,
  vol.~7. Cambridge University Press (1988)

\bibitem{levy2012call}
Levy, P.B.: Call-by-push-value: A Functional/imperative Synthesis, vol.~2.
  Springer Science \& Business Media (2012)

\bibitem{mak-ong2020}
Mak, C., Ong, L.: A differential-form pullback programming language for
  higher-order reverse-mode automatic differentiation (2020), arxiv:2002.08241

\bibitem{mellies2009categorical}
Mellies, P.A.: Categorical semantics of linear logic. Panoramas et syntheses
  \textbf{27},  15--215 (2009)

\bibitem{paszke2017automatic}
Paszke, A., Gross, S., Chintala, S., Chanan, G., Yang, E., DeVito, Z., Lin, Z.,
  Desmaison, A., Antiga, L., Lerer, A.: Automatic differentiation in pytorch
  (2017)

\bibitem{pearlmutter2008reverse}
Pearlmutter, B.A., Siskind, J.M.: Reverse-mode {AD} in a functional framework:
  Lambda the ultimate backpropagator. ACM Transactions on Programming Languages
  and Systems (TOPLAS)  \textbf{30}(2), ~7 (2008)

\bibitem{shaikhha2019efficient}
Shaikhha, A., Fitzgibbon, A., Vytiniotis, D., Peyton~Jones, S.: Efficient
  differentiable programming in a functional array-processing language.
  Proceedings of the ACM on Programming Languages  \textbf{3}(ICFP), ~97 (2019)

\bibitem{souriau1980groupes}
Souriau, J.M.: Groupes diff{\'e}rentiels. In: Differential geometrical methods
  in mathematical physics, pp. 91--128. Springer (1980)

\bibitem{tsiros2019population}
Tsiros, P., Bois, F.Y., Dokoumetzidis, A., Tsiliki, G., Sarimveis, H.:
  Population pharmacokinetic reanalysis of a diazepam pbpk model: a comparison
  of stan and gnu mcsim. Journal of Pharmacokinetics and Pharmacodynamics
  \textbf{46}(2),  173--192 (2019)

\bibitem{vakar2015categorical}
V{\'a}k{\'a}r, M.: A categorical semantics for linear logical frameworks. In:
  International Conference on Foundations of Software Science and Computation
  Structures. pp. 102--116. Springer (2015)

\bibitem{vakar2020denotational}
V{\'a}k{\'a}r, M.: Denotational correctness of forward-mode automatic
  differentiation for iteration and recursion. arXiv preprint arXiv:2007.05282
  (2020)

\bibitem{vytiniotis2019differentiable}
Vytiniotis, D., Belov, D., Wei, R., Plotkin, G., Abadi, M.: The differentiable
  curry  (2019)

\bibitem{wang2018demystifying}
Wang, F., Wu, X., Essertel, G., Decker, J., Rompf, T.: Demystifying
  differentiable programming: Shift/reset the penultimate backpropagator.
  Proceedings of the ACM on Programming Languages  \textbf{3}(ICFP) (2019)

\end{thebibliography}


\vfill

{\small\medskip\noindent{\bf Open Access} This chapter is licensed under the terms of the Creative Commons\break Attribution 4.0 International License (\url{http://creativecommons.org/licenses/by/4.0/}), which permits use, sharing, adaptation, distribution and reproduction in any medium or format, as long as you give appropriate credit to the original author(s) and the source, provide a link to the Creative Commons license and indicate if changes were made.}

{\small \spaceskip .28em plus .1em minus .1em The images or other third party material in this chapter are included in the chapter's Creative Commons license, unless indicated otherwise in a credit line to the material.~If material is not included in the chapter's Creative Commons license and your intended\break use is not permitted by statutory regulation or exceeds the permitted use, you will need to obtain permission directly from the copyright holder.}

\medskip\noindent\includegraphics{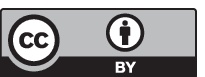}

\clearpage

\ifx\fossacsversion\undefined
\appendix 
\clearpage\sqsection{Defining the Core Algorithms: AD Source-Code Transformations}\label{appx:linear-ad-macros}
In particular, $\Sigma_{\CSyn}\LSyn$ and $\Sigma_{\CSyn}\LSyn^{op}$
are both Cartesian 
closed categories.
Hence, by the universal property of $\Syn$, we obtain unique structure-preserving 
macros $\Dsyn{-}:\Syn\to\Sigma_{\CSyn}\LSyn$ (forward AD) and $\Dsynrev{-}:\Syn\to\Sigma_{\CSyn}\LSyn^{op}$
(reverse AD)
once we fix a compatible definition on basic types $\reals^n$ and on basic operations $\op$.
That is, we need to choose suitable terms $D\op(\var;\var[2])$ and 
$\transpose{D\op}(\var;\var[2])$ below to represent to the forward and 
reverse-mode derivatives of the basic operations $\op\in\Op_{n_1,\ldots,n_k}^m$.
We choose these representations of derivatives as they allow for efficient 
Jacobian-vector and Jacobian-adjoint products, which are known to be 
important to achieve performant AD implementations.
\\
\resizebox{\linewidth}{!}{\parbox{\linewidth}{
\begin{align*}
    &\Dsyn{\reals^n}_1 \defeq \reals^n
    && {\Dsyn{\reals^n}_2} \defeq \creals^n
    \\
    &\Dsyn{\op}_1\defeq \op
    && \Dsyn{\op}_2\defeq \var:\reals^{n_1}\t*\ldots\t*\reals^{n_k};\var[2]:\creals^{n_1}\t*\ldots\t*\creals^{n_k}\vdash D\op(\var;\var[2]):\creals^m\\
    &\Dsynrev{\reals^n}_1 \defeq \reals^n
    && \Dsyn{\reals^n}_2 \defeq \creals^n
    \\
    &\Dsynrev{\op}_1 \defeq \op
    && \Dsynrev{\op}_2\defeq \var:\reals^{n_1}\t*\ldots\t*\reals^{n_k};\var[2]:\creals^m\vdash 
    \transpose{D\op}(\var;\var[2]):\creals^{n_1}\t*\ldots\t*\creals^{n_k}
    \end{align*}}}\\
    For the AD transformations to be correct, it is important that these derivatives of language
    primitives are implemented correctly in the sense that
    $$
\sem{\var;\var[2]\vdash D\op(\var;\var[2])}=D\sem{\op}\qquad \sem{\var;\var[2]\vdash \transpose{D\op}(\var;\var[2])}=\transpose{D\sem{\op}}.
    $$
    The implementation of such derivatives for language primitives is a subtle 
    task that is constantly undertaken in practice by AD library developers, whenever
    a new primitive operation is added to the library.

    The extension of the AD macros $\Dsynsymbol$ and $\Dsynrevsymbol$ 
    to the full source language are now determined canonically as the unique 
    Cartesian closed functor extending the previous definitions.
    However, because of the counter-intuitive nature of the Cartesian closed
    structures on $\Sigma_{\CSyn}\LSyn$ and $\Sigma_{\CSyn}\LSyn^{op}$, we still 
    consider it worthwhile to list the resulting definitions here,
    particularly as these transformations lend themselves well to implementation 
    and are highly practically relevant.

\clearpage
\sqsubsection{Forward-Mode AD}
We define $\Dsyn{-}$ on types as 
\\
\resizebox{\linewidth}{!}{\parbox{\linewidth}{
\begin{align*}
&\Dsyn{\Unit}_1 \defeq \Unit 
&& {\Dsyn{\Unit }_2} \defeq \lUnit \\
&\Dsyn{\ty\t*\ty[2]}_1 \defeq \Dsyn{\ty}_1\t*\Dsyn{\ty[2]}_1
&& \Dsyn{\ty\t*\ty[2]}_2 \defeq  \Dsyn{\ty}_2\t*\Dsyn{\ty[2]}_2\\
&\Dsyn{\ty\To\ty[2]}_1 \defeq \Dsyn{\ty}_1\To(\Dsyn{\ty[2]}_1\t* (\Dsyn{\ty}_2\multimap \Dsyn{\ty[2]}_2))
&& \Dsyn{\ty\To\ty[2]}_2 \defeq  \Dsyn{\ty}_1 \To\Dsyn{\ty[2]}_2.
\end{align*}
On programs, we define it as 
\begin{align*}
&\Dsyn{\id[\ty]}_1\defeq \var:\Dsyn{\ty}_1\vdash \var:\Dsyn{\ty}_1\\
&\Dsyn{\id[\ty]}_2\defeq \var_1:\Dsyn{\ty}_1;\var_2:\Dsyn{\ty}_2
\vdash \var_2:\Dsyn{\ty}_2\\
&\Dsyn{\trm;\trm[2]}_1 \defeq \var:\Dsyn{\ty}_1\vdash \subst{\Dsyn{\trm[2]}_1}
{\sfor{\var[2]}{\Dsyn{\trm}_1}}:\Dsyn{\ty[3]}_1\\
&\qquad\textnormal{where }
\var:\Dsyn{\ty}_1\vdash \Dsyn{\trm}_1:\Dsyn{\ty[2]}_1\textnormal{ and }
\var[2]:\Dsyn{\ty[2]}_1\vdash \Dsyn{\trm[2]}_1:\Dsyn{\ty[3]}_1\\
&\Dsyn{\trm;\trm[2]}_2 \defeq 
\var_1:\Dsyn{\ty}_1;\var_2:\Dsyn{\ty}_2\vdash \subst{\Dsyn{\trm[2]}_2}
{\sfor{\var[2]_1}{\Dsyn{\trm}_1},\sfor{\var[2]_2}{\Dsyn{\trm}_2}}:\Dsyn{\ty[3]}_2\\
&\qquad\textnormal{where }
\var_1:\Dsyn{\ty}_1;\var_2:\Dsyn{\ty}_2\vdash \Dsyn{\trm}_2:\Dsyn{\ty[2]}_2
\textnormal{ and }
\var[2]_1:\Dsyn{\ty[2]}_1;\var[2]_2:\Dsyn{\ty[2]}_2\vdash \Dsyn{\trm[2]}_2:\Dsyn{\ty[3]}_2\\
&\Dsyn{\tUnit_{\ty}}_1 \defeq \_ :\Dsyn{\ty}_1\vdash \tUnit:\Unit\\
&\Dsyn{\tUnit_{\ty}}_2 \defeq \_ :\Dsyn{\ty}_1;
\_ :\Dsyn{\ty}_2\vdash \tUnit: \lUnit\\
&\Dsyn{\tPair{\trm}{\trm[2]}}_1 \defeq \tPair{\Dsyn{\trm}_1}{\Dsyn{\trm[2]}_1}\\
&\Dsyn{\tPair{\trm}{\trm[2]}}_2 \defeq \tPair{\Dsyn{\trm}_2}{\Dsyn{\trm[2]}_2}\\
&\Dsyn{\tFst_{\ty,\ty[2]}}_1 \defeq \var:\Dsyn{\ty}_1\t*\Dsyn{\ty[2]}_1\vdash 
\tFst\var:\Dsyn{\ty}_1 \\
&\Dsyn{\tFst_{\ty,\ty[2]}}_2 \defeq \_:\Dsyn{\ty}_1\t*\Dsyn{\ty[2]}_1;
\var[2]:\Dsyn{\ty}_2\t*\Dsyn{\ty[2]}_2\vdash \tFst\var[2]:\Dsyn{\ty}_2\\
&\Dsyn{\tSnd_{\ty,\ty[2]}}_1 \defeq \var:\Dsyn{\ty}_1\t*\Dsyn{\ty[2]}_1\vdash 
\tSnd\var:\Dsyn{\ty[2]}_1  \\
&\Dsyn{\tSnd_{\ty,\ty[2]}}_2 \defeq \_:\Dsyn{\ty}_1\t*\Dsyn{\ty[2]}_1;\var[2]:
\Dsyn{\ty}_2\t*\Dsyn{\ty[2]}_2\vdash \tSnd\var[2]:\Dsyn{\ty[2]}_2\\
&\Dsyn{\ev_{\ty,\ty[2]}}_1\defeq
\var: (\Dsyn{\ty}_1\To (\Dsyn{\ty[2]}_1\t* (\Dsyn{\ty}_2\multimap 
\Dsyn{\ty[2]}_2)))\t* \Dsyn{\ty}_1\vdash \tFst ((\tFst\var)\,(\tSnd\var)):\Dsyn{\ty[2]}_1\\
&\Dsyn{\ev_{\ty,\ty[2]}}_2 \defeq 
\var_1: (\Dsyn{\ty}_1\To (\Dsyn{\ty[2]}_1\t* (\Dsyn{\ty}_2\multimap 
\Dsyn{\ty[2]}_2)))\t* \Dsyn{\ty}_1;
\var_2:(\Dsyn{\ty}_1\To \Dsyn{\ty[2]}_2)\t* \Dsyn{\ty}_2\vdash\\
&\qquad \letin{\var[2]}{\tSnd \var_1}{}
(\tFst\var_2)\,\var[2] + \lapp{(\tSnd ((\tFst\var_1)\,\var[2]))}{\tSnd \var_2}
:\Dsyn{\ty[2]}_2\\
&\Dsyn{\Lambda_{\ty,\ty[2],\ty[3]}(\trm)}_1 \defeq 
\var:\Dsyn{\ty}\vdash \fun{\var[2]}\tPair
{\subst{\Dsyn{\trm}_1}{\sfor{\var[3]}{\tPair{\var}{\var[2]}}}}
{\lfun{\var[2]'}\subst{\Dsyn{\trm}_2}{
    \sfor{\var[3]_1}{\tPair{\var}{\var[2]}},
    \sfor{\var[3]_2}{\tPair{\zero}{\var[2]'}}
    }}\\
&\qquad :\Dsyn{\ty[2]}_1\To 
(\Dsyn{\ty[3]}_1\t* (\Dsyn{\ty[2]}_2\multimap \Dsyn{\ty[3]}_2))\\
&\Dsyn{\Lambda_{\ty,\ty[2],\ty[3]}(\trm)}_2 \defeq 
\var_1:\Dsyn{\ty}_1;\var_2:\Dsyn{\ty}_2\vdash 
\fun{\var[2]_1}\subst{\Dsyn{\trm}_2}{\sfor{\var[3]_1}{\tPair{\var_1}{\var[2]_1}},
\sfor{\var[3]_2}{\tPair{\var_2}{\zero}}}:\Dsyn{\ty[2]}_1\To\Dsyn{\ty[3]}_2\\
&\qquad \textnormal{where }
\var[3]:\Dsyn{\ty}_1\t*\Dsyn{\ty[2]}_1\vdash \Dsyn{\trm}_1:\Dsyn{\ty[3]}_1\textnormal{, }
\var[3]_1:\Dsyn{\ty}_1\t*\Dsyn{\ty[2]}_1;\var[3]_2:
\Dsyn{\ty}_2\t*\Dsyn{\ty[2]}_2\vdash \Dsyn{\trm}_2:\Dsyn{\ty[3]}_2\\
\end{align*}}}

\clearpage
\sqsubsection{Reverse-Mode AD}
We define $\Dsynrev{-}$ on types as 
\\
\resizebox{\linewidth}{!}{\parbox{\linewidth}{
\begin{align*}
&\Dsynrev{\Unit}_1 \defeq \Unit 
&& \Dsynrev{\Unit }_2 \defeq \lUnit \\
&\Dsynrev{\ty\t*\ty[2]}_1 \defeq \Dsynrev{\ty}_1\t*\Dsynrev{\ty[2]}_1
&& {\Dsynrev{\ty\t*\ty[2]}_2} \defeq  {\Dsynrev{\ty}_2}\t*{\Dsynrev{\ty[2]}_2}\\
&\Dsynrev{\ty\To\ty[2]}_1 \defeq \Dsynrev{\ty}_1\To(\Dsynrev{\ty[2]}_1\t* (\Dsynrev{\ty[2]}_2\multimap \Dsynrev{\ty}_2))
&& {\Dsynrev{\ty\To\ty[2]}_2} \defeq  !\Dsynrev{\ty}_1 \otimes\Dsynrev{\ty[2]}_2.
\end{align*}}}
On programs, we define it as 
\\
\resizebox{\linewidth}{!}{\parbox{\linewidth}{
\begin{align*}
    &\Dsynrev{\id[\ty]}_1\defeq \var:\Dsynrev{\ty}_1\vdash \var:\Dsynrev{\ty}_1\\
    &\Dsynrev{\id[\ty]}_2\defeq \var_1:\Dsynrev{\ty}_1;\var_2:\Dsynrev{\ty}_2
    \vdash \var_2:\Dsynrev{\ty}_2\\
    &\Dsynrev{\trm;\trm[2]}_1 \defeq \var:\Dsynrev{\ty}_1\vdash \subst{\Dsynrev{\trm[2]}_1}
    {\sfor{\var[2]}{\Dsynrev{\trm}_1}}:\Dsynrev{\ty[3]}_1\\
    &\qquad\textnormal{where }
    \var:\Dsynrev{\ty}_1\vdash \Dsynrev{\trm}_1:\Dsynrev{\ty[2]}_1\textnormal{ and }
    \var[2]:\Dsynrev{\ty[2]}_1\vdash \Dsynrev{\trm[2]}_1:\Dsynrev{\ty[3]}_1\\
    &\Dsynrev{\trm;\trm[2]}_2 \defeq 
    \var_1:\Dsynrev{\ty}_1;\var[2]_2:\Dsynrev{\ty[3]}_2\vdash 
    \subst{\subst{\Dsynrev{\trm}_2}{\sfor{\var_2}{\Dsynrev{\trm[2]}_2}}}{\sfor{\var[2]_1}{\Dsynrev{\trm}_1}}
:\Dsynrev{\ty[1]}_2\\
    &\qquad\textnormal{where }
    \var_1:\Dsynrev{\ty}_1;\var_2:\Dsynrev{\ty[2]}_2\vdash \Dsynrev{\trm}_2:\Dsynrev{\ty[1]}_2
    \textnormal{ and }
    \var[2]_1:\Dsynrev{\ty[2]}_1;\var[2]_2:\Dsynrev{\ty[3]}_2\vdash \Dsynrev{\trm[2]}_2:\Dsynrev{\ty[2]}_2\\
    &\Dsynrev{\tUnit_{\ty}}_1 \defeq \_ :\Dsynrev{\ty}_1\vdash \tUnit:\Unit\\
    &\Dsynrev{\tUnit_{\ty}}_2 \defeq \_ :\Dsynrev{\ty}_1;
    \_ :\lUnit\vdash \zero: \Dsynrev{\ty}_2\\
    &\Dsynrev{\tPair{\trm}{\trm[2]}}_1 \defeq \tPair{\Dsynrev{\trm}_1}{\Dsynrev{\trm[2]}_1}\\
    &\Dsynrev{\tPair{\trm}{\trm[2]}}_2 \defeq \var_1:\Dsynrev{\ty}_1;\var_2:\Dsynrev{\ty[2]}_2\t*\Dsynrev{\ty[3]}_2
\vdash \subst{\Dsynrev{\trm}_2}{\sfor{\var[2]_2}{\tFst\var_2}}+\subst{\Dsynrev{\trm[2]}_2}{\sfor{\var[3]_2}{\tSnd\var_2}}:
\Dsynrev{\ty}_2\\
&\qquad \textnormal{where }\var[2]_1:\Dsynrev{\ty}_1; \var[2]_2:\Dsynrev{\ty[2]}_2\vdash \Dsynrev{\trm}_2:\Dsynrev{\ty}_2
\textnormal{ and }\var[3]_1:\Dsynrev{\ty}_1; \var[3]_2:\Dsynrev{\ty[3]}_2\vdash \Dsynrev{\trm[2]}_2:\Dsynrev{\ty}_2\\
&\Dsynrev{\tFst_{\ty,\ty[2]}}_1 \defeq \var:\Dsynrev{\ty}_1\t*\Dsynrev{\ty[2]}_1\vdash 
\tFst\var:\Dsynrev{\ty}_1 \\
&\Dsynrev{\tFst_{\ty,\ty[2]}}_2 \defeq \_:\Dsynrev{\ty}_1\t*\Dsynrev{\ty[2]}_1;
    \var[2]:\Dsynrev{\ty}_2\vdash \tPair{\var[2]}{\zero}:\Dsynrev{\ty}_2\t* \Dsynrev{\ty[2]}_2\\
    &\Dsynrev{\tSnd_{\ty,\ty[2]}}_1 \defeq \var:\Dsynrev{\ty}_1\t*\Dsynrev{\ty[2]}_1\vdash 
    \tSnd\var:\Dsynrev{\ty[2]}_1  \\
       &\Dsynrev{\tSnd_{\ty,\ty[2]}}_2 \defeq \_:\Dsynrev{\ty}_1\t*\Dsynrev{\ty[2]}_1;\var[2]:
    \Dsynrev{\ty[2]}_2\vdash \tPair{\zero}{\var[2]}:\Dsynrev{\ty}_2\t*\Dsynrev{\ty[2]}_2\\
    &\Dsynrev{\ev_{\ty,\ty[2]}}_1\defeq
    \var:( \Dsynrev{\ty}_1\To (\Dsynrev{\ty[2]}_1\t* (\Dsynrev{\ty[2]}_2\multimap 
    \Dsynrev{\ty}_2)))\t* \Dsynrev{\ty}_1\vdash \tFst ((\tFst\var)\,(\tSnd\var)):\Dsynrev{\ty[2]}_1\\
    &\Dsynrev{\ev_{\ty,\ty[2]}}_2 \defeq 
    \var_1: (\Dsynrev{\ty}_1\To (\Dsynrev{\ty[2]}_1\t* (\Dsynrev{\ty[2]}_2\multimap 
    \Dsynrev{\ty}_2)))\t* \Dsynrev{\ty}_1;
    \var_2:\Dsynrev{\ty[2]}_2\vdash\\
    &\qquad \letin{\var[2]}{\tSnd\var_1}
    \tPair{!\var[2]\otimes \var_2}{\lapp{(\tSnd((\tFst\var_1)\,\var[2]))}{\var_2}}
    :(!\Dsynrev{\ty}_1\otimes \Dsynrev{\ty[2]}_2)\t* \Dsynrev{\ty}_2\\
    &\Dsynrev{\Lambda_{\ty,\ty[2],\ty[3]}(\trm)}_1 \defeq 
    \var:\Dsynrev{\ty}\vdash \fun{\var[2]}\subst{\tPair
    {\Dsynrev{\trm}_1}
    {\lfun{\var[3]_2}\tSnd\Dsynrev{\trm}_2}}{\sfor{\var[3]_1}{\tPair{\var}{\var[2]}}}\\
    &\qquad :\Dsynrev{\ty[2]}_1\To 
    (\Dsynrev{\ty[3]}_1\t* (\Dsynrev{\ty[3]}_2\multimap \Dsynrev{\ty[2]}_2))\\\
    &\Dsynrev{\Lambda_{\ty,\ty[2],\ty[3]}(\trm)}_2 \defeq 
    \var_1:\Dsynrev{\ty}_1;\var_2:!\Dsynrev{\ty[2]}_1\otimes\Dsynrev{\ty[3]}_2\vdash 
    \tensMatch{\var_2}{!\var[2]}{\var[3]_2}
    {\subst{\tFst\Dsynrev{\trm}_2}{\sfor{\var[3]_1}{\tPair{\var_1}{\var[2]}}}}:
    \Dsynrev{\ty}_2\\
    &\qquad \textnormal{where }
    \var[3]_1:\Dsynrev{\ty}_1\t*\Dsynrev{\ty[2]}_1\vdash \Dsynrev{\trm}_1:\Dsynrev{\ty[3]}_1\textnormal{, }
    \var[3]_1:\Dsynrev{\ty}_1\t*\Dsynrev{\ty[2]}_1;\var[3]_2:\Dsynrev{\ty[3]}_2
    \vdash \Dsynrev{\trm}_2:\Dsynrev{\ty}_2\t*\Dsynrev{\ty[2]}_2\\
    \end{align*}}}
\clearpage
\section{Proof of theorem 1}\label{appx:correctness-proof}

\begin{theorem}[Correctness of AD]\label{thm:AD-correctness}
    For programs $\var:\ty\vdash \trm:\ty[2]$ between first-order types $\ty$ and $\ty[2]$,\vspace{-4pt}
    $$
    \sem{\Dsyn{\trm}_1}=\sem{\trm}\qquad \sem{\Dsyn{\trm}_2}=D\sem{\trm}
    \qquad\sem{\Dsynrev{\trm}_1}=\sem{\trm}\qquad\sem{\Dsynrev{\trm}_2}=\transpose{D\sem{\trm}},\vspace{-4pt}
    $$
    where we write $D$ and $\transpose{(-)}$ for the usual calculus derivative and matrix transpose.
    \end{theorem}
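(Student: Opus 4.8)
The plan is to read off the pointwise statement from the logical-relations machinery assembled above. By construction we have Cartesian closed functors $\semgl{-}:\Syn\to\Gl$ and $\semglrev{-}:\Syn\to\GlRev$, and the two commuting squares show that for every $\var:\ty\vdash\trm:\ty[2]$ the pairs $(\sem{\trm},(\sem{\Dsyn{\trm}_1},\sem{\Dsyn{\trm}_2}))$ and $(\sem{\trm},(\sem{\Dsynrev{\trm}_1},\sem{\Dsynrev{\trm}_2}))$ are morphisms in $\Gl$ and $\GlRev$. Being such a morphism means exactly that these pairs preserve the logical relations $P^f$ and $P^r$. So the only remaining work is to convert relation-preservation at the first-order types $\ty,\ty[2]$ into the claimed equalities, using that $\sem{\ty}\cong\RR^N$ and $\sem{\ty[2]}\cong\RR^M$ let us realise every (co)tangent vector as the velocity of a smooth curve.

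For the forward direction I would fix $x\in\sem{\ty}$ and a tangent vector $v$ and take the affine curve $\gamma(t)\defeq x+t\cdot v$, so that $\gamma(0)=x$ and $D\gamma(0)(1)=v$. By the componentwise description of $P^f$ at first-order types computed above, $(\gamma,(\gamma,D\gamma))\in P^f_\ty$. Since $(\sem{\trm},(\sem{\Dsyn{\trm}_1},\sem{\Dsyn{\trm}_2}))$ preserves $P^f$, the composite lands in $P^f_{\ty[2]}$; unfolding the relation forces $\gamma;\sem{\trm}=\gamma;\sem{\Dsyn{\trm}_1}$ and identifies the tangent component of the composite with $D(\gamma;\sem{\trm})$. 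Evaluating the first equation at $t=0$ gives $\sem{\trm}(x)=\sem{\Dsyn{\trm}_1}(x)$, hence $\sem{\Dsyn{\trm}_1}=\sem{\trm}$; evaluating the tangent equation at $t=0$ and applying the chain rule gives $\sem{\Dsyn{\trm}_2}(x)(v)=D\sem{\trm}(\gamma(0))(D\gamma(0)(1))=D\sem{\trm}(x)(v)$. As $x,v$ were arbitrary, $\sem{\Dsyn{\trm}_2}=D\sem{\trm}$.

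The reverse direction yields $\sem{\Dsynrev{\trm}_1}=\sem{\trm}$ in the same way, but computing $\sem{\Dsynrev{\trm}_2}$ requires more care. In $\GlRev$ the relation $P^r$ attaches to a curve its transposed derivative, which is a linear \emph{functional}, so composing with $(\sem{\Dsynrev{\trm}_1},\sem{\Dsynrev{\trm}_2})$ and landing in $P^r_{\ty[2]}$ only produces a single scalar pairing: for the curve $\gamma_i(t)\defeq x+t\cdot e_i$ the relation together with the chain rule gives, at $t=0$ and for every cotangent $v$, the identity $\innerprod{e_i}{\sem{\Dsynrev{\trm}_2}(x)(v)}=\innerprod{D\sem{\trm}(x)(e_i)}{v}=\innerprod{e_i}{\transpose{D\sem{\trm}(x)}(v)}$, where the last step uses $\innerprod{A(e_i)}{v}=\innerprod{e_i}{\transpose{A}(v)}$. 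I would therefore probe with the whole family $\set{\gamma_i}$ indexed by the standard basis vectors $e_i$ of $\sem{\Dsynrev{\ty}_2}\cong\cRR^N$; since the $e_i$ span, these pairings pin down $\sem{\Dsynrev{\trm}_2}(x)(v)$ completely, giving $\sem{\Dsynrev{\trm}_2}=\transpose{D\sem{\trm}}$.

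I expect the main obstacle to be precisely this readout step, and in particular making the first-order hypothesis earn its keep. The relation-preservation half of the argument holds for arbitrary $\trm$, even those built from higher-order subprograms, which is what makes the theorem strong; but the passage from ``$\trm$ preserves the curve/(co)tangent relation'' to ``$\trm$ computes the mathematical derivative'' depends on representing every (co)tangent vector at $\sem{\ty}$ as the velocity of a curve into $\RR^N$. That representation is available only at first-order types --- at a function type $\sem{\ty}$ is a diffeological function space whose tangent vectors are not realised by curves --- which is exactly why the conclusion is stated for first-order $\ty,\ty[2]$.
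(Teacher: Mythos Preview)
Your proposal is correct and follows essentially the same approach as the paper's proof: both use the relation-preservation established by the commuting squares, probe with a curve $\gamma$ hitting $(x,v)$ for the forward case and with the family $\gamma_i$ through the standard basis vectors for the reverse case, and then read off the pointwise equalities by evaluating at $0$ (and $1$, respectively $v$). Your explicit choice of affine curves and the intermediate rewriting $\innerprod{e_i}{\sem{\Dsynrev{\trm}_2}(x)(v)}=\innerprod{D\sem{\trm}(x)(e_i)}{v}$ are minor presentational variations, not substantive differences.
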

    \begin{proof}
        First, we focus on $\Dsynsymbol$.\\
        Let $x\in \sem{\Dsyn{\ty}_1}=\sem{\ty}\cong \RR^N$ and $v\in\sem{\Dsyn{\ty}_2}\cong \cRR^N$ (for some $N$).
        Then, there is a smooth curve $\gamma:\RR\to \sem{\ty}$, such that
        $\gamma(0)=x$ and $D\gamma(0)(1)=v$.
        Clearly, $(\gamma,(\gamma, D\gamma))\in P_{\ty}^f$.

        As $(\sem{\trm}, (\sem{\Dsyn{\trm}_1},\sem{\Dsyn{\trm}_2}))$ respects the logical relation $P^f$,
        we have
        \begin{align*}&(\gamma;\sem{\trm}, (\gamma;\sem{\Dsyn{\trm}_1},x\mapsto r\mapsto \sem{\Dsyn{\trm}_2}(\gamma(x))(D\gamma(x)(r))))=\\
            &\qquad\qquad\qquad\qquad (\gamma,(\gamma,D\gamma));(\sem{\trm}, (\sem{\Dsyn{\trm}_1},\sem{\Dsyn{\trm}_2}))
        \in P^f_{\ty[2]},
        \end{align*}
        where we use the definition of composition in $\Diff\times \Sigma_{\Diff}\DiffMon$.
        Therefore, $$\gamma;\sem{\trm}=\gamma;\sem{\Dsyn{\trm}_1}$$ and, by the chain rule,
        \begin{align*}
            x\mapsto r\mapsto D\sem{\trm}(\gamma(x))(D\gamma(x)(r))
            &=D(\gamma;\sem{\trm})\\& =
         x\mapsto r\mapsto \sem{\Dsyn{\trm}_2}(\gamma(x))(D\gamma(x)(r)).\end{align*}
        Evaluating the former at $0$ gives $\sem{\trm}(x)=\sem{\Dsyn{\trm}_1}(x)$.
        Similarly, evaluating the latter at $0$ and $1$ gives $D\sem{\trm}(x)(v)= \sem{\Dsyn{\trm}_2}(x)(v)$.
    \\
    \\
        Next, we turn to $\Dsynrevsymbol$.\\
        Let $x\in \sem{\Dsynrev{\ty}_1}=\sem{\ty}\cong \RR^N$
        and $v\in\sem{\Dsynrev{\ty}_2}\cong \cRR^N$ (for some $N$).
        Let $\gamma_i:\RR\to \sem{\ty}$ be a smooth curve such that
        $\gamma_i(0)=x$ and $D\gamma_i(0)(1)=e_i$, where we write $e_i$ for the $i$-th standard basis vector 
        of $\sem{\Dsynrev{\ty}_2}\cong \cRR^N$.
        Clearly, $(\gamma_i,(\gamma_i, \transpose{D\gamma_i}))\in P_{\ty}^r$.

        As $(\sem{\trm}, (\sem{\Dsynrev{\trm}_1},\sem{\Dsynrev{\trm}_2}))$ respects the logical relation $P^r$,
        we have
        \begin{align*}&(\gamma_i;\sem{\trm}, (\gamma_i;\sem{\Dsynrev{\trm}_1},
        x\mapsto w\mapsto \transpose{D\gamma_i(x)}(\sem{\Dsynrev{\trm}_2}(\gamma_i(x))(w))))=\\
        &\qquad \qquad\qquad\qquad       (\gamma_i,(\gamma_i,\transpose{D\gamma_i}));(\sem{\trm}, (\sem{\Dsynrev{\trm}_1},\sem{\Dsynrev{\trm}_2}))\in P^r_{\ty[2]},\end{align*}
        by using the definition of composition in $\Diff\times \Sigma_{\Diff}\DiffMon^{op}$.
        Consequently, $$\gamma_i;\sem{\trm}=\gamma_i;\sem{\Dsynrev{\trm}_1}$$ and, by the chain rule,
        \begin{align*}x\mapsto w\mapsto \transpose{D\gamma_i(x)}(\transpose{D\sem{\trm}(\gamma_i(x))}(w))& =\transpose{D(\gamma_i;\sem{\trm})}\\& =
        x\mapsto w\mapsto \transpose{D\gamma_i(x)}(\sem{\Dsynrev{\trm}_2}(\gamma_i(x))(w)).\end{align*}
        Evaluating the former at $0$ gives $\sem{\trm}(x)=\sem{\Dsynrev{\trm}_1}(x)$.
        Similarly, evaluating the latter at $0$ and $v$ gives
        us $\innerprod{e_i}{\transpose{D\sem{\trm}(x)}(v)}= \innerprod{e_i}{\sem{\Dsynrev{\trm}_2}(x)(v)}$.
        As this equation holds for all basis vectors $e_i$ of $\sem{\Dsynrev{\ty}}$, we find that
        \begin{align*}\transpose{D\sem{\trm}(x)}(v)&=
        \sum_{i=1}^N (\innerprod{e_i}{\transpose{D\sem{\trm}(x)}(v)})\cdot e_i\\ 
        &=\sum_{i=1}^N (\innerprod{e_i}{\sem{\Dsynrev{\trm}_2}(x)(v)})\cdot e_i\\
        &= \sem{\Dsynrev{\trm}_2}(x)(v).\end{align*}
    \end{proof}
\clearpage
\sqsection{Practical Relevance and Implementation in Functional Languages (Extended)}\label{sec:implementation}
Most popular functional languages, such as Haskell and O'Caml, do not 
natively support linear types.
As such, the transformations described in this paper may seem 
hard to implement. 
However, as we will argue in this section, we can easily implement the limited linear types 
necessary for phrasing the transformations
as abstract data types by using merely a basic module system.


Specifically, we explain how to implement $!(-)\otimes (-)$-
and Cartesian $(-)\multimap(-)$-types.
We first convey some intuitions, and then we discuss the required API,
the AD transformations, their semantics and correctness,
and, finally, we explain how the API can be implemented.

Based on the denotational semantics,
$\cty\multimap\cty[2]$-types should hold (representations of) functions $f$
from $\cty$ to $\cty[2]$
that are homomorphisms of the monoid structures on $\cty$ and $\cty[2]$.
We will see that these types can be implemented using an abstract data type 
that holds certain basic linear functions (extensible as the library evolves)
and is closed under the identity, composition, argument swapping, and currying (to be discussed later).
Again, based on the semantics, $!\cty\otimes\cty[2]$ should contain (representations of)
finite multisets $\sum_{i=1}^n\delta_{(\trm_i,\trm[2]_i)}$ of pairs $(\trm_i,\trm[2]_i)$, where $\trm_i$ is of type $\cty$, and 
$\trm[2]_i$ is of type $\cty[2]$, and where we identify $xs+\delta_{(\trm,\trm[2])}+\delta_{(\trm,\trm[2]')}$
and $xs+\delta_{(\trm,\trm[2]+\trm[2]')}$.

\sqsubsection{An Alternative, Applied Target Language for AD Based on Abstract Data Types}
Next, we discuss an extension of the source language 
of \S  \ref{sec:language} with two abstract data type formers $\LinFunSym$ and $\MapSym$,
as it can serve as an alternative, applied target language
for our transformation.
This language is essentially equivalent to that of \S  \ref{sec:minimal-linear-language},
but it no longer distinguishes between linear and Cartesian types.
To be precise, we extend the source language with the types and 
terms\\
\begin{syntax}
    \ty, \ty[2], \ty[3] & \gdefinedby & &  \syncat{types}\\     
    &\gor& \ldots                      & \synname{as in \S  \ref{sec:language}}\\
   &&&\\
\trm, \trm[2], \trm[3] & \gdefinedby  & \syncat{terms}             \\
&\gor& \ldots                      & \synname{as in \S  \ref{sec:language}}\\
&\gor & \lop(\trm) & \synname{linear operations}\\
& \gor & \zero_{\ty} & \synname{zero}\\
&\gor & \trm + \trm[2] & \synname{plus}\\
&\gor & \linearid & \synname{linear identity}\\
&\gor & \trm\lcomp \trm[2] & \synname{linear composition}\\
&\gor & \applin{\trm}{\trm[2]}\hspace{-6pt} & \synname{linear application}\\
  \end{syntax}%
  ~
  \begin{syntax}
    &\gor \quad\,& \Map{\ty}{\ty[2]} \hspace{-8pt}& \synname{tensor types}\\ 
  & \gor & \LinFun{\ty}{\ty[2]}\hspace{-8pt} & \synname{linear function}\\
  &  & & \\ 
  &&&\\
  &\gor & \lswap\,\trm\hspace{-8pt} & \synname{swapping args}\\
  &\gor & \leval{\trm}\hspace{-8pt} & \synname{linear evaluation}\\
  & \gor & \lsing{\trm} \hspace{-8pt}& \synname{singletons}\\
  &\gor & \inv\lcurry\trm\hspace{-8pt} & \synname{$\MapSym$-elim}\\
  &\gor & \lFst \hspace{-8pt}          & \synname{linear projection}\\
  &\gor & \lSnd   \hspace{-8pt}        &\synname{linear projection}\\
  &\gor & \lPair{\trm}{\trm[2]}\hspace{-8pt} & \synname{linear pairing,}
  \end{syntax}\\[2pt]
which are typed according to the rules of Fig. \ref{fig:types-maps}.
\begin{figure}[!t]\vspace{-6pt}
    \framebox{\resizebox{\linewidth}{!}{\begin{minipage}{1.17\linewidth}\noindent
    \input{type-system-maps}\end{minipage}}}
    \caption{Typing rules for the applied target language, to extend the source language.\label{fig:types-maps}}
    \vspace{-6pt}
    \end{figure}\\
We can use this extension of the source language as an alternative target 
language for our AD transformations.
In fact, we could define a translation $(-)^\dagger$ form our linear target language 
to this language that relates the AD macros on both languages and is semantics preserving.
To do so, we define $(!\ty\otimes \cty[2])^\dagger\defeq \Map{\ty^\dagger,\cty[2]^\dagger}$,
$(\cty\multimap \cty[2])^\dagger\defeq \LinFun{\cty^\dagger}{\cty[2]^\dagger}$,
$(\creals^n)^\dagger\defeq \reals^n$,
and to extend $(-)^\dagger$ structurally recursively, letting it preserve all other type formers.
We then translate
$(\var_1:\ty,\ldots,\var_n:\ty;\var[2]:\cty[2]\vdash \trm:{\cty[3]})^\dagger\defeq
\var_1:\ty^\dagger,\ldots,\var_n:\ty^\dagger\vdash \trm^\dagger:{(\cty[2]\multimap\cty[3])^\dagger}$
and $(\var_1:\ty,\ldots,\var_n:\ty\vdash \trm:{\ty[2]})^\dagger\defeq 
\var_1:\ty^\dagger,\ldots,\var_n:\ty^\dagger\vdash \trm^\dagger : \ty[2]^\dagger$.
We believe an interested reader can fill in the details. Instead of deriving 
correctness of AD on the applied target language via this translation, 
we will give an explicit logical relations proof, in \citeappx  D, as it will be a useful tool for further 
extensions to the language, such as the extension with higher-order primitive 
operations that we consider in \S\S  \ref{ssec:higher-order-primitives}.

\sqsubsection{AD Macros Targeting the Applied Language with Abstract Types}
Assume that we have chosen suitable terms 
$$\var:\Domain{\op}\vdash D\op(\var):\LinFun{\Domain{\op}}{\reals^m}$$
and $$\var:\Domain{\op}\vdash\transpose{D\op}(\var):\LinFun{\reals^m}{\Domain{\op}}$$
for representing the forward and reverse derivatives of operations
$\op\in\Op_{n_1,\ldots,n_k}^m$.

For forward AD, we translate each type $\ty$ into a pair of types 
$(\Dsyn{\ty}_1,\Dsyn{\ty}_2)$.
We also translate each term $\var:\ty\vdash \trm:\ty[2]$ into a pair of terms 
$\var:\Dsyn{\ty}_1\vdash \Dsyn{\trm}_1:\Dsyn{\ty[2]}_1$ and 
$\var:\Dsyn{\ty}_1\vdash \Dsyn{\trm}_2:\LinFun{\Dsyn{\ty}_2}{\Dsyn{\ty[2]}_2}$.
We then define $\Dsyn{-}$ on types as 
\vspace{-2pt}\\
\resizebox{\linewidth}{!}{\parbox{\linewidth}{
\begin{align*}
&\Dsyn{\reals^n}_{1,2} \defeq \reals^n
\quad \Dsyn{\Unit}_{1,2} \defeq \Unit 
\quad \Dsyn{\ty\t*\ty[2]}_1 \defeq \Dsyn{\ty}_1\t*\Dsyn{\ty[2]}_1
\quad \Dsyn{\ty\t*\ty[2]}_2 \defeq  \Dsyn{\ty}_2\t*\Dsyn{\ty[2]}_2\\
&\Dsyn{\ty\To\ty[2]}_1 \defeq \Dsyn{\ty}_1\To(\Dsyn{\ty[2]}_1\t* \LinFun{\Dsyn{\ty}_2}{\Dsyn{\ty[2]}_2})
\qquad\qquad \Dsyn{\ty\To\ty[2]}_2 \defeq  \Dsyn{\ty}_1 \To\Dsyn{\ty[2]}_2.
\end{align*}}}\\
On programs, we define it as 
\vspace{-2pt}\\
\resizebox{\linewidth}{!}{\parbox{\linewidth}{
\begin{align*}
&\Dsyn{\op}_1\defeq \op
\quad \Dsyn{\op}_2\defeq \var\vdash D\op(\var)
\quad\Dsyn{\id[\ty]}_1\defeq \var:\Dsyn{\ty}_1\vdash \var:\Dsyn{\ty}_1
\quad\Dsyn{\id[\ty]}_2\defeq \linearid\\
&\Dsyn{\trm;\trm[2]}_1 \defeq   \subst{\Dsyn{\trm[2]}_1}
{\sfor{\var[2]}{\Dsyn{\trm}_1}}\quad\Dsyn{\trm;\trm[2]}_2 \defeq 
 \Dsyn{\trm}_2\lcomp\subst{\Dsyn{\trm[2]}_2}{\sfor{\var[2]_1}{\Dsyn{\trm}_1}}\\
&\qquad\textnormal{where }
\var:\Dsyn{\ty}_1\vdash \Dsyn{\trm}_1:\Dsyn{\ty[2]}_1\textnormal{ and }
\var[2]:\Dsyn{\ty[2]}_1\vdash \Dsyn{\trm[2]}_1:\Dsyn{\ty[3]}_1\\
&\phantom{\qquad\textnormal{where }}
\var_1:\Dsyn{\ty}_1\vdash \Dsyn{\trm}_2:\LinFun{\Dsyn{\ty[1]}_2}{\Dsyn{\ty[2]}_2}
\textnormal{ and }
\var[2]_1:\Dsyn{\ty[2]}_1\vdash \Dsyn{\trm[2]}_2:\LinFun{\Dsyn{\ty[2]}_2}{\Dsyn{\ty[3]}_2}\\
&\Dsyn{\tUnit_{\ty}}_1 \defeq \tUnit\quad
\Dsyn{\tUnit_{\ty}}_2 \defeq \zero
\quad 
\Dsyn{\tPair{\trm}{\trm[2]}}_1 \defeq \tPair{\Dsyn{\trm}_1}{\Dsyn{\trm[2]}_1}
\quad \Dsyn{\tPair{\trm}{\trm[2]}}_2 \defeq \lPair{\Dsyn{\trm}_2}{\Dsyn{\trm[2]}_2}\\
&\Dsyn{\tFst_{\ty,\ty[2]}}_1 \defeq \var:\Dsyn{\ty}_1\t*\Dsyn{\ty[2]}_1\vdash 
\tFst\var:\Dsyn{\ty}_1 \quad 
\Dsyn{\tFst_{\ty,\ty[2]}}_2 \defeq \lFst\\
&\Dsyn{\tSnd_{\ty,\ty[2]}}_1 \defeq \var:\Dsyn{\ty}_1\t*\Dsyn{\ty[2]}_1\vdash 
\tSnd\var:\Dsyn{\ty[2]}_1  
\quad \Dsyn{\tSnd_{\ty,\ty[2]}}_2 \defeq
\lSnd\\
&\Dsyn{\ev_{\ty,\ty[2]}}_1\defeq
\var: (\Dsyn{\ty}_1\To (\Dsyn{\ty[2]}_1\t* \LinFun{\Dsyn{\ty}_2}{
\Dsyn{\ty[2]}_2}))\t* \Dsyn{\ty}_1\vdash
\tFst ((\tFst\var)\,(\tSnd\var)):\Dsyn{\ty[2]}_1\\
&\Dsyn{\ev_{\ty,\ty[2]}}_2 \defeq 
\var_1: (\Dsyn{\ty}_1\To (\Dsyn{\ty[2]}_1\t* \LinFun{\Dsyn{\ty}_2}{ 
\Dsyn{\ty[2]}_2}))\t* \Dsyn{\ty}_1
\vdash\letin{\var[2]}{\tSnd \var_1}{}\\
&\qquad 
\lFst\lcomp \leval{\var[2]} +
\lSnd\lcomp (\tSnd ((\tFst\var_1)\,\var[2]))
:\LinFun{(\Dsyn{\ty}_1\To \Dsyn{\ty[2]}_2)\t* \Dsyn{\ty}_2}{\Dsyn{\ty[2]}_2}\\
&\Dsyn{\Lambda_{\ty,\ty[2],\ty[3]}(\trm)}_1 \defeq 
\var:\Dsyn{\ty}\vdash \fun{\var[2]}\tPair
{\subst{\Dsyn{\trm}_1}{\sfor{\var[3]}{\tPair{\var}{\var[2]}}}}
{\lPair{\zero}{\linearid}\lcomp\subst{\Dsyn{\trm}_2}{
    \sfor{\var[3]}{\tPair{\var}{\var[2]}}
    }}\\
&\qquad :\Dsyn{\ty[2]}_1\To 
(\Dsyn{\ty[3]}_1\t* \LinFun{\Dsyn{\ty[2]}_2}{\Dsyn{\ty[3]}_2})\\
&\Dsyn{\Lambda_{\ty,\ty[2],\ty[3]}(\trm)}_2 \defeq 
\var:\Dsyn{\ty}_1\!\vdash
\lswap(\fun{\var[2]}\lPair{\linearid}{\zero}\lcomp\subst{\Dsyn{\trm}_2}{\sfor{\var[3]}{\tPair{\var}{\var[2]}}})
:\LinFun{\Dsyn{\ty}_2}{\Dsyn{\ty[2]}_1\To\Dsyn{\ty[3]}_2}\\
&\qquad \textnormal{where }
\var[3]\!:\!\Dsyn{\ty}_1\t*\Dsyn{\ty[2]}_1\!\vdash \Dsyn{\trm}_1:\Dsyn{\ty[3]}_1
\textnormal{,\;\;\;}
\var[3]\!:\!\Dsyn{\ty}_1\t*\Dsyn{\ty[2]}_1
\!\vdash \Dsyn{\trm}_2:\LinFun{\Dsyn{\ty}_2\t*\Dsyn{\ty[2]}_2}{\Dsyn{\ty[3]}_2}
\end{align*}}}\\

For reverse AD, we translate each type $\ty$ into a pair of types 
$(\Dsynrev{\ty}_1,\Dsynrev{\ty}_2)$.
We also translate each term $\var:\ty\vdash \trm:\ty[2]$ into a pair of terms 
$\var:\Dsynrev{\ty}_1\vdash \Dsynrev{\trm}_1:\Dsynrev{\ty[2]}_1$ and 
$\var:\Dsynrev{\ty}_1\vdash \Dsynrev{\trm}_2:\LinFun{\Dsynrev{\ty[2]}_2}{\Dsynrev{\ty}_2}$.
We define $\Dsynrev{-}$ on types as 
\vspace{-2pt}\\
\resizebox{\linewidth}{!}{\parbox{\linewidth}{
\begin{align*}
    &\Dsynrev{\reals^n}_{1,2} \defeq \reals^n
\quad\Dsynrev{\Unit}_{1,2} \defeq \Unit
\quad
\Dsynrev{\ty\t*\ty[2]}_1 \defeq \Dsynrev{\ty}_1\t*\Dsynrev{\ty[2]}_1
\quad {\Dsynrev{\ty\t*\ty[2]}_2} \defeq  {\Dsynrev{\ty}_2}\t*{\Dsynrev{\ty[2]}_2}\\
&\Dsynrev{\ty\To\ty[2]}_1 \defeq \Dsynrev{\ty}_1\To(\Dsynrev{\ty[2]}_1\t* \LinFun{\Dsynrev{\ty[2]}_2}{\Dsynrev{\ty}_2})
\qquad\;\; {\Dsynrev{\ty\To\ty[2]}_2} \defeq  \Map{\Dsynrev{\ty}_1}{\Dsynrev{\ty[2]}_2}.
\end{align*}}}\\
On programs, we define it as 
\vspace{-2pt}\\
\resizebox{\linewidth}{!}{\parbox{\linewidth}{
\begin{align*}
    &\Dsynrev{\op}_1 \defeq \op\quad
     \Dsynrev{\op}_2\defeq \var\vdash 
    \transpose{D\op}(\var)\quad 
    \Dsynrev{\id[\ty]}_1\defeq \var:\Dsynrev{\ty}_1\vdash \var:\Dsynrev{\ty}_1\quad
  \Dsynrev{\id[\ty]}_2\defeq \linearid\\
    &\Dsynrev{\trm;\trm[2]}_1 \defeq \subst{\Dsynrev{\trm[2]}_1}
    {\sfor{\var[2]}{\Dsynrev{\trm}_1}}\quad 
    \Dsynrev{\trm;\trm[2]}_2 \defeq 
    \subst{\Dsynrev{\trm[2]}_2}{\sfor{\var[2]_1}{\Dsynrev{\trm}_1}}\lcomp\Dsynrev{\trm}_2
\\
&\qquad\textnormal{where }
\var:\Dsynrev{\ty}_1\vdash \Dsynrev{\trm}_1:\Dsynrev{\ty[2]}_1\textnormal{ and }
\var[2]:\Dsynrev{\ty[2]}_1\vdash \Dsynrev{\trm[2]}_1:\Dsynrev{\ty[3]}_1\\
&\phantom{\qquad\textnormal{where }}
\var_1:\Dsynrev{\ty}_1\vdash \Dsynrev{\trm}_2:\LinFun{\Dsynrev{\ty[2]}_2}{\Dsynrev{\ty[1]}_2}
\textnormal{ and }
\var[2]_1:\Dsynrev{\ty[2]}_1\vdash \Dsynrev{\trm[2]}_2:\LinFun{\Dsynrev{\ty[3]}_2}{\Dsynrev{\ty[2]}_2}\\
&\Dsynrev{\tUnit_{\ty}}_1 \defeq \tUnit\quad\!
\Dsynrev{\tUnit_{\ty}}_2 \defeq\zero\quad\!
\Dsynrev{\tPair{\trm}{\trm[2]}}_1 \defeq \tPair{\Dsynrev{\trm}_1}{\Dsynrev{\trm[2]}_1}\quad\!
\Dsynrev{\tPair{\trm}{\trm[2]}}_2 \defeq
\lFst\lcomp \Dsynrev{\trm}_2+ \lSnd\lcomp\Dsynrev{\trm[2]}_2\\
&\qquad \textnormal{where }\var_1:\Dsynrev{\ty}_1\vdash \Dsynrev{\trm}_2:\LinFun{\Dsynrev{\ty[2]}_2}{\Dsynrev{\ty}_2}
\textnormal{ and }\var_1:\Dsynrev{\ty}_1\vdash \Dsynrev{\trm[2]}_2:\LinFun{\Dsynrev{\ty[3]}_2}{\Dsynrev{\ty}_2}\\
&\Dsynrev{\tFst_{\ty,\ty[2]}}_1 \defeq \var:\Dsynrev{\ty}_1\t*\Dsynrev{\ty[2]}_1\vdash 
\tFst\var:\Dsynrev{\ty}_1 \quad
\Dsynrev{\tFst_{\ty,\ty[2]}}_2 \defeq \lPair{\linearid}{\zero}\\
&\Dsynrev{\tSnd_{\ty,\ty[2]}}_1 \defeq \var:\Dsynrev{\ty}_1\t*\Dsynrev{\ty[2]}_1\vdash 
\tSnd\var:\Dsynrev{\ty[2]}_1  \quad
\Dsynrev{\tSnd_{\ty,\ty[2]}}_2 \defeq  \lPair{\zero}{\linearid}\\
&\Dsynrev{\ev_{\ty,\ty[2]}}_1\defeq
\var:( \Dsynrev{\ty}_1\To (\Dsynrev{\ty[2]}_1\t* \LinFun{\Dsynrev{\ty[2]}_2} 
{\Dsynrev{\ty}_2}))\t* \Dsynrev{\ty}_1\vdash \tFst ((\tFst\var)\,(\tSnd\var)):\Dsynrev{\ty[2]}_1\\
&\Dsynrev{\ev_{\ty,\ty[2]}}_2 \defeq 
\var_1: (\Dsynrev{\ty}_1\To (\Dsynrev{\ty[2]}_1\t* \LinFun{\Dsynrev{\ty[2]}_2} 
{\Dsynrev{\ty}_2}))\t* \Dsynrev{\ty}_1\vdash\letin{\var[2]}{\tSnd\var_1}{}\\
&\qquad 
\lPair{\lsing{\var[2]}
    }{\tSnd((\tFst\var_1)\,\var[2])}
:\LinFun{\Dsynrev{\ty[2]}_2}{\Map{\Dsynrev{\ty}_1}{\Dsynrev{\ty[2]}_2}\t* \Dsynrev{\ty}_2}\\
&\Dsynrev{\Lambda_{\ty,\ty[2],\ty[3]}\trm}_1 \!\defeq \!
\var:\Dsynrev{\ty}_1\!\vdash \fun{\var[2]}\subst{\tPair
{\Dsynrev{\trm}_1}
{\Dsynrev{\trm}_2\lcomp \lSnd}}{\sfor{\var[3]}{\tPair{\var}{\var[2]}}} :\Dsynrev{\ty[2]}_1\!\To 
(\Dsynrev{\ty[3]}_1\t* \LinFun{\Dsynrev{\ty[3]}_2}{\Dsynrev{\ty[2]}_2})\\
&\Dsynrev{\Lambda_{\ty,\ty[2],\ty[3]}\trm}_2 \!\defeq \!
\var_1:\Dsynrev{\ty}_1\vdash
\inv\lcurry(\fun{\var[2]}\subst{\Dsynrev{\trm}_2}{\sfor{\var[3]}{\tPair{\var_1}{\var[2]}}})\lcomp \lFst:
\LinFun{\Map{\Dsynrev{\ty[2]}_1}{\Dsynrev{\ty[3]}_2}}{\Dsynrev{\ty}_2}\\
&\qquad \textnormal{where }
\var[3]\!:\!\Dsynrev{\ty}_1\t*\Dsynrev{\ty[2]}_1\vdash \Dsynrev{\trm}_1:\Dsynrev{\ty[3]}_1\textnormal{,\;\;}
\var[3]\!:\!\Dsynrev{\ty}_1\t*\Dsynrev{\ty[2]}_1
\vdash \Dsynrev{\trm}_2:\LinFun{\Dsynrev{\ty[3]}_2}{\Dsynrev{\ty}_2\t*\Dsynrev{\ty[2]}_2}
\end{align*}}}\\

We emphasise that this 
generated code is intended to be compiled by an optimizing compiler.
Indeed, leveraging such existing compiler toolchains is one of the prime motivations for this work.

\sqsubsection{Denotational Semantics for the Applied Target Language}
\label{ssec:practical-sem}
Let us write $\DiffMonNL$ for the category whose objects 
are commutative diffeological monoids $X$, and whose morphisms 
$X\to Y$ are functions $|X|\to |Y|$ that are diffeological space morphisms, 
but \emph{that may fail to be monoid homomorphisms}.

We can give a denotational semantics $\semext{-}$ to the applied target language 
 in this category
by interpreting types $\ty $ as objects $\semext{\ty}$ in $\DiffMonNL$ and 
terms $\Ginf \trm\ty$ as morphisms
$\semext{\trm}$ in $\DiffMonNL(\semext{\Gamma}, \semext{\ty})$.
We interpret types by making use of the categorical constructions  
on objects in $\DiffMon$ described in \S  \ref{sec:semantics}:
\vspace{-2pt}\\
\resizebox{\linewidth}{!}{\parbox{\linewidth}{
\begin{align*}
&\semext{\reals^n}\defeq \cRR^n\qquad
\semext{\Unit}\defeq \terminal\qquad 
\semext{\ty\t*\ty[2]}\defeq \semext{\ty}\times \semext{\ty[2]}\qquad 
\semext{\ty\To\ty[2]}\defeq (|\semext{\ty}|,\plots{\semext{\ty}})\To\semext{\ty[2]}\\
&\semext{\LinFun{\ty}{\ty[2]}}\defeq \semext{\ty}\multimap\semext{\ty[2]}\qquad 
\semext{\Map{\ty}{\ty[2]}}\defeq !(|\semext{\ty}|,\plots{\semext{\ty}})\otimes \semext{\ty[2]}
\end{align*}}}\\
Here, we use the commutative monoid structure on the homomorphism spaces
$\semext{\ty}\multimap\semext{\ty[2]}$, 
which we described in Ex. \ref{ex:homomorphism-monoid}.
We extend the semantics of $\Syn$'s terms to the applied target language 
(noting that the interpretation $\semext{-}$ of terms as $\Diff$-morphisms
can also serve as a well-typed interpretation in $\DiffMonNL$, given our chosen 
interpretation of objects):
\vspace{-2pt}\\
\resizebox{\linewidth}{!}{\parbox{\linewidth}{
\begin{align*}
    &\semext{\zero}(v)\defeq 0\quad\!
    \semext{\trm+\trm[2]}(v)\defeq \semext{\trm}(v)+\semext{\trm[2]}(v)\quad\!
    \semext{\linearid}(v)(x)\defeq x\quad \!
    \semext{\trm\lcomp\trm[2]}(v)(x)\defeq \semext{\trm}(v)(\semext{\trm}(v)(x))\\&
    \semext{\applin{\trm}{\trm[2]}}(v)\defeq \semext{\trm}(v, \semext{\trm[2]}(v))\quad 
    \semext{\lswap\,\trm}(v)(x)(y)\defeq \semext{\trm}(v)(y)(x)\\&
    \semext{\leval{\trm}}(v)(f)\defeq f(\semext{\trm}(v))\quad
    \semext{\lsing{\trm}}(v)(x)\defeq (!\semext{\trm}(v)\otimes x)\\& 
    \semext{\inv\lcurry\trm}(v)(\sum_{i=1}^n!x\otimes y)\defeq \sum_{i=1}^n\semext{\trm}(v)(x)(y)\quad    
    \semext{\lFst}(v)(x,y)\defeq x\quad 
    \semext{\lSnd}(v)(x,y)\defeq y\\[-3pt]& 
    \semext{\lPair{\trm}{\trm[2]}}(v)(x)\defeq \sPair{\semext{\trm}(v)(x)}{\semext{\trm[2]}(v)(x)}\\& 
    \semext{\Ginf{\lop(\trm)}{\reals^m}}\defeq \sem{\Ginf[{;\var[2]:\LDomain{\lop}}] 
    {\lop(\trm;\var[2])}{\creals^m}}
\end{align*}}}\\
The interpretation of $\inv\lcurry\trm$ is well-defined, for two reasons: first, $\semext{\trm}$ is 
linear in its last argument by its type; second, $+$ is commutative and associative.

\sqsubsection{A Correctness Proof of AD for the Applied Target Language}
With a semantics in place, we can again give a correctness proof of AD.
This time, we write out the logical relations proof by hand.
It is essentially the unraveling of the categorical subsconing 
argument of \S  \ref{sec:glueing-correctness}.
\citeappx  D contains the full proof.
Here, we outline the structure.

\sqsubsubsection{Correctness of Forward AD}
By induction on the structure of types, we construct a logical relation \\
\resizebox{\linewidth}{!}{\parbox{\linewidth}{
\begin{align*}
P_{\ty}&\subseteq (\RR\Rightarrow\semext{\ty})\times ((\RR\Rightarrow\semext{\Dsyn{\ty}_1})\times (\RR\Rightarrow\cRR\multimap\semext{\Dsyn{\ty}_2}))\\
P_{\reals^n} &\defeq \set{(f, (g, h))\mid g=f\textnormal{ and }h=Df}
\qquad\qquad P_{\Unit} \defeq \set{((),((),x\mapsto r\mapsto ()))}\\
P_{\ty\t*\ty[2]}&\defeq \set{((\sPair{f}{f'},(\sPair{g}{g'},x\mapsto r\mapsto \sPair{h(x)(r)}{h'(x)(r)})))\mid (f,(g,h))\in 
P_{\ty}, (f',(g',h'))\in P_{\ty[2]}}\\
P_{\ty\To\ty[2]}&\defeq \big\{(f,(g,h))\mid \forall (f',(g',h'))\in P_{\ty}.
(x\mapsto f(x)(f'(x)), (
x\mapsto \pi_1 (g(x)(g'(x))),\\
&\qquad\qquad {x}\mapsto {r}\mapsto (\pi_2 (g(x)(g'(x))))(h'(x)(r)) + h(x)(r)(g'(x))))\in P_{\ty[2]}
\big\}.
\end{align*}}}

Then, we establish the following fundamental lemma.
\begin{lemma}
    If $\trm \in\Syn(\ty,\ty[2])$ and $f:\RR\to \semext{\ty}$, $g:\RR\to\semext{\Dsyn{\ty}_1}$,
    $h:\RR\to \RR\multimap \semext{\Dsyn{\ty}_2}$ are such that $(f,(g,h))\in P_{\ty}$,
    then 
    $(f;\semext{\trm},
    (g; \semext{\Dsyn{\trm}_1}, x\mapsto r\mapsto \semext{\Dsyn{\trm}_2}(g(x))(h(x)(r))
    ))\in P_{\ty[2]}$.
\end{lemma}
The proof goes via induction on the typing derivation of $\trm$.

Next, the correctness theorem follows by exactly the argument in 
the proof of Thm. \ref{thm:AD-correctness}.
\begin{theorem}[Correctness of Forward AD]
    For any typed term  $\var:\ty\vdash \trm:\ty[2]$
    in $\Syn$, where $\ty$ and $\ty[2]$ are first-order types, we have that
    $\semext{\Dsyn{\trm}_1}=\semext{\trm}\qquad\textnormal{and}\qquad 
    \semext{\Dsyn{\trm}_2}=D\semext{\trm}.$
    \end{theorem}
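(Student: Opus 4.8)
The plan is to deduce the theorem from the preceding fundamental Lemma by replaying, almost verbatim, the curve argument used in the proof of Thm.~\ref{thm:AD-correctness}. The only genuinely new ingredient is a clean description of the predicate $P_\ty$ at first-order types, which I would pin down first and then feed into the extraction-by-curves argument.

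The key preliminary step is to show, by induction on the structure of first-order types $\ty$ (which in $\Syn$ are precisely those generated from $\reals^n$ and $\Unit$ using $\t*$), that
\[
P_\ty=\set{(f,(g,h))\mid g=f\textnormal{ and }h=Df}.
\]
The cases $\reals^n$ and $\Unit$ hold immediately by the definition of $P$. For a product $\ty\t*\ty[2]$, the defining clause of $P_{\ty\t*\ty[2]}$ pairs a member of $P_\ty$ with a member of $P_{\ty[2]}$ componentwise; since $\semext{\ty\t*\ty[2]}\cong\semext{\ty}\times\semext{\ty[2]}$ and the derivative of a tuple-valued smooth map is computed componentwise, the two inductive hypotheses combine to yield exactly the derivative graph on $\ty\t*\ty[2]$. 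This is the same collapse already witnessed by the computation of $\semgl{\reals^{n_1,..,n_k}}$ in \S\ref{sec:glueing-correctness}, and it crucially never invokes the function clause of $P$, since first-order types contain no $\To$.

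With this characterisation available, I would fix arbitrary $x\in\semext{\Dsyn{\ty}_1}=\semext{\ty}\cong\RR^N$ and $v\in\semext{\Dsyn{\ty}_2}\cong\cRR^N$, and, using that $\semext{\ty}$ is (isomorphic to) a Euclidean space, choose a smooth curve $\gamma:\RR\to\semext{\ty}$ with $\gamma(0)=x$ and $D\gamma(0)(1)=v$. By the preliminary step we have $(\gamma,(\gamma,D\gamma))\in P_\ty$, so the fundamental Lemma yields
\[
(\gamma;\semext{\trm},(\gamma;\semext{\Dsyn{\trm}_1},x\mapsto r\mapsto\semext{\Dsyn{\trm}_2}(\gamma(x))(D\gamma(x)(r))))\in P_{\ty[2]}.
\]
Since $\ty[2]$ is also first-order, the preliminary step applies to $P_{\ty[2]}$: the first two components must agree, giving $\gamma;\semext{\trm}=\gamma;\semext{\Dsyn{\trm}_1}$, and the third must equal $D(\gamma;\semext{\trm})$. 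By the chain rule $D(\gamma;\semext{\trm})(x)(r)=D\semext{\trm}(\gamma(x))(D\gamma(x)(r))$, so the third-component identity becomes $\semext{\Dsyn{\trm}_2}(\gamma(x))(D\gamma(x)(r))=D\semext{\trm}(\gamma(x))(D\gamma(x)(r))$. Evaluating the first identity at $0$ gives $\semext{\Dsyn{\trm}_1}(x)=\semext{\trm}(x)$, and evaluating the second at $(x,r)=(0,1)$ gives $\semext{\Dsyn{\trm}_2}(x)(v)=D\semext{\trm}(x)(v)$. As $x$ and $v$ were arbitrary, both stated equalities follow.

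The hard part is really the preliminary characterisation rather than the curve manipulation: one must confirm that, restricted to first-order types, the logical relation $P$ collapses to the plain derivative graph, so that membership in $P_{\ty[2]}$ can be read off directly as the two desired pointwise equations. Everything after that is the same extraction argument already carried out for Thm.~\ref{thm:AD-correctness}, now interpreted in $\DiffMonNL$ rather than in $\Sigma_{\Diff}\DiffMon$.
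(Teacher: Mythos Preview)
Your proposal is correct and follows essentially the same route as the paper: the paper states that the theorem ``follows by exactly the argument in the proof of Thm.~\ref{thm:AD-correctness}'', and your curve-based extraction is precisely that argument, transplanted to the $\semext{-}$ semantics and the fundamental Lemma of Appx.~D. Your explicit inductive characterisation of $P_\ty$ at first-order types as the derivative graph is exactly the claim the paper makes parenthetically (``the corresponding facts hold more generally for any first-order type, as an iterated product of $\reals^n$'') without spelling out, so if anything you are filling in a detail the paper leaves implicit.
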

    
\sqsubsubsection{Correctness of Reverse AD}
We define, by induction on the structure of types, a logical relation\\
\resizebox{\linewidth}{!}{\parbox{\linewidth}{
\begin{align*}
    P_{\ty}&\subseteq (\RR\To\semext{\ty})\times ((\RR\To\semext{\Dsynrev{\ty}_1})\times (\RR\To\semext{\Dsynrev{\ty}_2}\multimap \cRR))\\
P_{\reals^n} &\defeq \big\{(f, (g, h))\mid g=f
\textnormal{ and }h=\transpose{(Df)}\big\}\qquad\qquad
P_{\Unit}\defeq \set{((),((),x\mapsto v\mapsto 0))}\\
P_{\ty\t*\ty[2]}&\defeq \set{((\sPair{f}{f'},(\sPair{g}{g'},{x}\mapsto {v}\mapsto h(x)(\pi_1 v)+h'(x)(\pi_2 v))))\mid (f,(g,h))\in 
P_{\ty}, (f',(g',h'))\in P_{\ty[2]}}\\
P_{\ty\To\ty[2]}&\defeq \big\{(f,(g,h))\mid 
\forall (f',(g',h'))\in P_{\ty}.
(x\mapsto f(x)(f'(x))), (
x\mapsto \pi_1 (g(x)(g'(x))), \\&\qquad\qquad {x}\mapsto {v}\mapsto h({x})({!{{g'(x)}\otimes{v}}})
+h'(x)((\pi_2(g(x)(g'(x))))v))\in P_{\ty[2]}
\big\}
\end{align*}}}\\

Then, we establish the following fundamental lemma.
\begin{lemma}
If $\trm \in\Syn(\ty,\ty[2])$ then  $f:\RR\to \semext{\ty}$, $g:\RR\to\semext{\Dsynrev{\ty}_1}$,
$h:\RR\times \semext{\Dsynrev{\ty}_2}\to\RR$ are such that $(f,(g,h))\in P_{\ty}$,
then
$(f;\semext{\trm},
(g; \semext{\Dsynrev{\trm}_1},{x}\mapsto {v}\mapsto
h(x)( \semext{\Dsynrev{\trm}_2}(g(x))( v))))\in P_{\ty[2]}$.
\end{lemma}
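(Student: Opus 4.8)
The plan is to prove the lemma by induction on the combinator structure of $\trm$, viewing $\Syn$ as the free Cartesian closed category so that every morphism is built from identities, composition, the terminal map, pairing, projections, evaluation, currying, and the primitive operations $\op$. This mirrors the forward-mode fundamental lemma proved just above, and the reverse macro $\Dsynrevsymbol$ is defined case-by-case on exactly these combinators, so the induction is driven directly by those definitions. The base case for an operation $\op$ is essentially already discharged in \S\ref{sec:glueing-correctness}: the computation there, which uses the chain rule together with the correctness assumption $\semext{\transpose{D\op}(\var)}=\transpose{(D\semext{\op})}$, shows exactly that the triple $(\semext{\op},(\semext{\Dsynrev{\op}_1},\semext{\Dsynrev{\op}_2}))$ sends $P_{\Domain{\op}}$-related inputs to $P_{\reals^m}$-related outputs.

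First I would dispatch the routine structural cases. For $\id[\ty]$ the statement is immediate since $\Dsynrev{\id}_2=\linearid$ acts as the identity. For the terminal map $\tUnit_\ty$ one uses that the $h$-component of any member of $P_\ty$ is linear in its second argument (it lands in a homomorphism space $\semext{\cdot}\multimap\cRR$), so that $\Dsynrev{\tUnit}_2=\zero$ forces the output $h$-component to be constantly $0$, landing in $P_{\Unit}$. Projections and pairing follow by unfolding the definition of $P_{\ty\t* \ty[2]}$ and the macros $\Dsynrev{\tFst}_2=\lPair{\linearid}{\zero}$, $\Dsynrev{\tSnd}_2=\lPair{\zero}{\linearid}$, and $\Dsynrev{\tPair{\trm}{\trm[2]}}_2=\lFst\lcomp\Dsynrev{\trm}_2+\lSnd\lcomp\Dsynrev{\trm[2]}_2$, together with the two inductive hypotheses. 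The heart of the first-order fragment is \emph{composition}: applying the inductive hypothesis to $\trm$ turns $(f,(g,h))\in P_\ty$ into a triple in $P_{\ty[2]}$, and applying it again to $\trm[2]$ produces a triple in $P_{\ty[3]}$; one then checks this agrees with the claimed output by reading off that $\semext{\Dsynrev{\trm;\trm[2]}_1}$ is the composite of primals (substitution becomes composition) and that $\semext{\Dsynrev{\trm;\trm[2]}_2}(g(x))(v)=\semext{\Dsynrev{\trm}_2}(g(x))(\semext{\Dsynrev{\trm[2]}_2}(\semext{\Dsynrev{\trm}_1}(g(x)))(v))$ is precisely the reversed composition produced by $\lcomp$. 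This is where the ``mirror image'' of the chain rule is verified.

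The genuinely delicate cases are \emph{evaluation} and \emph{currying}, since these are where the function-type relation $P_{\ty\To\ty[2]}$ and the tensor type $\Map{\Dsynrev{\ty}_1}{\Dsynrev{\ty[2]}_2}$ interact. For currying I would take $(f,(g,h))\in P_\ty$, unfold the definition of $P_{\ty[2]\To\ty[3]}$ by feeding in an arbitrary $(f',(g',h'))\in P_{\ty[2]}$, combine $(f,(g,h))$ and $(f',(g',h'))$ into a member of $P_{\ty\t* \ty[2]}$ via the product clause, and then apply the inductive hypothesis to $\trm:\ty\t* \ty[2]\to\ty[3]$. The remaining work is to match the $h$-component demanded by $P_{\ty\To\ty[2]}$, which adds a contribution of the form $h(x)(!g'(x)\otimes v)$, against the semantics of $\Dsynrev{\Lambda(\trm)}_2=\inv\lcurry(\cdots)\lcomp\lFst$: by the semantics of $\inv\lcurry$, the accumulated adjoint $\sum_i !x_i\otimes v_i$ is summed out as $\sum_i\semext{\cdots}(x_i)(v_i)$, and one must check this linear-and-additive bookkeeping reproduces exactly the additive $h$-component in the function-type relation. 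Evaluation is handled dually, using the stored linear map component $\tSnd((\tFst\var_1)\,\var[2])$ and the singleton $\lsing{\var[2]}$. I expect the currying case to be the main obstacle: it requires care that the $!(-)\otimes(-)$-accumulation is both well-defined (linearity in the tensored variable, plus commutativity and associativity of $+$) and numerically equal to the summed function-type contribution, and it is the one place where the non-trivial exponential structure of $\Sigma_{\CSyn}\LSyn^{op}$ from \S\ref{sec:self-dualization} is really exercised.
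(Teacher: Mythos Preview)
Your proposal is correct and follows essentially the same approach as the paper: an induction over the combinator structure of $\Syn$, handling identities, composition, the terminal map, projections, pairing, operations, and then the two genuinely interesting cases of evaluation and currying via the definition of $P_{\ty\To\ty[2]}$ together with the inductive hypothesis applied to a member of $P_{\ty\t*\ty[2]}$. One small simplification you can take: in the currying case the logical relation $P_{\ty[2]\To\ty[3]}$ only tests against a single curve $(f',(g',h'))$, so the $h$-component is only ever fed a singleton $!g'(x)\otimes v$ and the general $\sum_i$ accumulation never actually needs to be unwound---the $\inv\lcurry$ semantics collapses immediately to $\pi_1\semext{\Dsynrev{\trm}_2}(g(x),g'(x))(v)$.
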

The proof goes via induction on the typing derivation of $\trm$.

Again, the correctness theorem then follows by exactly the argument in 
the proof of Thm. \ref{thm:AD-correctness}.
\begin{theorem}[Correctness of Reverse AD]
    For any typed term  $\var:\ty\vdash \trm:\ty[2]$
    in $\Syn$, where $\ty$ and $\ty[2]$ are first-order types, we have that
    $\semext{\Dsynrev{\trm}_1}=\semext{\trm}\qquad\textnormal{and}\qquad 
    \semext{\Dsynrev{\trm}_2}=\transpose{D\semext{\trm}}.$
    \end{theorem}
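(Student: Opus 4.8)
The plan is to derive the theorem from the fundamental lemma stated just above, following verbatim the curve-based argument used for Thm.~\ref{thm:AD-correctness}. The only extra ingredient is that at first-order types the logical relation collapses to the condition it imposes on real arrays: an easy induction over the structure of first-order types (built from $\reals^n$, $\Unit$ and $\t*$) shows that $P_{\ty}=\set{(f,(g,h))\mid g=f \text{ and } h=\transpose{(Df)}}$, where $\transpose{(Df)}$ is the transposed derivative. The real-array clause is this by definition; the product clause reproduces it because the adjoint of a tupling map splits as the sum of the adjoints of its components, which is exactly the shape of $P_{\ty\t*\ty[2]}$. In particular $\semext{\ty}\cong\cRR^N$ for some $N$.

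First I would fix a point $x\in\semext{\ty}$ and, for each standard basis vector $e_i$ of $\cRR^N\cong\semext{\Dsynrev{\ty}_2}$, take the affine curve $\gamma_i(t)\defeq x+t\cdot e_i$, so that $\gamma_i(0)=x$ and $D\gamma_i(0)(1)=e_i$. By the reduction above, $(\gamma_i,(\gamma_i,\transpose{(D\gamma_i)}))\in P_{\ty}$. Applying the fundamental lemma to $\trm$ with this triple, and using that $\ty[2]$ is again first-order so that $P_{\ty[2]}$ reduces in the same way, I obtain $\gamma_i;\semext{\trm}=\gamma_i;\semext{\Dsynrev{\trm}_1}$ together with the equality of the third components, which by the chain rule (taking adjoints reverses composition) reads
\[
x\mapsto v\mapsto \transpose{(D\gamma_i(x))}\bigl(\transpose{(D\semext{\trm}(\gamma_i(x)))}(v)\bigr)= x\mapsto v\mapsto \transpose{(D\gamma_i(x))}\bigl(\semext{\Dsynrev{\trm}_2}(\gamma_i(x))(v)\bigr).
\]

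Next I would extract the two desired identities by evaluation. Evaluating the primal equation at $t=0$ gives $\semext{\trm}(x)=\semext{\Dsynrev{\trm}_1}(x)$, and since $x$ is arbitrary, $\semext{\Dsynrev{\trm}_1}=\semext{\trm}$. For the derivative, evaluating the displayed equation at $t=0$ and noting that $\transpose{(D\gamma_i(0))}(w)=\innerprod{e_i}{w}$ picks out the $i$-th coordinate, I get $\innerprod{e_i}{\transpose{(D\semext{\trm}(x))}(v)}=\innerprod{e_i}{\semext{\Dsynrev{\trm}_2}(x)(v)}$ for every $v$. As this holds for all basis vectors $e_i$ of $\semext{\Dsynrev{\ty}}$, expressing any $w$ as $\sum_{i=1}^N\innerprod{e_i}{w}\cdot e_i$ yields $\transpose{(D\semext{\trm}(x))}(v)=\semext{\Dsynrev{\trm}_2}(x)(v)$, i.e. $\semext{\Dsynrev{\trm}_2}=\transpose{D\semext{\trm}}$.

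The genuinely hard work lies in the fundamental lemma, not in this extraction. Its proof is by induction on the typing derivation of $\trm$, and the main obstacle is the function-type case: there one must unfold the counter-intuitive definitions of $\Dsynrev{\ty\To\ty[2]}$ with its $\LinFunSym$- and $\MapSym$-encoded adjoints and check that the accumulation behaviour of $\Dsynrev{\ev}$ and $\Dsynrev{\Lambda}$ matches precisely what $P_{\ty\To\ty[2]}$ demands. Granting the lemma, the remaining subtleties here are purely the reduction of $P$ at first-order types and the basis-vector argument needed to invert pairing with $\transpose{(D\gamma_i)}$; both become routine once the curves $\gamma_i$ are chosen to realise each coordinate direction $e_i$.
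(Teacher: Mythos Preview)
Your proposal is correct and follows essentially the same route as the paper: the paper states that the theorem follows ``by exactly the argument in the proof of Thm.~\ref{thm:AD-correctness}'', and your curve-based extraction via basis vectors $e_i$ and the inner-product identity $\transpose{(D\gamma_i(0))}(w)=\innerprod{e_i}{w}$ reproduces that argument faithfully. Your explicit reduction of $P_\ty$ at first-order types to the condition $g=f$, $h=\transpose{(Df)}$ is a helpful clarification that the paper leaves implicit (it only computes the product case $P_{\reals^{n_1,\ldots,n_k}}$ and remarks that the same holds for iterated products), and your concrete choice of affine curves $\gamma_i(t)=x+t\cdot e_i$ is a specific instance of the paper's abstractly postulated smooth curves.
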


\sqsubsection{How to Implement the API of the Applied Target Language}
We observe that we can implement the API of our applied target language, 
as follows, in a language that extends the source language with
types $\List{\ty}$ of lists of elements of type $\ty$ and 
a mechanism for creating abstract types, such as a basic module system 
as found in Haskell (or, a fortiori, O'Caml).
Indeed, we implement $\LinFun{\ty}{\ty[2]}$ under the hood, for example, as 
$\ty\To\ty[2]$ and $\Map{\ty}{\ty[2]}$ as $\List{\ty\t*\ty[2]}$.
The idea is that $\LinFun{\ty}{\ty[2]}$, which arose as a right adjoint in our linear language, 
is essentially a \emph{subtype} of $\ty\To\ty[2]$. On the other hand, $\Map{\ty}{\ty[2]}$, which arose as a left adjoint, 
is a \emph{quotient type} of $\List{\ty\t*\ty[2]}$.
We achieve the desired subtyping and quotient typing by exposing only the API of Fig. \ref{fig:types-maps} and 
hiding the implementation.
We can then implement this interface as follows.\\
\noindent \resizebox{\linewidth}{!}{\parbox{\linewidth}{
    \begin{align*}
    &\zero_{\Unit} = \tUnit
    \quad\trm +_{\Unit} \trm[2] = \tUnit
    \quad\zero_{\cty\t*\cty[2]} = \tPair{\zero_{\cty}}{\zero_{\cty[2]}}
    \quad \trm +_{\cty\t*\cty[2]}\trm[2] = \tPair{\tFst\trm+_{\cty}\tFst\trm[2]}{\tSnd\trm+_{\cty[2]}\tSnd\trm[2]}
\\
    &\zero_{\ty\To\cty[2]} = \fun{\_}\zero_{\cty[2]}\quad \trm +_{\ty\To\cty[2]} \trm[2] = \fun{\var}\trm\,\var +_{\cty[2]} \trm[2]\,\var
    \quad 
    \zero_{\LinFun{\ty}{\cty[2]}} = \fun{\_}\zero_{\cty[2]}
    \quad
    \trm +_{\LinFun{\ty}{\cty[2]}} \trm[2] = \fun{\var}\trm\,\var +_{\cty[2]} \trm[2]\,\var
    \\ 
    &\zero_{\Map{\ty}{\ty[2]}}\defeq \EmptyList\quad \trm +_{\Map{\ty}{\ty[2]}} \trm[2] \defeq \ListFold{\ListCons{\var}{acc}}{\var}{\trm}{acc}{\trm[2]}
    \\ &\linearid \defeq \fun{\var}\var  \quad \trm\lcomp\trm[2] \defeq \fun{\var}\trm[2]\,(\trm\,\var)\quad \applin{\trm}{\trm[2]}\defeq \trm\,\trm[2]
    \quad \lswap\,\trm\defeq \fun{\var}\fun{\var[2]}\trm\,\var[2]\,\var
    \quad \leval{\trm}\defeq \fun{\var}\var\,\trm 
    \\
  & \lsing{\trm}\defeq \fun{\var}\ListCons{\tPair{\trm}{\var}}{\EmptyList}\quad  \inv\lcurry\trm\defeq \fun{\var[3]}\ListFold{\trm\, (\tFst\var)\,(\tSnd\var)+acc}{\var}{\var[3]}{acc}{\zero}
    \\ &\lFst\defeq \fun\var {\tFst\var}\quad \lSnd\defeq \fun\var{\tSnd\var}
    \quad \lPair{\trm}{\trm[2]}\defeq \fun\var{\tPair{\trm\,\var}{\trm[2]\,\var}}
    \vspace{-1pt}
    \end{align*}}}\\
Here, we write $\EmptyList$ for the empty list, $\ListCons{\trm}{\trm[2]}$ for the list consisting 
of $\trm[2]$ with $\trm$ prepended on the front, and $\ListFold{\trm}{\var}{\trm[2]}{acc}{init}$
for (right) folding an operation $\trm$ over a list $\trm[2]$, starting from $init$.
Further, the implementer of the AD library can determine which linear operations $\lop$ 
to include within the implementation of $\LinFunSym$.
We expect these linear operations to include various forms of dense
and sparse matrix-vector multiplication as well as code for computing 
Jacobian-vector and Jacobian-adjoint products for the operations $\op$ that avoids having to 
compute the full Jacobian.

This implementation shows that the applied target language is pure and terminating,
as is standard for a $\lambda$-calculus extended with lists and some total primitive operations. 
For completeness, we describe, in \citeappx  E,
the implied big-step operational semantics
and prove its adequacy with respect to the denotational semantics $\semext{-}$.

In a principled approach to building a define-then-run AD library,
we would shield this implementation using the
abstract data types $\Map{\ty}{\ty[2]}$ and $\LinFun{\ty}{\ty[2]}$ as 
we describe,
both for reasons of type safety and because it conveys the intuition
behind the algorithm and its correctness.
However, nothing stops library implementers from exposing the full implementation.
In fact, this seems to be the approach \cite{vytiniotis2019differentiable} have taken.
A downside of that ``exposed'' approach is that the transformations then no longer 
respect equational reasoning principles.
\vspace{-3pt}

\sqsubsection{Is this practically relevant?
Why exclude $\tMap$, $\mathbf{fold}$, etc. from your source language?\vspace{-2pt}
}\label{ssec:higher-order-primitives}
The aim of this paper is to answer the foundational question 
of how to perform 
(reverse) AD at higher types.
The problem of how to 
perform AD of evaluation and currying is highly 
challenging.
For this reason, we have devoted this paper to explaining a solution to that problem 
in detail, working with a toy language with ground types 
of black-box, sized arrays $\reals^n$ with some first-order
operations $\op$.
However, many of the interesting applications only arise 
once we can use higher-order operations such as $\tMap$ 
and $\mathbf{fold}$ on $\reals^n$.

Our definitions and correctness proofs extend to this 
 setting with higher-order primitives. We
 plan to discuss and implement them in detail 
in an applied follow-up paper.
For example, if~we add higher-order operations
$\tMap\in \Syn((\reals\To\reals)\t*\reals^n,\reals^n)$ to the 
source language, to ``map'' functions over the black-box arrays,
we can define their forward and reverse derivatives~associative\\
\resizebox{\linewidth}{!}{\parbox{\linewidth}{
\begin{align*}
    &\Dsyn{\tMap}_1(f,v)\defeq \tMap (f;\tFst, v)
    \qquad \Dsyn{\tMap}_2(f,v)(g,w)\defeq \tMap\,g\,v+\mathbf{zipWith} (f;\tSnd)\,v\,w \\[-2pt]
    &\Dsynrev{\tMap}_1(f,v)\defeq \tMap (f;\tFst, v)\qquad
    \Dsynrev{\tMap}_2(f,v)(w)\;\;\;\defeq \tPair{\mathbf{zip}\,v\,w}{\mathbf{zipWith}\,(f;\tSnd)\,v\,w},\\[-16pt]
    \end{align*}}}\\
where we make use of the standard functional programming idiom $\mathbf{zip}$ and 
$\mathbf{zipWith}$.
We assume that we are working internal to the module defining 
$\LinFun{\ty}{\ty[2]}$ and $\Map{\ty}{\ty[2]}$ as we are implementing 
derivatives of language primitives. As such, we can operate directly 
on their internal representations which we simply assume to be 
plain functions and lists of pairs.
For a correctness proof, see \citeappx ~F.

Applications frequently require AD of
higher-order primitives such as differential and algebraic equation 
solvers,
e.g. for use in pharmacological modelling in 
Stan \cite{tsiros2019population}.
Currently, derivatives of such primitives are derived using
the calculus of variations (and implemented with define-by-run AD) \cite{betancourt2020discrete, hannemann2015adjoint}.
Our proof method provides a more light-weight and formal method for 
calculating, and establishing the correctness~of, derivatives for such higher-order primitives.
Indeed, most formalizations of the calculus of variations
use infinite-dimensional vector spaces and are technically involved
\cite{kriegl1997convenient}.
\clearpage
\clearpage
\sqsection{A Manual Correctness Proof of AD through Semantic Logical Relations}
\label{appx:manual-correctness}

Let us write $\ALSyn$ for the syntactic category of the applied target language.
\sqsubsubsection{Correctness of Forward AD}
By induction on the structure of types, we construct a logical relation
\\
\resizebox{\linewidth}{!}{\parbox{\linewidth}{
\begin{align*}
P_{\ty}&\subseteq (\RR\Rightarrow\semext{\ty})\times ((\RR\Rightarrow\semext{\Dsyn{\ty}_1})\times (\RR\Rightarrow\cRR\multimap\semext{\Dsyn{\ty}_2}))\\
P_{\reals^n} &\defeq \set{(f, (g, h))\mid g=f\textnormal{ and }h=Df}
\qquad\qquad P_{\Unit} \defeq \set{((),((),x\mapsto r\mapsto ()))}\\
P_{\ty\t*\ty[2]}&\defeq \set{((\sPair{f}{f'},(\sPair{g}{g'},x\mapsto r\mapsto \sPair{h(x)(r)}{h'(x)(r)})))\mid (f,(g,h))\in 
P_{\ty}, (f',(g',h'))\in P_{\ty[2]}}\\
P_{\ty\To\ty[2]}&\defeq \big\{(f,(g,h))\mid \forall (f',(g',h'))\in P_{\ty}.
(x\mapsto f(x)(f'(x)), (
x\mapsto \pi_1 (g(x)(g'(x))),\\
&\qquad\qquad {x}\mapsto {r}\mapsto (\pi_2 (g(x)(g'(x))))(h'(x)(r)) + h(x)(r)(g'(x))))\in P_{\ty[2]}
\big\}.
\end{align*}}}\\

Then, we establish the following fundamental lemma.
\begin{lemma}
    If $\trm \in\Syn(\ty,\ty[2])$ and $f:\RR\to \semext{\ty}$, $g:\RR\to\semext{\Dsyn{\ty}_1}$,
    $h:\RR\to \RR\multimap \semext{\Dsyn{\ty}_2}$ are such that $(f,(g,h))\in P_{\ty}$,
    then 
    $(f;\semext{\trm},
    (g; \semext{\Dsyn{\trm}_1}, x\mapsto r\mapsto \semext{\Dsyn{\trm}_2}(g(x))(h(x)(r))
    ))\in P_{\ty[2]}$.
\end{lemma}
\begin{proof}
We prove this by induction on the typing derivation of well-typed terms.
We start with the cases of $\ev$ and $\Lambda(\trm)$ as they are by far the most interesting.
Consider $\ev\in\Syn((\ty\To\ty[2])\t*\ty,\ty[2])$.
Then
\begin{align*}
    \Dsyn{\ev}_1 &\in\ALSyn((\Dsyn{\ty}_1\To (\Dsyn{\ty[2]}_1\t* (\Dsyn{\ty}_2\To\Dsyn{\ty[2]}_2)))\t* \Dsyn{\ty}_1,\Dsyn{\ty[2]}_1)\\
    \Dsyn{\ev}_2&\in\ALSyn(((\Dsyn{\ty}_1\To (\Dsyn{\ty[2]}_1\t* (\Dsyn{\ty}_2\To\Dsyn{\ty[2]}_2)))\t*  \Dsyn{\ty}_1), 
    \\
    &\qquad\qquad\LinFun{(\Dsyn{\ty}_1\To\Dsyn{\ty[2]}_2)\t* \Dsyn{\ty}_2}{\Dsynrev{\ty[2]}_2}
    ).
\end{align*}
Then
\begin{align*}
    \semext{\ev} (f,x) &= f\, x\\
    \semext{\Dsyn{\ev}_1} (f,x)&=\pi_1 (f\, x)\\
    \semext{\Dsyn{\ev}_2} (f, x) (g, y)&= (\pi_2(f\, x))\, y
    + g \, x. 
\end{align*}
Suppose that $(f, (g,h))\in P_{(\ty\To\ty[2])\t*\ty}$.
That is, $f = (f_1,f_2)$, $g=(g_1,g_2)$ and\\ $h(x)(r)=(h_1(x)(r),h_2(x)(r))$ for 
$(f_1,(g_1,h_1))\in P_{\ty\To\ty[2]}$ and 
$(f_2,(g_2,h_2))\in P_{\ty}$.
Then, we want to show that 
\\
\resizebox{\linewidth}{!}{\parbox{\linewidth}{
\begin{align*}
((f_1,f_2);\semext{\ev}, ((g_1,g_2);\semext{\Dsyn{\ev}_1}, 
x\mapsto r\mapsto \semext{\Dsyn{\ev}_2}(g_1(x),g_2(x))(h_1(x)(r),h_2(x)(r))))\in P_{\ty[2]}
\end{align*}}}\\
which is to say that 
\begin{align*}
    &(x\mapsto f_1(x)(f_2(x)),\\
    &\qquad (x\mapsto \pi_1 g_1(x)(g_2(x)),\\
    &\qquad  x\mapsto r\mapsto (\pi_2(g_1(x)(g_2(x))))(h_2(x)(r)) + h_1(x)(r)(g_2(x))))\in P_{\ty[2]}.
\end{align*}
This holds because $(f_1,(g_1,h_1))\in P_{\ty\To\ty[2]}$ by definition of $P_{\ty\To\ty[2]}$.

Suppose that the fundamental lemma holds for $\trm\in\Syn(\ty\t*\ty[2],\ty[3])$.
We then have that 
\begin{align*}
    &\Dsyn{\trm}_1\in \ALSyn(\Dsyn{\ty}_1\t*\Dsyn{\ty[2]}_1,\Dsyn{\ty[3]}_1)\\
    &\Dsyn{\trm}_2\in \ALSyn(\Dsyn{\ty}_1\t*\Dsyn{\ty[2]}_1,
    \LinFun{\Dsyn{\ty}_2\t*\Dsyn{\ty[2]}_2}{
    \Dsyn{\ty[3]}_2}).
\end{align*}
Then, we show that $\Lambda(\trm)\in \Syn(\ty,\ty[2]\To\ty[3])$ does as well.
Now,
\begin{align*}
    &\Dsyn{\Lambda(\trm)}_1\in \ALSyn(\Dsyn{\ty}_1,\Dsyn{\ty[2]}_1\To (\Dsyn{\ty[3]}_1\t*
    (\Dsyn{\ty[2]}_2\To\Dsyn{\ty[3]}_2)))\\
    &\Dsyn{\Lambda(\trm)}_2\in \ALSyn(\Dsyn{\ty}_1,\LinFun{\Dsyn{\ty}_2}{
    \Dsyn{\ty[2]}_1\To\Dsyn{\ty[3]}_2}).
\end{align*}
Then
\begin{align*}
    \semext{\Lambda(\trm)}(x)(y) &= \semext{\trm}(x,y)\\
    \semext{\Dsyn{\Lambda(\trm)}_1}(x)(y)&= (\semext{\Dsyn{\trm}}(x, y), w\mapsto \semext{\Dsyn{\trm}_2}((x, y),
    (0, w)
    ))\\
    \semext{\Dsyn{\Lambda(\trm)}_2}(x)(v)(y)&= \semext{\Dsyn{\trm}_2}(x, y)(v, 0) .
\end{align*}
Suppose that $(f, (g, h))\in P_{\ty}$.
We need to show that $(f;\semext{\Lambda(\trm)},
(g; \semext{\Dsyn{\Lambda(\trm)}_1},\\
x\mapsto r\mapsto \semext{\Dsyn{\Lambda(\trm)}_2}(g(x))(h(x)(r))))\in 
P_{\ty[2]\To\ty[3]}$.
That is, that 
\begin{align*}
&(x\mapsto (y\mapsto \semext{\trm}(f(x), y)),\\
&\qquad (x\mapsto (y\mapsto(\semext{\Dsyn{\trm}}(g(x), y), w\mapsto \semext{\Dsyn{\trm}_2}(g(x), y)
(0, w)
)),\\
&\qquad x\mapsto r\mapsto (y\mapsto \semext{\Dsyn{\trm}_2}(g(x), y)(h(x)(r), 0)))  )\in 
P_{\ty[2]\To\ty[3]}.
\end{align*}
This requirement is equivalent to the statement that for all $(f', (g', h'))\in P_{\ty[2]}$,
\\
\resizebox{\linewidth}{!}{\parbox{\linewidth}{
\begin{align*}
    &(x\mapsto \semext{\trm}(f(x), f'(x)),\\
    &\qquad (x\mapsto \semext{\Dsyn{\trm}}(g(x), g'(x)),\\
    &\qquad x\mapsto r\mapsto 
    \semext{\Dsyn{\trm}_2}(g(x), g'(x))
(0, h'(x,r))
 + \semext{\Dsyn{\trm}_2}(g(x), g'(x))(h(x)(r), 0)))  \in 
    P_{\ty[3]}
    \end{align*}}}\\
    As $w\mapsto \semext{\Dsyn{\trm}_2}(g(x), g'(x))(w)$ is linear in $w$
    by virtue of its type,
    it is enough to show that 
    \begin{align*}
        &(x\mapsto \semext{\trm}(f(x), f'(x)),\\
        &\qquad (x\mapsto \semext{\Dsyn{\trm}}(g(x), g'(x)),\\
        &\qquad (x,r)\mapsto 
        \semext{\Dsyn{\trm}_2}(g(x), g'(x))
    (h(x)(r), h'(x)(r)) ))\in 
        P_{\ty[3]}
        \end{align*}
    which is true as $(f,(g,h))\in P_{\ty}$ and $(f',(g',h'))\in P_{\ty[2]}$
    by assumption while $\semext{\trm}$ respects the logical relation by our 
    induction hypothesis.

    Consider $\tFst\in \Syn(\ty\t*\ty[2],\ty)$
    (the case for $\tSnd$ will be almost identical so we omit it).
    Then
    \begin{align*}
        &\Dsyn{\tFst}_1\in \ALSyn(\Dsyn{\ty}_1\t*\Dsyn{\ty[2]}_1,\Dsyn{\ty}_1)\\
        &\Dsyn{\tFst}_2\in \ALSyn(\Dsyn{\ty}_1\t*\Dsyn{\ty[2]}_1
        ,\LinFun{\Dsyn{\ty}_2\t*\Dsyn{\ty[2]}_2}{\Dsyn{\ty}_2})
    \end{align*}
    and 
    \begin{align*}
        &\semext{\tFst}(x,y)=x
        &\semext{\Dsyn{\tFst}_1}(x,y)=x\\
        &\semext{\Dsyn{\tFst}_2}(x,y)(v,w)= v.
    \end{align*}
    Suppose that $(f, (g,h))\in P_{\ty\t*\ty[2]}$.
    That is, $f=(f_1,f_2)$, $g=(g_1,g_2)$ and $h(x)(r)=(h_1(x)(r),h_2(x)(r))$ 
    for some $(f_1, (g_1, h_1))\in P_{\ty}$ and $(f_2, (g_2, h_2))\in P_{\ty[2]}$.
    Then, we need to show that 
    \begin{align*}
        &(f;\semext{\tFst},(g;\semext{\Dsyn{\tFst}_1}, x\mapsto r\mapsto \semext{\Dsyn{\tFst}_2}(g(x))(h(x)(r))))\in P_{\ty}
    \end{align*}
    i.e. 
    \begin{align*}
        &(f_1,(g_1, h_1))\in P_{\ty}.
    \end{align*}
    But that's true by assumption!

    Suppose that $\trm\in \Syn(\ty,\ty[2])$ and $\trm[2]\in\Syn(\ty,\ty[3])$ 
    respect the logical relation.
    Then, we want to show that $\tPair\trm{\trm[2]}\in\Syn(\ty,\ty[2]\t*\ty[3])$ does as well.
    Now,
    \begin{align*}
        &\Dsyn{\tPair\trm{\trm[2]}}_1\in \ALSyn(\Dsyn{\ty}_1,\Dsyn{\ty[2]}_1\t*\Dsyn{\ty[3]}_1)\\
        &\Dsyn{\tPair\trm{\trm[2]}}_2\in \ALSyn(\Dsyn{\ty}_1
        ,\LinFun{\Dsyn{\ty}_2}{\Dsyn{\ty[2]}_2\t*\Dsyn{\ty[3]}_2})
    \end{align*}
    and 
    \begin{align*}
        &\semext{\tPair\trm{\trm[2]}}(x)=(\semext{\trm}(x),\semext{\trm[2]}(x))\\
        &\semext{\Dsyn{\tPair\trm{\trm[2]}}_1}(x)=(\semext{\Dsyn{\trm}_1}(x),\semext{\Dsyn{\trm[2]}_1}(x))\\
        &\semext{\Dsyn{\tPair\trm{\trm[2]}}_2}(x)(v)=(\semext{\Dsyn{\trm}_2}(x)(v),\semext{\Dsyn{\trm[2]}_2}(x)(v)).
    \end{align*}
    Suppose that $(f, (g,h))\in P_{\ty}$.
    We want to show that 
    \begin{align*}
        &(f;\semext{\tPair\trm{\trm[2]}},(g;\semext{\Dsyn{\tPair\trm{\trm[2]}}_1}, x\mapsto r\mapsto 
        \semext{\Dsyn{\tPair\trm{\trm[2]}}_2}(g(x))(h(x)(r))))\in P_{\ty[2]\t*\ty[3]}
    \end{align*}
    i.e.
    \begin{align*}
        &((f;\semext{\trm},f;\semext{\trm[2]}),((g;\semext{\Dsyn{\trm}_1},g;\semext{\Dsyn{\trm[2]}_1}),\\
        &\qquad
        x\mapsto r\mapsto
        (\semext{\Dsyn{\trm}_2}(g(x))(h(x)(r)),\semext{\Dsyn{\trm[2]}_2}(g(x))(h(x)(r)))))\in P_{\ty[2]\t*\ty[3]}.
    \end{align*}
    Which holds by definition of $P_{\ty[2]\t*\ty[3]}$ as $\trm$ and $\trm[2]$ respect the logical relation by assumption.

    Consider $\tUnit\in\Syn(\ty,\Unit)$.
    Observe that 
    \begin{align*}
        &\Dsyn{\tUnit}_1\in \ALSyn(\Dsyn{\ty}_1,\Unit)\\
        &\Dsyn{\tUnit}_2\in \ALSyn(\Dsyn{\ty}_1
        ,\LinFun{\Dsyn{\ty}_2}{\Unit})
    \end{align*}
    and 
    \begin{align*}
        &\semext{\tUnit}(x)=()\\
        &\semext{\Dsyn{\tUnit}_1}(x)=()\\
        &\semext{\Dsyn{\tUnit}_2}(x)(v)=().
    \end{align*}
    Suppose that $(f,(g,h))\in P_{\ty}$.
    Then, we need to show that $$(f;\semext{\tUnit}, (g\semext{\Dsyn{\tUnit}_1}, x\mapsto r\mapsto \semext{\Dsyn{\tUnit}_2}(g(x))(h(x)(r))))\in P_{\Unit}.$$
    That is, we need to show that $((), ((), ()))\in P_{\Unit}$, but that holds by definition of $P_{\Unit}$.

    Consider $\id\in\Syn(\ty,\ty)$.
    Observe that 
    \begin{align*}
        &\Dsyn{\id}_1\in \ALSyn(\Dsyn{\ty}_1,\Dsyn{\ty}_1)\\
        &\Dsyn{\id}_2\in \ALSyn(\Dsyn{\ty}_1
        ,\LinFun{\Dsyn{\ty}_2}{\Dsyn{\ty}_2})
    \end{align*}
    and 
    \begin{align*}
        &\semext{\id}(x)=x\\
        &\semext{\Dsyn{\id}_1}(x)=x\\
        &\semext{\Dsyn{\id}_2}(x)(v)=v.
    \end{align*}
    Suppose that $(f,(g,h))\in P_{\ty}$.
    Then, we need to show that $$(f;\semext{\id}, (g\semext{\Dsyn{\id}_1},
    x\mapsto r\mapsto \semext{\Dsyn{\id}_2}(g(x))(h(x)(r))))\in P_{\ty}.$$
    That is, we need to show that $(f, (g, h))\in P_{\ty}$, but that holds by assumption.

    Consider composition: suppose that $\trm\in\Syn(\ty,\ty[2])$ and
    $\trm[2]\in\Syn(\ty[2],\ty[3])$ both respect the logical relation.
    Then, $\trm;\trm[2]\in\Syn(\ty,\ty[3])$.
    Further,
    \begin{align*}
        &\semext{\trm;\trm[2]}(x)=\semext{\trm[2]}(\semext{\trm}(x))\\
        &\semext{\Dsyn{\trm;\trm[2]}_1}(x)=\semext{\Dsyn{\trm[2]}_1}(\semext{\Dsyn{\trm}_1}(x))\\
        &\semext{\Dsyn{\trm;\trm[2]}_2}(x)(v)=\semext{\Dsyn{\trm[2]}_2}(\semext{\Dsyn{\trm}_1(x)})(\semext{\Dsyn{\trm}_2}(x)(v)).
    \end{align*}
    Suppose that $(f, (g,h))\in P_{\ty}$.
    We need to show that $$(f;\semext{\trm;\trm[2]}, (g\semext{\Dsyn{\trm;\trm[2]}_1}, 
    x\mapsto r\mapsto \semext{\Dsyn{\trm;\trm[2]}_2}(g(x))(h(x)(r))))\in P_{\ty[3]}.$$
    That is, 
    \begin{align*}
        &(f;\semext{\trm};\semext{\trm[2]},\\
        &(g;\semext{\Dsyn{\trm}_1};\semext{\Dsyn{\trm[2]}_1},\\
        &\quad  x\mapsto r\mapsto \semext{\Dsyn{\trm[2]}_2}(\semext{\Dsyn{\trm}_1(x)})(\semext{\Dsyn{\trm}_2}(g(x))(h(x)(r)))))\in P_{\ty[3]}.
    \end{align*}
    But that follows from the fact that $\trm[2]$ respects the logical relation as 
    \begin{align*}
        &(f;\semext{\trm},\\
        &(g;\semext{\Dsyn{\trm}_1},\\
        &\quad  x\mapsto r\mapsto \semext{\Dsyn{\trm}_2}(g(x))(h(x)(r))))\in P_{\ty[2]}
    \end{align*}
    since $\trm$ respects the logical relation.

The base cases of operations hold by the chain rule.
Indeed, consider\\ $\op\in \Syn(\reals^{n_1}\t*\ldots\t*\reals^{n_k},\reals^m)$.
Note that $$\op=\Dsyn{\op}_1\in \ALSyn(\reals^{n_1}\t*\ldots\t*\reals^{n_k},\reals^m)$$
and $${D\op}=\Dsyn{\op}_2\in\ALSyn(\reals^{n_1}\t*\ldots\t*\reals^{n_k}
,\LinFun{ \reals^{n_1}\t*\ldots\t*\reals^{n_k}}{ \reals^m}).$$
We have that 
\begin{align*}
&\semext{\op}(x)=\semext{\Dsynrev{\op}_1}(x)\\
&\semext{\Dsynrev{\op}_2}(x)(v)=\semext{{D \op}}(x)(v)={D\semext{\op}}(x)(v),
\end{align*}
where we use the crucial assumption that the derivatives of primitive operations 
are implemented correctly.
Then, let $(f,(g,h))\in P_{\reals^{n_1}\t*\ldots\t*\reals^{n_k}}$.
That is,$(f,(g,h))=((f_1,\ldots,f_k),((g_1,\ldots,g_k),x\mapsto r\mapsto (h_1(x)(r),\ldots, h_k(x)(r))))$,
for $(f_i,(g_i,h_i))\in P_{\reals^{n_i}}$, for $1\leq i\leq k$.
We want to show that $$(f;\semext{\op},(g;\semext{\Dsynrev{\op}_1},
x\mapsto r\mapsto  \semext{\Dsynrev{\op}_2}(x)(r)))\in P_{\reals^m}.$$
That is,
\begin{align*}
    (f;\semext{\op},(g;\semext{\op},
    x\mapsto r\mapsto {D\semext{\op}}(g(x))(h(x)(r))))\in P_{\reals^m}.
\end{align*}
That is,
\begin{align*}
    &((f_1,\ldots,f_k);\semext{\op},((g_1,\ldots,g_k);\semext{\op},\\&\quad
    x\mapsto r\mapsto {D\semext{\op}}(g_1(x),\ldots, g_k(x))(h_1(x,r),\ldots, h_k(x, r))))\in P_{\reals^m}.
\end{align*}
By the assumption that $(f_,(g_i, h_i))\in P_{\reals^{n_i}}$, we have that 
$g_i=f_i$ and $h_i=D f_i$.
Therefore, we need to show that 
\begin{align*}
    &((f_1,\ldots,f_k);\semext{\op},((f_1,\ldots,f_k);\semext{\op},\\
&\qquad\qquad    x\mapsto r\mapsto 
D\semext{\op}(f_1(x),\ldots, f_k(x))(Df_1(x)(r),\ldots, Df_k(x)(r)))) \in P_{\reals^m}.
\end{align*}
Using the chain rule for multivariate differentiation (and a little bit of linear algebra), this is equivalent to,
\begin{align*}
    ((f_1,\ldots,f_k);\semext{\op},((f_1,\ldots,f_k);\semext{\op},
    D((f_1,\ldots, f_k);\semext{\op})
    ))\in P_{\reals^m}.
\end{align*}
Therefore, the fundamental lemma follows.
\end{proof}
Next, the correctness theorem follows by exactly the argument in 
the proof of Thm. \ref{thm:AD-correctness}.
\begin{theorem}[Correctness of Forward AD]
    For any typed term  $\var:\ty\vdash \trm:\ty[2]$
    in $\Syn$, where $\ty$ and $\ty[2]$ are first-order types, we have that
    $$\semext{\Dsyn{\trm}_1}=\semext{\trm}\qquad\textnormal{and}\qquad 
    \semext{\Dsyn{\trm}_2}=D\semext{\trm}.$$
    \end{theorem}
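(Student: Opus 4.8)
The plan is to deduce the theorem directly from the fundamental lemma just established, reusing the curve-based argument from the forward-AD half of the proof of Thm.~\ref{thm:AD-correctness}. The only genuinely new ingredient is checking that, when restricted to first-order types, the inductively defined logical relation $P_{\ty}$ collapses to the expected graph condition ``primal agrees and tangent is the derivative''; everything else is bookkeeping that mirrors the earlier argument.

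First I would fix arbitrary $x\in\semext{\ty}$ and an arbitrary tangent vector $v$. Since $\ty$ is first-order, $\semext{\ty}\cong\RR^N$ for some $N$, so I can pick a smooth curve $\gamma\colon\RR\to\semext{\ty}$ with $\gamma(0)=x$ and $D\gamma(0)(1)=v$ (for instance the affine curve $t\mapsto x+t\cdot v$). I then observe that $(\gamma,(\gamma,D\gamma))\in P_{\ty}$. This holds because at the ground types the predicate $P_{\reals^n}$ is exactly $\set{(f,(g,h))\mid g=f\textnormal{ and }h=Df}$, and because $\ty$, being first-order, is built from $\reals^n$ and $\Unit$ using $\t*$; the product and unit clauses of $P$ then force the component conditions, using that derivatives of tuple-valued maps are computed componentwise.

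Next I would invoke the fundamental lemma on the morphism $\trm\in\Syn(\ty,\ty[2])$ with $f=g=\gamma$ and $h=D\gamma$, which yields
\[
\bigl(\gamma;\semext{\trm},\ (\gamma;\semext{\Dsyn{\trm}_1},\ x\mapsto r\mapsto \semext{\Dsyn{\trm}_2}(\gamma(x))(D\gamma(x)(r)))\bigr)\in P_{\ty[2]}.
\]
Because $\ty[2]$ is also first-order, membership in $P_{\ty[2]}$ again unwinds (by the same componentwise argument) to two statements: $\gamma;\semext{\trm}=\gamma;\semext{\Dsyn{\trm}_1}$, and that the third component equals $D(\gamma;\semext{\trm})$, which by the chain rule is $x\mapsto r\mapsto D\semext{\trm}(\gamma(x))(D\gamma(x)(r))$. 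Finally I evaluate: the first equality at $t=0$ gives $\semext{\trm}(x)=\semext{\Dsyn{\trm}_1}(x)$, and the second at $t=0$, $r=1$ gives $\semext{\Dsyn{\trm}_2}(x)(v)=D\semext{\trm}(x)(v)$. Since $x$ and $v$ were arbitrary, both desired identities follow.

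The main obstacle is the step of \emph{recognising} $P_{\ty}$ and $P_{\ty[2]}$ as the expected graph predicates at first-order types: one must verify that the inductively defined product and unit clauses of $P$ really do reduce to the componentwise ``$g=f$ and $h=Df$'' condition, which hinges on the elementary fact that differentiation of a tuple-valued smooth map is performed coordinatewise (together with the representability of tangent vectors at $\semext{\ty}\cong\RR^N$ by curves). Everything downstream — the chain-rule rewriting and the evaluations at $0$ and $1$ — is then routine and identical to the forward-AD half of the proof of Thm.~\ref{thm:AD-correctness}.
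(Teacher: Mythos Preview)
Your proposal is correct and follows essentially the same approach as the paper: the paper simply says the theorem ``follows by exactly the argument in the proof of Thm.~\ref{thm:AD-correctness}'', and your write-up spells out precisely that argument (pick a curve through $x$ with tangent $v$, note $(\gamma,(\gamma,D\gamma))\in P_\ty$, apply the fundamental lemma, unwind $P_{\ty[2]}$ at first-order types, evaluate at $0$ and $1$). The only difference is cosmetic: you make explicit the check that $P$ at first-order types reduces to the graph condition ``$g=f$ and $h=Df$'', which the paper leaves as a one-line observation.
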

    
\sqsubsubsection{Correctness of Reverse AD}
We define, by induction on the structure of types, a logical relation
\\
\resizebox{\linewidth}{!}{\parbox{\linewidth}{
\begin{align*}
P_{\ty}&\subseteq (\RR\To\semext{\ty})\times ((\RR\To\semext{\Dsynrev{\ty}_1})\times (\RR\To\semext{\Dsynrev{\ty}_2}\multimap \cRR))\\
P_{\reals^n} &\defeq \big\{(f, (g, h))\mid g=f
\textnormal{ and }h=\transpose{(Df)}\big\}\qquad\qquad
P_{\Unit}\defeq \set{((),((),x\mapsto v\mapsto 0))}\\
P_{\ty\t*\ty[2]}&\defeq \set{((\sPair{f}{f'},(\sPair{g}{g'},{x}\mapsto {v}\mapsto h(x)(\pi_1 v)+h'(x)(\pi_2 v))))\mid (f,(g,h))\in 
P_{\ty}, (f',(g',h'))\in P_{\ty[2]}}\\
P_{\ty\To\ty[2]}&\defeq \big\{(f,(g,h))\mid 
\forall (f',(g',h'))\in P_{\ty}.
(x\mapsto f(x)(f'(x)), (
x\mapsto \pi_1 (g(x)(g'(x))), \\&\qquad\qquad {x}\mapsto {v}\mapsto h({x})({!{{g'(x)}\otimes{v}}})
+h'(x)((\pi_2(g(x)(g'(x))))v)))\in P_{\ty[2]}
\big\}
\end{align*}}}

Then, we establish the following fundamental lemma.
\begin{lemma}
    If $\trm \in\Syn(\ty,\ty[2])$ then  $f:\RR\to \semext{\ty}$, $g:\RR\to\semext{\Dsynrev{\ty}_1}$,
    $h:\RR\times \semext{\Dsynrev{\ty}_2}\to\RR$ are such that $(f,(g,h))\in P_{\ty}$,
    then
    $(f;\semext{\trm},
    (g; \semext{\Dsynrev{\trm}_1},{x}\mapsto {v}\mapsto
    h(x)( \semext{\Dsynrev{\trm}_2}(g(x))( v))))\in P_{\ty[2]}$.
    \end{lemma}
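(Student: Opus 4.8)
The plan is to proceed by structural induction on the combinator representation of $\trm$ in the free Cartesian closed category $\Syn$, mirroring precisely the forward-mode argument carried out above. Since every morphism of $\Syn$ is generated by identities, composition, terminal maps $\tUnit$, projections $\tFst$/$\tSnd$, pairing $\tPair{-}{-}$, evaluation $\ev$, currying $\Lambda(-)$, and the primitive operations $\op$, it suffices to verify the statement for each generator and to check closure under composition. In each case I would unfold the definitions of $\semext{\Dsynrev{\trm}_1}$ and $\semext{\Dsynrev{\trm}_2}$ from the reverse-mode macro together with the interpretation $\semext{-}$ in $\DiffMonNL$, and then exhibit membership in the target relation $P_{\ty[2]}$ using the inductive hypotheses.

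The structural cases are routine. For $\id$ the claim is immediate. For composition $\trm;\trm[2]$ I would use that $\semext{\Dsynrev{\trm;\trm[2]}_2}(x)(v) = \semext{\Dsynrev{\trm}_2}(x)(\semext{\Dsynrev{\trm[2]}_2}(\semext{\Dsynrev{\trm}_1}(x))(v))$ --- note the reversed order of application characteristic of reverse AD --- and then apply the hypothesis for $\trm$ to the conclusion delivered by the hypothesis for $\trm[2]$. For $\tFst$, $\tSnd$, $\tPair{-}{-}$ and $\tUnit$ the verifications reduce directly to the definitions of $P_{\ty\t*\ty[2]}$ and $P_{\Unit}$, using that the adjoint maps $\lFst$, $\lSnd$, $\lPair{-}{-}$ insert or project $\zero$ in the appropriate components. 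The base case of operations is where genuine analysis occurs, and it follows the forward pattern: for $\op$ I would invoke the assumption $\semext{\Dsynrev{\op}_2} = \transpose{D\semext{\op}}$ together with the chain rule and the linear-algebra identity $\transpose{(A;B)} = \transpose{B};\transpose{A}$ to rewrite $x\mapsto v\mapsto \transpose{Df}(x)(\transpose{D\semext{\op}}(f(x))(v))$ as $\transpose{(D(f;\semext{\op}))}$, exhibiting membership in $P_{\reals^m}$.

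The hard part will be the $\ev$ and especially the $\Lambda(-)$ cases, as these are exactly where the tensor and linear-function type formers enter. Here I would unfold the semantics of $\inv\lcurry$, namely $\semext{\inv\lcurry\trm}(v)(\sum_i !x_i\otimes y_i) = \sum_i \semext{\trm}(v)(x_i)(y_i)$, and of the singleton $\lsing{\trm}$, and then match the result against the $h$-component of $P_{\ty\To\ty[2]}$, which feeds its linear map the tensor $!g'(x)\otimes v$. For $\ev$ the point is that $\Dsynrev{\ev}_2$ packages its incoming primal into such a singleton tensor, so evaluating the relation's $h$-field collapses the tensor back to an ordinary application; for $\Lambda(\trm)$ the $\inv\lcurry$ must unpack an arbitrary element $\sum_i !x_i\otimes y_i$ of $\Map{\Dsynrev{\ty[2]}_1}{\Dsynrev{\ty[3]}_2}$ and redistribute the adjoints over the sum.

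Verifying these equalities crucially exploits that the relevant second-component maps are genuine monoid homomorphisms, i.e.\ linear in their tensor argument, together with the fact that $(\zero,+)$ forms a commutative monoid, so that the finite sum over the multiset underlying $!(-)\otimes(-)$ can be reorganised to agree on both sides and the identification $xs + \delta_{(\trm,\trm[2])} + \delta_{(\trm,\trm[2]')} = xs + \delta_{(\trm,\trm[2]+\trm[2]')}$ of the quotient type respected. This reorganisation, which is exactly where the ``fanout becomes addition'' phenomenon is discharged, is the step most in danger of subtle error, and the one I would write out in full detail; the remaining cases I expect to dispatch by direct computation.
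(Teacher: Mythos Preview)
Your proposal is correct and follows the same route as the paper's proof: structural induction on the combinator generators of $\Syn$, discharging $\op$ via the chain rule and transposition, and the remaining cases by direct computation using linearity of $h$ in its second argument. One small recalibration of emphasis: the $\Lambda$ case is easier than you anticipate, because the definition of $P_{\ty[2]\To\ty[3]}$ only ever tests $h$ at \emph{singleton} tensors $!g'(x)\otimes v$, so evaluating $\semext{\Dsynrev{\Lambda(\trm)}_2}(g(x))$ there is a one-term computation with no multiset to reorganise, and the case reduces immediately to the induction hypothesis for $\trm$ at the product triple in $P_{\ty\t*\ty[2]}$; the genuine uses of linearity (absorbing $\zero$ and distributing over $+$) live instead in the $\tFst$, $\tSnd$, $\tPair{-}{-}$ and $\tUnit$ cases.
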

\begin{proof}
    The proof goes by induction on the typing derivation of well-typed terms $\trm\in\Syn$.
    Indeed, we first consider the cases of evaluation and currying, as they are the most 
    interesting.
Consider $\ev \in\Syn((\ty\To\ty[2])\t* \ty, \ty[2])$.
Then
\begin{align*}
    \Dsynrev{\ev}_1 &\in\ALSyn((\Dsynrev{\ty}_1\To (\Dsynrev{\ty[2]}_1\t* (\Dsynrev{\ty[2]}_2\To\Dsynrev{\ty}_2)))\t* \Dsynrev{\ty}_1,\Dsynrev{\ty[2]}_1)\\
    \Dsynrev{\ev}_2&\in\ALSyn(((\Dsynrev{\ty}_1\To (\Dsynrev{\ty[2]}_1\t* (\Dsynrev{\ty[2]}_2\To\Dsynrev{\ty}_2)))\t* \Dsynrev{\ty}_1), 
    \\
    &\qquad\qquad\LinFun{\Dsynrev{\ty[2]}_2}{\Map{\Dsynrev{\ty}_1}{\Dsynrev{\ty[2]}_2}\t* \Dsynrev{\ty}_2}).
\end{align*}
Then
\begin{align*}
    \semext{\ev} (f,x) &= f\, x\\
    \semext{\Dsynrev{\ev}_1} (f,x)&=\pi_1 (f\, x)\\
    \semext{\Dsynrev{\ev}_2} (f,x)(v) &= (!x\otimes v,(\pi_2 (f\, x))\,v). 
\end{align*}
Suppose that $(f', (g',h'))\in P_{(\ty\To\ty[2])\t* \ty}$.
That is, $(f', (g',h'))=((f_1,f_2),((g_1,g_2), x\mapsto v\mapsto h_1(x)(\pi_1 v)+h_2(x)(\pi_2 v)))$
for some $(f_1,(g_1,h_1))\in P_{\ty\To\ty[2]}$ and $(f_2,(g_2,h_2))\in P_{\ty}$.
We want to show that
\\
\resizebox{\linewidth}{!}{\parbox{\linewidth}{
\begin{align*}(&x\mapsto \semext{\ev}(f_1(x),f_2(x)),\\ &(x\mapsto \semext{\Dsynrev{\ev}_1}(g_1(x),g_2(x)),\\
&x\mapsto v\mapsto h_1(x)(\pi_1 \semext{\Dsynrev{\ev}_2}(g_1(x),g_2(x))(v)+ h_2(x)(\pi_2\semext{\Dsynrev{\ev}_2}(g_1(x),g_2(x))(v)))))\in P_{\ty[2]}.
\end{align*}}}\\
That is,
\\
\resizebox{\linewidth}{!}{\parbox{\linewidth}{
\begin{align*}(&x\mapsto f_1(x)(f_2(x)),\\
    &(x\mapsto \pi_1(g_1(x)(g_2(x))),\\
    &x\mapsto v\mapsto 
    h_1(x)( \pi_1 (!g_2(x)\otimes v,(\pi_2 (g_1(x)\, g_2(x)))\,v))+
    h_2(x)( \pi_2 (!g_2(x)\otimes v,(\pi_2 (g_1(x)\, g_2(x)))\,v)))) 
    \in P_{\ty[2]}.
    \end{align*}}}\\
That is,
\begin{align*}(&x\mapsto f_1(x)(f_2(x)),\\
    &(x\mapsto \pi_1(g_1(x)(g_2(x))),\\
    &x\mapsto v\mapsto 
    h_1(x)( !g_2(x)\otimes v)+
    h_2(x)( (\pi_2 (g_1(x)\, g_2(x)))\,v)))
    \in P_{\ty[2]}.
    \end{align*}
Now, this is precisely the condition that $(f_1,(g_1,h_1))\in P_{\ty\To\ty[2]}$.

Suppose that $\trm\in \Syn(\ty\t*\ty[2],\ty[3])$ is such that 
$\semext{\Dsynrev{\trm}}$ respects the logical relation.
Observe that $$\Dsynrev{\trm}_1\in \ALSyn(\Dsynrev{\ty}_1\t*\Dsynrev{\ty[2]}_1,\Dsynrev{\ty[3]}_1)$$ and 
$$\Dsynrev{\trm}_2\in\ALSyn((\Dsynrev{\ty}_1\t*\Dsynrev{\ty[2]}_1),\LinFun{\Dsynrev{\ty[3]}_2}{\Dsynrev{\ty}_2\t*\Dsynrev{\ty[2]}_2}).$$
We show that $\semext{\Dsynrev{\Lambda(\trm)}}$ also respects the relation.
Observe that\\ $\Lambda(\trm)\in \Syn(\ty,\ty[2]\To\ty[3])$ and \\
$\Dsynrev{\Lambda(\trm)}_1\in\ALSyn(\Dsynrev{\ty}_1,\Dsynrev{\ty[2]}_1\To(\Dsynrev{\ty[3]}_1\t* (\Dsynrev{\ty[3]}_2\To\Dsynrev{\ty[2]}_2)))$
and\\ $\Dsynrev{\Lambda(\trm)}_2\in\ALSyn(\Dsynrev{\ty}_1
,\LinFun{\Map{\Dsynrev{\ty[2]}_1}{\Dsynrev{\ty[3]}_2}}{\Dsynrev{\ty}_2})$.\\
We have that
\begin{align*}
    \semext{\Dsynrev{\Lambda(\trm)}_1}(x)(y)&=(\semext{\Dsynrev{\trm}_1}(x,y), v\mapsto \pi_2\semext{\Dsynrev{\trm}_2}((x,y),v))\\
    \semext{\Dsynrev{\Lambda(\trm)}_2}(x)( \sum_{i=1}^n!y_i\otimes v_i)&=
    \sum_{i=1}^n\pi_1 \semext{\Dsynrev{\trm}_2}(x,y_i)(v_i).
\end{align*}
Suppose that $(f, (g,h))\in P_{\ty}$.
We want to show that 
\begin{align*}
    (&x\mapsto \semext{\Lambda(\trm)}(f(x)),\\
    &(x\mapsto \semext{\Dsynrev{\Lambda(\trm)}_1}(g(x)),\\
    &x\mapsto v\mapsto h(x)(\semext{\Dsynrev{\Lambda(\trm)}_2}(g(x))(  v))))\in P_{\ty[2]\To \ty[3]}.
    \end{align*}
That is, we want to establish that
for all $(f', (g', h'))\in P_{\ty[2]}$, 
we have that\\
\resizebox{\linewidth}{!}{\parbox{\linewidth}{
\begin{align*}
(& x\mapsto \semext{\Lambda(\trm)}(f(x))(f'(x)),\\
&(x\mapsto \pi_1\semext{\Dsynrev{\Lambda(\trm)}_1}(g(x))(g'(x))\\
&x\mapsto v\mapsto h(x)(\semext{\Dsynrev{\Lambda(\trm)}_2}(g(x))( !g'(x)\otimes v))+ h'(x)( (\pi_2(\semext{\Dsynrev{\Lambda(\trm)}_1}(g(x))(g'(x)))v))))
\in P_{\ty[3]}.
\end{align*}}}\\
That is,\\
\resizebox{\linewidth}{!}{\parbox{\linewidth}{
\begin{align*}
    (& x\mapsto \semext{\trm}(f(x),f'(x)),\\
    &(x\mapsto \semext{\Dsynrev{\trm}_1}(g(x),g'(x))\\
    &x\mapsto v\mapsto h(x)( \pi_1 \semext{\Dsynrev{\trm}_2}(g(x),g'(x))(v))+ 
    h'(x)( \pi_2 \semext{\Dsynrev{\trm}_2}(g(x),g'(x))(v))))
    \in P_{\ty[3]}.
    \end{align*}}}\\
Now, we have that $((f,f'), ((g,g'), x\mapsto v\mapsto h(x)( \pi_1 v)+h'(x)(\pi_2 v)))\in P_{\ty\t*\ty[2]}$, 
by definition of $P_{\ty\t*\ty[2]}$.
Moreover, $\semext{\Dsynrev{\trm}}$ respects the logical relation, meaning that 
\\
\resizebox{\linewidth}{!}{\parbox{\linewidth}{
\begin{align*}
(& x\mapsto \semext{\trm}(f(x),f'(x)),\\
&(x\mapsto \semext{\Dsynrev{\trm}_1}(g(x),g'(x)),\\
&x\mapsto v\mapsto h(x)( \pi_1\semext{\Dsynrev{\trm}_2}(g(x),g'(x))(v))+
h'(x)( \pi_2\semext{\Dsynrev{\trm}_2}(g(x),g'(x))(v))
)
)\in P_{\ty[3]},
\end{align*}}}\\
which is what we wanted to show!

Next, we turn to product projections.
We consider $\tFst$. The other projection is analogous.
We have that $\tFst\in \Syn(\ty\t*\ty[2],\ty)$.
Therefore, $\Dsynrev{\tFst}_1\in \ALSyn(\Dsynrev{\ty}_1\t*\Dsynrev{\ty[2]}_1,\Dsynrev{\ty}_1)$
and\\ $\Dsynrev{\tFst}_2\in \ALSyn((\Dsynrev{\ty}_1\t*\Dsynrev{\ty[2]}_1),\LinFun{\Dsynrev{\ty}_2}{\Dsynrev{\ty}_2\t*\Dsynrev{\ty[2]}_2})$.
We have that 
\begin{align*}
\semext{\Dsynrev{\tFst}_1}(x,y)&=x\\
\semext{\Dsynrev{\tFst}_2}(x,y)(v)&=(v,0).
\end{align*}
Suppose that $(f,(g,h))\in P_{\ty\t*\ty[2]}$.
That is, $(f,(g,h))=((f_1,f_2),((g_1,g_2), x\mapsto v\mapsto h_1(x)(\pi_1 v)+h_2(x)(\pi_2 v)))$ 
for $(f_1,(g_1,h_1))\in P_{\ty}$ and $(f_2,(g_2,h_2))\in P_{\ty[2]}$.
We have to show that
\\
\resizebox{\linewidth}{!}{\parbox{\linewidth}{
\begin{align*}
(&x\mapsto \semext{{\tFst}}(f_1(x),f_2(x))\\
(& x\mapsto \semext{\Dsynrev{\tFst}_1}(g_1(x),g_2(x)),\\
& x\mapsto v\mapsto h_1(x)( \pi_1\semext{\Dsynrev{\tFst}_2}(g_1(x),g_2(x))(v))+h_2(x)( \pi_2\semext{\Dsynrev{\tFst}_2}(g_1(x),g_2(x))(v))))\in P_{\ty}.
\end{align*}}}\\
That is, 
\begin{align*}
    (&x\mapsto f_1(x)\\
    (& x\mapsto g_1(x),\\
    & x\mapsto v\mapsto h_1(x)(v)+h_2(x)( 0)))\in P_{\ty}.
    \end{align*}
By linearity of $h_2$ in its second argument which holds by virtue of its type, it is enough to show that  
\begin{align*}
    (&x\mapsto f_1(x)\\
    (& x\mapsto g_1(x),\\
    & x\mapsto v\mapsto h_1(x)(v)))\in P_{\ty},
    \end{align*}
which is true by assumption.

Further, suppose that $\trm\in\Syn(\ty,\ty[2])$ 
and $\trm[2]\in\Syn(\ty,\ty[3])$ and assume that $\semext{\Dsynrev{\trm}}$ and $\semext{\Dsynrev{\trm[2]}}$ 
respect the logical relation.
We will show that $\semext{\Dsynrev{\tPair{\trm}{\trm[2]}}}$ also respects the logical relation.
Observe that $\tPair{\trm}{\trm[2]}\in\Syn(\ty,\ty[2]\t*\ty[3])$.
Therefore, $\Dsynrev{\tPair{\trm}{\trm[2]}}_1\in\ALSyn(\Dsynrev{\ty}_1,\Dsynrev{\ty[2]}_1\t*\Dsynrev{\ty[3]}_1)$
and\\ $\Dsynrev{\tPair{\trm}{\trm[2]}}_2\in\ALSyn(\Dsynrev{\ty}_1,\LinFun{\Dsynrev{\ty[2]}_2\t*\Dsynrev{\ty[3]}_2}{\Dsynrev{\ty}_2})$.
We have that 
\begin{align*}
    \semext{\Dsynrev{\tPair{\trm}{\trm[2]}}_1}(x)&=(\semext{\Dsynrev{\trm}_1}(x),\semext{\Dsynrev{\trm[2]}_1}(x))\\
    \semext{\Dsynrev{\tPair{\trm}{\trm[2]}}_2}(x)(v)&=\semext{\Dsynrev{\trm}_2}(x)(\pi_1 v) + \semext{\Dsynrev{\trm[2]}_2}(x)(\pi_2 v).
\end{align*}
Suppose that $(f, (g,h))\in P_{\ty}$.
We need to show that
\begin{align*}
    (&x\mapsto \semext{\tPair{\trm}{\trm[2]}}(f(x))\\
    (& x\mapsto \semext{\Dsynrev{\tPair{\trm}{\trm[2]}}_1},\\
    & x\mapsto v\mapsto h(x)(\semext{\Dsynrev{\tPair{\trm}{\trm[2]}}_2}(g(x))(v))
    ))\in P_{\ty[2]\t*\ty[3]}.
\end{align*} 
That is,
\begin{align*}
    (&x\mapsto (\semext{\trm}(f(x)), \semext{\trm[2]}(f(x)))\\
    (& x\mapsto (\semext{\Dsynrev{\trm}_1}(f(x)), \semext{\Dsynrev{\trm[2]}_1(f(x))}),\\
    & x\mapsto v\mapsto h(x)(
    \semext{\Dsynrev{\trm}_2}(g(x))(\pi_1 v) + \semext{\Dsynrev{\trm[2]}_2}(g(x))(\pi_2 v))
    ))\in P_{\ty[2]\t*\ty[3]}.
\end{align*}
By linearity of $h$ in its second argument, it is enough to show that 
\begin{align*}
    (&x\mapsto (\semext{\trm}(f(x)), \semext{\trm[2]}(f(x)))\\
    (& x\mapsto (\semext{\Dsynrev{\trm}_1}(f(x)), \semext{\Dsynrev{\trm[2]}_1(f(x))}),\\
    & x\mapsto v\mapsto h(x)(
    \semext{\Dsynrev{\trm}_2}(g(x))(\pi_1 v)) + h(x)(\semext{\Dsynrev{\trm[2]}_2}(g(x))(\pi_2 v))
    ))\in P_{\ty[2]\t*\ty[3]},
\end{align*}
which is true by the assumption that $\semext{\Dsynrev{\trm}}$ and $\semext{\Dsynrev{\trm[2]}}$ 
respect the logical relation and $(f, (g,h))\in P_{\ty}$.

Next, we consider $\tUnit\in\Syn(\ty,\Unit)$.
We have that 
\begin{align*}
\semext{\Dsynrev{\tUnit}_1}(x)&=()\\
\semext{\Dsynrev{\tUnit}_2}(x)(v)&=0.
\end{align*}
Therefore, given any $(f,(g,h))\in P_{\ty}$, we need to show that
\begin{align*}
(&x\mapsto \semext{{\tUnit}}(f(x))\\
(&x\mapsto \semext{\Dsynrev{\tUnit}_1}(g(x)),\\
&x\mapsto v\mapsto h(x)(\semext{\Dsynrev{\tUnit}_2}(g(x))(v))
))\in P_{\ty}.
\end{align*}
That is, 
\begin{align*}
    (&x\mapsto ()\\
    (&x\mapsto (),\\
    &x\mapsto v\mapsto h(x)(0)
    ))\in P_{\ty}.
    \end{align*}
This follows as $h$ is linear in its second argument by virtue of its type.

Consider identities: $\id\in\Syn(\ty,\ty)$.
Then, $\Dsynrev{\id}_1\in\ALSyn(\Dsynrev{\ty},\Dsynrev{\ty})$ and \\
$\Dsynrev{\id}_2\in\ALSyn(\Dsynrev{\ty}_1,\LinFun{\Dsynrev{\ty}_2}{\Dsynrev{\ty}_2})$.
We have 
\begin{align*}
    &\semext{{\id}_1}(x)=x
&\semext{\Dsynrev{\id}_1}(x)=x
&\semext{\Dsynrev{\id}_2}(x)(v)=v.
\end{align*}
Suppose that $(f,(g,h))\in P_{\ty}$.
Then, we need to show that\\ $(f;\semext{\id},(g;\semext{\Dsynrev{\id}_1},
x\mapsto v\mapsto h(x)(\semext{\Dsynrev{\id}_2}(g(x))(v))))\in P_{\ty}$.\\
That is, $(f, (g,x\mapsto v\mapsto  h(x)( v)))\in P_{\ty}$, which is true by assumption.

Consider composition: $\trm\in\Syn(\ty,\ty[2])$ and $\trm[2]\in\Syn(\ty[2],\ty[3])$,
which both respect the logical relation in the sense of the fundamental lemma.
Then,\\
$\Dsynrev{\trm}_1\in\ALSyn(\Dsynrev{\ty}_1,\Dsynrev{\ty[2]}_1)$,\\
$\Dsynrev{\trm[2]}_1\in\ALSyn(\Dsynrev{\ty[2]}_1,\Dsynrev{\ty[3]}_1)$,\\
$\Dsynrev{\trm}_2\in\ALSyn(\Dsynrev{\ty}_1,\LinFun{\Dsynrev{\ty[2]}_2}{\Dsynrev{\ty}_2})$,
and\\
$\Dsynrev{\trm[2]}_2\in\ALSyn(\Dsynrev{\ty[2]}_1,\LinFun{\Dsynrev{\ty[3]}_2}{\Dsynrev{\ty[2]}_2})$.\\
Further, $\Dsynrev{\trm;\trm[2]}_1\in \ALSyn(\Dsynrev{\ty}_1,\Dsynrev{\ty[3]}_1)$,
$\Dsynrev{\trm;\trm[2]}_2\in\ALSyn(\Dsynrev{\ty}_1,\LinFun{\Dsynrev{\ty[3]}_2}{\Dsynrev{\ty}_2})$.
We have that 
\begin{align*}
&\semext{\trm;\trm[2]}(x)=\semext{\trm[2]}(\semext{\trm}(x))\\
&\semext{\Dsynrev{\trm;\trm[2]}_1}(x)=\semext{\Dsynrev{\trm[2]}_1}(\semext{\Dsynrev{\trm}_1}(x))\\
&\semext{\Dsynrev{\trm;\trm[2]}_2}(x)(v)=\semext{\Dsynrev{\trm}_2}(x)(\semext{\Dsynrev{\trm[2]}_2}(\semext{\Dsynrev{\trm}_1}(x))(v)).
\end{align*}
Suppose that $(f,(g,h))\in P_{\ty}$.
We want to show that  $$(f;\semext{\trm;\trm[2]},(g;\semext{\Dsynrev{\trm;\trm[2]}_1},
x\mapsto v\mapsto h(x)(\semext{\Dsynrev{\trm;\trm[2]}_2}(g(x))(v))))\in P_{\ty[3]}.$$
That is,
\begin{align*}
    &(f;\semext{\trm};\semext{\trm[2]},(g;\semext{\Dsynrev{\trm}_1};\semext{\trm[2]}_1,\\
    &\qquad x\mapsto v\mapsto h(x)(\semext{\Dsynrev{\trm}_2}(g(x))(\semext{\Dsynrev{\trm[2]}_2}(\semext{\Dsynrev{\trm}_1}(g(x)))(v)))))\in P_{\ty[3]}.
\end{align*}
Now, as $\trm$ respects the logical relation, by our induction hypothesis, we have that 
\begin{align*}
    &(f;\semext{\trm},(g;\semext{\Dsynrev{\trm}_1};\semext{\trm[2]}_1,\\
    &\qquad x\mapsto v\mapsto h(x)(\semext{\Dsynrev{\trm}_2}(g(x))(v))))\in P_{\ty[2]}.
\end{align*}
Therefore, as $\trm[2]$ also respects the logical relation, by our induction hypothesis, we have that 
\begin{align*}
    &(f;\semext{\trm};\semext{\trm[2]},(g;\semext{\Dsynrev{\trm}_1};\semext{\trm[2]}_1,\\
    &\qquad x\mapsto v\mapsto h(x)(\semext{\Dsynrev{\trm}_2}(g(x))(\semext{\Dsynrev{\trm[2]}_2}(\semext{\Dsynrev{\trm}_1}(g(x)))(v)))))\in P_{\ty[3]}.
\end{align*}

The base cases of operations hold by the chain rule.
Indeed, consider\\ $\op\in \Syn(\reals^{n_1}\t*\ldots\t*\reals^{n_k},\reals^m)$.\\
Note that $\op=\Dsynrev{\op}_1\in \ALSyn(\reals^{n_1}\t*\ldots\t*\reals^{n_k},\reals^m)$
and\\ $\transpose{(D\op)}=\Dsynrev{\op}_2\in\ALSyn(\reals^{n_1}\t*\ldots\t*\reals^{n_k}
,\LinFun{ \reals^m}{\reals^{n_1}\t*\ldots\t*\reals^{n_k}})$.
We have that 
\begin{align*}
&\semext{\op}(x)=\semext{\Dsynrev{\op}_1}(x)\\
&\semext{\Dsynrev{\op}_2}(x)(v)=\semext{\transpose{D \op}}(x)(v)=\transpose{D\semext{\op}}(x)(v),
\end{align*}
where we use the crucial assumption that the derivatives of primitive operations 
are implemented correctly.
Then, let $(f,(g,h))\in P_{\reals^{n_1}\t*\ldots\t*\reals^{n_k}}$.
That is, $(f,(g,h))=((f_1,\ldots,f_k),((g_1,\ldots,g_k), x\mapsto v\mapsto 
h_1(x)(\pi_1 v)+\ldots + h_k(x)(\pi_k v)))$,
for $(f_i,(g_i,h_i))\in P_{\reals^{n_i}}$, for $1\leq i\leq k$.
We want to show that $$(f;\semext{\op},(g;\semext{\Dsynrev{\op}_1},
x\mapsto v\mapsto h(x)(\semext{\Dsynrev{\op}_2}(g(x))(v))))\in P_{\reals^m}.$$
That is,
\begin{align*}
    (f;\semext{\op},(g;\semext{\op},
    x\mapsto v\mapsto h(x)(\transpose{D\semext{\op}}(x)(v))))\in P_{\reals^m}.
\end{align*}
That is,
\\
\resizebox{\linewidth}{!}{\parbox{\linewidth}{
\begin{align*}
    ((f_1,\ldots,f_k);\semext{\op},((g_1,\ldots,g_k);\semext{\op},
    x\mapsto v\mapsto \sum_{i=1}^k h_i(x)(\pi_i\transpose{(D\semext{\op})}(g_i(x))(v))))\in P_{\reals^m}.
\end{align*}}}\\
By the assumption that $(f_,(g_i, h_i))\in P_{\reals^{n_i}}$, we have that 
$g_i=f_i$ and $h_i=\transpose{Df_i}$.
Therefore, we need to show that 
\\
\resizebox{\linewidth}{!}{\parbox{\linewidth}{
\begin{align*}
    ((f_1,\ldots,f_k);\semext{\op},((f_1,\ldots,f_k);\semext{\op},
    x\mapsto v\mapsto \sum_{i=1}^k
    \transpose{Df_i}(x)(\pi_i\transpose{(D\semext{\op})}(f_i(x))(v))))\in P_{\reals^m}.
\end{align*}}}\\
Using the chain rule for multivariate differentiation (and a little bit of linear algebra), this is equivalent to,
\begin{align*}
    ((f_1,\ldots,f_k);\semext{\op},((f_1,\ldots,f_k);\semext{\op},
    \transpose{(D((f_1,\ldots,f_k);\semext{\op}))}
    ))\in P_{\reals^m}.
\end{align*}
Therefore, the fundamental lemma follows.
\end{proof}
Again, the correctness theorem then follows by exactly the argument in 
the proof of Thm. \ref{thm:AD-correctness}.
\begin{theorem}[Correctness of Reverse AD]
    For any typed term  $\var:\ty\vdash \trm:\ty[2]$
    in $\Syn$, where $\ty$ and $\ty[2]$ are first-order types, we have that
    $$\semext{\Dsynrev{\trm}_1}=\semext{\trm}\qquad\textnormal{and}\qquad 
    \semext{\Dsynrev{\trm}_2}=\transpose{D\semext{\trm}}.$$
    \end{theorem}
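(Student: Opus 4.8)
The plan is to reproduce, step for step, the curve-based argument used in the proof of Thm.~\ref{thm:AD-correctness}, but now working in the concrete semantics $\semext{-}$ for the applied target language and invoking the reverse-AD fundamental lemma just established (which already discharges the hard cases of evaluation and currying) in place of the categorical logical-relations lifting. First I would fix a first-order type $\ty$, a point $x\in\semext{\Dsynrev{\ty}_1}=\semext{\ty}\cong\RR^N$, and a cotangent $v\in\semext{\Dsynrev{\ty}_2}\cong\cRR^N$. For each standard basis vector $e_i$ of $\cRR^N$ I would take the straight-line curve $\gamma_i(t)\definedby x+t\cdot e_i$, so that $\gamma_i(0)=x$ and $D\gamma_i(0)(1)=e_i$. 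Because $\ty$ is first-order it is an iterated product of ground types $\reals^{n_j}$, so $P_\ty$ unfolds component-wise to its base-case condition; since derivatives and transposed derivatives of tuple-valued maps are computed component-wise, one checks directly that $(\gamma_i,(\gamma_i,\transpose{D\gamma_i}))\in P_\ty$.

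Next I would apply the fundamental lemma to $\trm\in\Syn(\ty,\ty[2])$ and this datum, obtaining
\[
(\gamma_i;\semext{\trm},\,(\gamma_i;\semext{\Dsynrev{\trm}_1},\,x\mapsto v\mapsto\transpose{D\gamma_i(x)}(\semext{\Dsynrev{\trm}_2}(\gamma_i(x))(v))))\in P_{\ty[2]}.
\]
As $\ty[2]$ is also first-order, membership in $P_{\ty[2]}$ forces its first component to equal its second and its third component to be $\transpose{D(\gamma_i;\semext{\trm})}$. Evaluating the resulting primal identity $\gamma_i;\semext{\trm}=\gamma_i;\semext{\Dsynrev{\trm}_1}$ at $0$ yields $\semext{\trm}(x)=\semext{\Dsynrev{\trm}_1}(x)$, and since $x$ was arbitrary this is the first claim.

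For the second claim I would compare the two descriptions of the third component at the argument $(0,v)$. By the chain rule $\transpose{D(\gamma_i;\semext{\trm})(0)}(v)=\innerprod{D\semext{\trm}(x)(e_i)}{v}=\innerprod{e_i}{\transpose{(D\semext{\trm}(x))}(v)}$, where I use that $\transpose{D\gamma_i(0)}$ is the functional $w\mapsto\innerprod{e_i}{w}$ picking out the $i$-th coordinate. On the other hand the fundamental-lemma output evaluates at $(0,v)$ to $\innerprod{e_i}{\semext{\Dsynrev{\trm}_2}(x)(v)}$. Equating the two gives $\innerprod{e_i}{\transpose{(D\semext{\trm}(x))}(v)}=\innerprod{e_i}{\semext{\Dsynrev{\trm}_2}(x)(v)}$ for every $i$, so the two cotangent vectors have identical coordinates and hence coincide: $\semext{\Dsynrev{\trm}_2}(x)(v)=\transpose{D\semext{\trm}(x)}(v)$.

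The hard part here is organisational rather than conceptual, since all the per-construct work lives in the already-proved fundamental lemma. What I expect to require genuine care is the reduction at first-order types: verifying that $P_\ty$ for an iterated product of $\reals^{n_j}$ collapses to the component-wise ``$g=f$ and $h=\transpose{Df}$'' condition, and that the family of basis-vector curves together with the inner-product/transpose bookkeeping correctly recovers the \emph{whole} linear map $\semext{\Dsynrev{\trm}_2}(x)$ from its pairings $\innerprod{e_i}{-}$. The crucial point to get right is that the monoid and inner-product structure on a product is the direct-sum structure, so that $\transpose{D\gamma_i(0)}$ genuinely is projection onto the $i$-th coordinate; once that is pinned down, the argument closes exactly as for the categorical theorem.
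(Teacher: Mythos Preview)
Your proposal is correct and follows essentially the same route as the paper: the paper states that this theorem ``follows by exactly the argument in the proof of Thm.~\ref{thm:AD-correctness}'' once the fundamental lemma is in place, and that argument is precisely the basis-curve construction you describe---choose curves $\gamma_i$ with $\gamma_i(0)=x$ and $D\gamma_i(0)(1)=e_i$, feed them through the fundamental lemma, read off the primal identity at $0$, and recover the full cotangent map from its pairings $\innerprod{e_i}{-}$. Your choice of straight-line curves is a harmless specialisation of the paper's ``let $\gamma_i$ be any smooth curve with the given $0$-jet,'' and your explicit unpacking of $\transpose{D\gamma_i(0)}$ as $w\mapsto\innerprod{e_i}{w}$ is exactly the bookkeeping the paper leaves implicit.
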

\clearpage
\clearpage\sqsection{Operational Semantics and Adequacy for the Applied Target Language}
\label{appx:operational-semantics-target2}

\sqsubsection{Big-Step Semantics}
For completeness, we describe the big-step operational semantics for the applied target language
which is implied by our suggested implementation.
Because of purity, the precise evaluation 
strategy is unimportant. (We use call-by-name evaluation.)
We write $\trm\Downarrow N$ to indicate that a term $\trm$ evaluates to normal form $N$.
If no rule applies to a term $\trm$, we intend it to be a normal form (i.e. $\trm\Downarrow \trm$).
As normal forms are unique, we will write $\Downarrow \trm $ for the unique $N$ such that $\trm\Downarrow N$.
\\
\resizebox{\linewidth}{!}{\parbox{\linewidth}{\begin{align*}\hspace{-6pt}
  \begin{array}{c}
    \inferrule{\trm \Downarrow \underline{c} \quad \op\in\Op}{
    \op(\trm)\Downarrow \underline{\semext{\op}(c)}
}
\quad \!
      \inferrule{\trm\Downarrow \tPair{\trm_1}{\trm_2}\quad 
      \trm_1\Downarrow N_1
      }
      {\tFst\trm\Downarrow N_1}
      \quad \!
      \inferrule{\trm\Downarrow  \tPair{\trm_1}{\trm_2}\quad 
      \trm_2\Downarrow N_2}
      {\tSnd\trm\Downarrow N_2}\quad \!
      \inferrule{
        \trm\Downarrow \fun{\var}\trm'\quad 
        \trm[2]\Downarrow N'\quad 
        \subst{\trm'}{\sfor{\var}{N'}}\Downarrow N}
        {\trm\,\trm[2]\Downarrow N}\\
      \\
\inferrule{\trm\Downarrow \underline{c}\quad 
\trm'\Downarrow \underline{c'}}{\trm + \trm'\Downarrow \underline{c+c'}}
\quad
\inferrule{\zero_{\ty[2]}\Downarrow N}{\zero_{\ty\To\ty[2]}\, \trm\Downarrow N}
\quad
\inferrule{\trm\,\trm[3]\Downarrow N_1\quad 
\trm[2]\,\trm[3]\Downarrow N_2\quad 
N_1+N_2\Downarrow N}
{(\trm+\trm[2])\,\trm[3]\Downarrow N}\quad
  \inferrule{~}{\zero_{\Unit}\Downarrow \tUnit}
  \quad 
  \inferrule{~}{\trm +_{\Unit} \trm[2]\Downarrow \tUnit}
  \\\\ 
  \inferrule{\zero_{\ty}\Downarrow N\quad 
  \zero_{\ty[2]}\Downarrow N'}{\zero_{\ty\t*\ty[2]}\Downarrow \tPair{N}{N'}}\quad
\inferrule{\trm_1\Downarrow \tPair{\trm[2]_1}{\trm[2]'_1}\quad \!
\trm_2\Downarrow \tPair{\trm[2]_2}{\trm[2]'_2}\quad \!
\trm[2]_1+\trm[2]_2\Downarrow N\quad \!
\trm[2]'_1+\trm[2]'_2\Downarrow N'}
{\trm_1+\trm_2\Downarrow \tPair{N}{N'}}
\\\\
\inferrule{\trm[2]\Downarrow\zero_{\LinFun{\ty}{\ty[2]}}\quad  \zero_{\ty[2]}\Downarrow N}
{\applin{\trm[2]}{\trm}\Downarrow N}
\quad
\inferrule{\trm[2]\Downarrow \trm_1+\trm_2
\quad  
\applin{\trm_1}{\trm[3]}\Downarrow N_1\quad \!
\applin{\trm_2}{\trm[3]}\Downarrow N_2\quad \!
N_1+N_2\Downarrow N}
{\applin{\trm[2]}{\trm[3]}\Downarrow N}\\\\ 
\inferrule{
\trm[3]\Downarrow \lop(\trm) \quad 
\trm\Downarrow \underline{c}\quad 
\trm[2]\Downarrow \underline{c'}
}{
    \applin{\trm[3]}{\trm[2]}\Downarrow \underline{\semext{\lop}(c)(c')}
}\quad
\inferrule{\trm[2]\Downarrow \linearid \quad \trm\Downarrow N}
{\applin{\trm[2]}{\trm}\Downarrow N}
\quad 
\inferrule{
\trm\Downarrow (\trm_1\lcomp\trm_2)\quad
\applin{\trm_1}{\trm[3]}\Downarrow N'\quad 
\applin{\trm_2}{N'}\Downarrow N}
{\applin{\trm}{\trm[3]}\Downarrow N}
\\\\ 
\inferrule{\trm[2]\Downarrow \lFst\quad 
\trm\Downarrow \tPair{N}{N'}}
{\applin{\trm[2]}{\trm}\Downarrow N}
\quad
\inferrule{\trm[2]\Downarrow \lSnd\quad \trm\Downarrow \tPair{N}{N'}}
{\applin{\trm[2]}{\trm}\Downarrow N'} 
\quad
\inferrule{\trm'\Downarrow \lPair{\trm}{\trm[2]}\quad
\applin{\trm}{\trm[2]}\Downarrow N\quad 
\applin{\trm}{\trm[3]}\Downarrow N'
}{\applin{\trm'}{\trm[3]}\Downarrow \tPair{N}{N'}}
\\\\
\inferrule{\trm\Downarrow \leval{\trm[2]}\quad \trm[3]\,\trm[2]\Downarrow N}
{\applin{\trm}{\trm[3]}\Downarrow N}
\quad 
\inferrule{\trm\Downarrow \lswap\,\trm[2]}
{\applin{\trm}{\trm[3]}\Downarrow \fun{\var}{\applin{\trm[2]\,\var}{\trm[3]}}}
\quad 
\inferrule{\trm\Downarrow \inv\lcurry\trm[2]\quad \trm[3]\Downarrow \zero}
{\applin{\trm}{\trm[3]}\Downarrow\zero}
\\ \\
\inferrule{\trm\Downarrow \inv \lcurry\trm[2]\quad 
\trm[3]\Downarrow \trm[3]_1+\trm[3]_2\quad 
\applin{\trm}{\trm[3]_1}\Downarrow N_1\quad 
\applin{\trm}{\trm[3]_2}\Downarrow N_2\quad 
N_1+N_2\Downarrow N
}
{\applin{\trm}{\trm[3]}\Downarrow N}\\\\
\inferrule{
  \trm\Downarrow\inv\lcurry\trm[2]\quad 
  \trm[3]\Downarrow \applin{\trm'}{\trm[2]'}\quad 
  \trm'\Downarrow \lsing{\trm[3]'}\quad 
  \applin{\trm[2]\,\trm[3]'}{\trm[2]'}\Downarrow N
}
{\applin{\trm}{\trm[3]}\Downarrow N}
\end{array}
\end{align*}}}\\

\sqsubsection{Adequacy of the Semantics}
Finally, we note that this implementation of the target language 
is sensible as the denotational semantics $\semext{-}$ is 
adequate with respect to the operational semantics induced by the implementation.

Indeed, define \emph{program contexts $C[\_]$ of type $\ty[2]$
with a hole of type $\ty$} to be terms $\_:\ty\vdash C[\_]:\ty[2]$ 
which use the variable $\_$ exactly once.
We write $C[\trm]$ for the {capturing} substitution $\subst{C[\_]}{\sfor{\_}{\trm}}$.
We will consider a notion of contextual equivalence in which only the types $\reals^n$ are
\emph{observable}.
We call two closed terms $\vdash \trm,\trm[2]:\ty$ \emph{contextually equivalent}
if, for all program contexts $C[\_]$ of observable type $\reals^n$ for some $n$ 
with a hole of type $\ty$,  we have that $\Downarrow C[\trm] = \Downarrow C[\trm[2]]$.
We write $\trm\approx \trm[2]$ to indicate that $\trm$ and $\trm[2]$ are contextually equivalent.

We first show two standard lemmas.
\begin{lemma}[Compositionality of $\semext{-}$]
For any two terms $\Ginf{\trm,\trm[2]}\ty$ and any compatible 
program context $C[\_]$ we have that 
$\semext{\trm}=\semext{\trm[2]}$ implies $\semext{C[\trm]}=\semext{C[\trm[2]]}$.
\end{lemma}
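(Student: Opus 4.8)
The plan is to argue by induction on the structure of the program context $C[\_]$, i.e.\ on the typing derivation of $\_:\ty\vdash C[\_]:\ty[2]$. The whole argument rests on a single structural feature of the interpretation: $\semext{-}$ is defined by recursion on terms, so for every term former its defining clause (those of \S\S\ref{ssec:practical-sem} together with the clauses inherited from $\Syn$) writes the semantics of a compound term as a fixed semantic operation applied to the tuple of semantics of its immediate subterms. Thus the map $\trm\mapsto\semext{C[\trm]}$ is a composite of such operations laid out along the shape of $C$, and I only need to check that it depends on $\trm$ solely through $\semext{\trm}$.

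First I would dispatch the base case $C[\_]=\_$, where $C[\trm]=\trm$ and $C[\trm[2]]=\trm[2]$, so the hypothesis $\semext{\trm}=\semext{\trm[2]}$ is already the conclusion. For the inductive step I would use that the hole occurs exactly once: in any compound context precisely one immediate subterm $D[\_]$ contains the hole, and all sibling subterms are hole-free, hence literally identical in $C[\trm]$ and $C[\trm[2]]$. The induction hypothesis gives $\semext{D[\trm]}=\semext{D[\trm[2]]}$, and since the clause for the outer former is a function of the semantics of its immediate subterms (for instance $\semext{\tPair{\trm}{\trm[2]}}=\sPair{\semext{\trm}}{\semext{\trm[2]}}$, and analogously for application, $\lFst$, $\lSnd$, $\lPair{-}{-}$, $\lcomp$, $\applin{-}{-}$, $\lswap$, $\leval{-}$, $\lsing{-}$, $\inv\lcurry$, $+$ and $\zero$), feeding it equal inputs yields equal outputs. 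Running once through the list of term formers closes these cases.

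The hard part, and the step I expect to be the main obstacle, is the treatment of the binders $\fun{\var[2]}{-}$, $\letin{-}{-}{-}$, the $!(-)\otimes(-)$-elimination and $\inv\lcurry$: under them the hole is typed in an enlarged context $\Gamma,\Delta$ whereas $\trm$ and $\trm[2]$ live only in $\Gamma$, and $C[\trm]$ is a \emph{capturing} substitution. To handle this cleanly I would strengthen the induction hypothesis to allow the hole to sit in any context $\Gamma,\Delta$ extending $\Gamma$, asserting $\semext{C[\trm]}=\semext{C[\trm[2]]}$ whenever $\semext{\trm}=\semext{\trm[2]}:\semext{\Gamma}\to\semext{\ty}$ and $\Gamma,\Delta;\_:\ty\vdash C[\_]:\ty[2]$ uses the hole once. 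Descending under a binder enlarges $\Gamma$ to $\Gamma,\Delta$, which semantically amounts to precomposing $\semext{\trm}$ with a projection $\semext{\Gamma,\Delta}\to\semext{\Gamma}$; since precomposition with a fixed morphism preserves equality, the weakened copies of $\trm$ and $\trm[2]$ still have equal semantics and the recursion proceeds. Making this rigorous requires the standard substitution/weakening lemma for $\semext{-}$, namely $\semext{\subst{s}{\sfor{\var}{\trm}}}=\sPair{\id}{\semext{\trm}};\semext{s}$ together with its weakened variants, which I would establish first by a routine parallel induction on $s$. With that lemma in hand every binder case reduces to the non-binder pattern above, and the induction is complete.
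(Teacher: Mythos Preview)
Your proposal is correct and follows the same approach as the paper, which dispatches the lemma in a single sentence: ``This is proved by induction on the structure of terms.'' Your account is considerably more detailed than the paper's, in particular your explicit handling of the binder cases via a strengthened induction hypothesis and a weakening lemma, but the underlying argument is the same structural induction exploiting that every semantic clause is a function of the semantics of the immediate subterms.
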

This is proved by induction on the structure of terms.
\begin{lemma}[Soundness of $\Downarrow$]
In case $\trm$, we have that $\semext{\trm}=\semext{\Downarrow \trm}$.
\end{lemma}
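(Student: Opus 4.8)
The plan is to prove the statement by induction on the derivation of the evaluation judgement $\trm\Downarrow N$ (equivalently, on the height of the evaluation tree), in the form ``$\trm\Downarrow N$ implies $\semext{\trm}=\semext{N}$''. Since normal forms are unique, this is exactly $\semext{\trm}=\semext{\Downarrow\trm}$, and the degenerate base case in which no rule fires (so that $\trm\Downarrow\trm$) holds trivially. I would lean on three ingredients: compositionality of $\semext{-}$ (the preceding lemma), which lets me replace a subterm by its evaluant inside any surrounding term; the validity in $\semext{-}$ of the $\beta\eta$-laws together with the commutative-monoid and homomorphism laws, which I would verify once and for all by the same routine case analysis that establishes soundness of the interpretation with respect to the equational theory; and the explicit clauses defining $\semext{-}$ on the applied term formers $\applin{-}{-}$, $\linearid$, $\lcomp$, $\lFst$, $\lSnd$, $\lPair{\cdot}{\cdot}$, $\leval{\cdot}$, $\lswap\,{\cdot}$, $\lsing{\cdot}$, $\inv\lcurry{\cdot}$ and $\lop(\cdot)$, each of which simply reads off the denotation of a one-step unfolding.

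Working through the rules, I would dispatch those inherited from $\Syn$ first. For operations, $\op(\trm)\Downarrow\underline{\semext{\op}(c)}$ follows because $\semext{\op(\trm)}(v)=\semext{\op}(\semext{\trm}(v))$ and the induction hypothesis gives $\semext{\trm}=c$; the projection rules and the $\beta$-rule for application are handled by compositionality together with the relevant product- and function-$\beta$ equations, e.g.\ $\semext{\trm\,\trm[2]}=\semext{(\fun{\var}\trm')\,N'}=\semext{\subst{\trm'}{\sfor{\var}{N'}}}=\semext{N}$. The monoid rules are equally mechanical: at every type the clauses for $\zero$ and $+$ are pointwise, so the rule $(\trm+\trm[2])\,\trm[3]\Downarrow N$ unfolds, using the pointwise definition of $+$ at function type, to $\semext{\trm\,\trm[3]}(v)+\semext{\trm[2]\,\trm[3]}(v)$, and the induction hypotheses on its three premises give $\semext{N_1}(v)+\semext{N_2}(v)=\semext{N_1+N_2}(v)=\semext{N}(v)$; the cases at $\Unit$ and at $\ty\t*\ty[2]$ run identically.

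The linear-application rules make up the bulk of the work, and I expect them to be the main obstacle. Each rule for $\applin{\trm[2]}{\trm}$ in which $\trm[2]$ evaluates to $\linearid$, $\trm_1\lcomp\trm_2$, $\lFst$, $\lSnd$, $\lPair{\cdot}{\cdot}$, $\leval{\cdot}$, $\lswap\,{\cdot}$ or $\lop(\cdot)$ unfolds directly against the corresponding defining clause of $\semext{-}$, and the rules that split the first argument along a $\zero$ or a $+$ use the monoid and bilinearity laws. The genuinely delicate cases are those for $\inv\lcurry\trm[2]$, where the operational semantics decomposes the $\MapSym$-typed argument according to its monoid presentation (the $\zero$ rule, the $+$ rule, and the singleton rule that feeds the pair into $\trm[2]$). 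Here I would invoke precisely the two facts that already made $\semext{\inv\lcurry\trm[2]}$ well defined, namely that $\semext{\trm[2]}$ is linear in its last argument by typing and that $+$ is commutative and associative: these ensure that the value $\sum_i\semext{\trm[2]}(v)(x_i)(y_i)$ dictated by the denotational clause is independent of how the argument is split, so every decomposition the evaluator might choose yields the same denotation, matching $\semext{\inv\lcurry\trm[2]}(v)(\sum_i\,!x_i\otimes y_i)$. Throughout, I would take for granted, as the paper does, that $\Downarrow$ is deterministic with unique normal forms, so that $\semext{\Downarrow\trm}$ is well defined and the induction is well founded.
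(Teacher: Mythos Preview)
Your proposal is correct and takes essentially the same approach as the paper: induction on the derivation of $\Downarrow$, checking that each operational rule is validated as an equation by the semantics $\semext{-}$. The paper's own proof is a one-liner (``every operational rule is also an equation in the semantics''), and your write-up simply unpacks this case-by-case, including the care needed for the $\inv\lcurry$ rules.
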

This is proved by induction on the definition of $\Downarrow$:
note that every operational rule is also an equation in the semantics.

Then, adequacy follows.
\begin{theorem}[Adequacy]
In case $\semext{\trm}=\semext{\trm[2]}$,
it follows that $\trm\approx\trm[2]$.
\end{theorem}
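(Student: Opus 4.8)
The plan is to combine the two preceding lemmas with a canonical-forms analysis of closed normal forms at the observable ground types $\reals^n$. Concretely, suppose $\semext{\trm}=\semext{\trm[2]}$ and fix an arbitrary program context $C[\_]$ of observable type $\reals^n$ whose hole has the common type of $\trm$ and $\trm[2]$. First I would apply Compositionality of $\semext{-}$ to conclude $\semext{C[\trm]}=\semext{C[\trm[2]]}$. Since $C[\trm]$ and $C[\trm[2]]$ are then closed terms of ground type $\reals^n$, I would invoke Soundness of $\Downarrow$ on each to obtain $\semext{\Downarrow C[\trm]}=\semext{C[\trm]}=\semext{C[\trm[2]]}=\semext{\Downarrow C[\trm[2]]}$.

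The remaining work is to pass from equality of the denotations of these two normal forms back to syntactic equality of the normal forms themselves. For this I would establish a canonical-forms lemma: every closed normal form of type $\reals^n$ is a constant $\underline{c}$ for some $c\in\RR^n$. This rests on two facts the paper already records: the applied target language is pure and terminating, so the normal forms $\Downarrow C[\trm]$ and $\Downarrow C[\trm[2]]$ exist; and, by inspecting the big-step rules, no closed term of type $\reals^n$ can be stuck at anything other than a constant, since applications, projections, operations $\op(\trm)$, and sums each carry a reduction rule whenever their subterms evaluate to values of the appropriate ground shape. Granting this, I would write $\Downarrow C[\trm]=\underline{c}$ and $\Downarrow C[\trm[2]]=\underline{c'}$.

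Finally I would use that the interpretation is injective on ground constants. Since $\semext{\reals^n}=\cRR^n$ has underlying set $\RR^n$ and the constant operation is interpreted by $\semext{\underline{c}}=c$, the equality $\semext{\underline{c}}=\semext{\underline{c'}}$ forces $c=c'$ and hence $\underline{c}=\underline{c'}$. Thus $\Downarrow C[\trm]=\Downarrow C[\trm[2]]$ for every program context $C[\_]$ of observable type, which is precisely the statement $\trm\approx\trm[2]$.

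The main obstacle I anticipate is the canonical-forms lemma for $\reals^n$: it requires a careful case analysis over the big-step rules to rule out stuck closed terms at ground type, together with a companion totality (strong normalization) argument guaranteeing that well-typed closed terms evaluate at all. Everything else is essentially bookkeeping, since the semantic validity of each individual operational rule was already exploited in the proof of Soundness of $\Downarrow$.
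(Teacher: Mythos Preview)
Your proposal is correct and follows essentially the same approach as the paper: chain Compositionality and Soundness of $\Downarrow$ to obtain $\semext{\Downarrow C[\trm]}=\semext{\Downarrow C[\trm[2]]}$, then use that closed normal forms at $\reals^n$ are constants and that the semantics is injective on constants. The paper's proof merely sketches in one sentence what you spell out carefully as the canonical-forms and injectivity steps, but the argument is the same.
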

\begin{proof}
Suppose that $\semext{\trm}=\semext{\trm[2]}$ and let $C[\_]$ be a compatible program context of ground type.
Then, $\semext{\Downarrow C[\trm]}=\semext{C[\trm]}=\semext{C[\trm[2]]}=\semext{\Downarrow C[\trm[2]]}$
by the previous two lemmas.
Finally, as normal forms of type $\reals^n$ are simply constants, which are easily seen to be 
faithfully interpreted in our semantics, it follows that $\Downarrow C[\trm]=\Downarrow C[\trm[2]]$.
Therefore, $\trm\approx\trm[2]$.
\end{proof}
In particular, it follows that the AD correctness proofs of this paper apply to this particular
implementation technique.
\clearpage
\clearpage
\sqsection{AD of higher-order operations such as map}\label{appx:map-fold}
So far, we have considered our arrays of reals to be 
primitive objects which can only be operated on by first-order 
operations.
Next, we show that our framework also lends itself to treating 
higher-order operations on these arrays.
This is merely a proof of concept and we believe a thorough 
treatment for such operations -- in the form of AD rules with 
a correctness proof and implementation -- deserves a paper of its 
own.
Let us consider, as a case study, what happens when we add
the standard functional programming idiom of  
 a higher-order map operation
$\tMap\in \Syn((\reals\To\reals)\t*\reals^n, \reals^n)$ to our source language.
Note that we have chosen to work with an uncurried map primitive, as it makes the 
definitions of the derivatives slightly simpler.
We will derive the reverse AD rules for this operation and prove them 
correct.
We observe that according to the rules of this paper
\begin{align*}
    &\Dsynrev{\tMap}_1\in \ALSyn((\reals\To (\reals\t*(\LinFun{\reals}{\reals})))\t*\reals^n,
    \reals^n)\\
    &\Dsynrev{\tMap}_2\in \ALSyn((\reals\To (\reals\t*(\LinFun{\reals}{\reals})))\t*\reals^n,\\
&\phantom{.................................................................}    \LinFun{\reals^n}{\Map{\reals}{\reals}\t*\reals^n})
\end{align*}
We claim that the following is a correct implementation of reverse derivatives for $\tMap$:
\begin{align*}
&\Dsynrev{\tMap}_1(f,v)\defeq \tMap (f;\tFst, v)\\
&\Dsynrev{\tMap}_2(f,v)(w)\defeq \tPair{\tZip\,v\,w}{\tZipWith\,(f;\tSnd)\,v\,w},
\end{align*}
where we make use of the standard functional programming functions $\tZip$ and 
$\tZipWith$.
We assume that we are working internal to the module defining 
$\LinFun{\ty}{\ty[2]}$ and $\Map{\ty}{\ty[2]}$ as we are implementing 
derivatives of language primitives. As such, we can operate directly 
on their internal representations which we simply assume to be 
plain functions and lists of pairs.

Given this implementation, we have the following semantics:
\\
\resizebox{\linewidth}{!}{\parbox{\linewidth}{
\begin{align*}
    &\semext{\tMap}(f,v)=(f(\pi_1 v),\ldots, f(\pi_n v))\\
&\semext{\Dsyn{\tMap}_1}(f, v)= (\pi_1(f(\pi_1 v)),\ldots, \pi_1(f(\pi_n v)))\\
&\semext{\Dsyn{\tMap}_2}(f,v)(w)=
\sPair{\sum_{i=1}^n !(\pi_i v)\otimes (\pi_i w)}
{((\pi_2(f(\pi_1 v)))(\pi_1 w),\ldots, (\pi_2(f(\pi_n v)))(\pi_n w))}
\end{align*}}}\\

We show correctness of the suggested derivative 
implementations by extending our previous logical relations argument
of \citeappx D
with the corresponding case in the induction over terms
when proving the fundamental lemma.
After the fundamental lemma is established again for this extended language,
the previous 
proof of correctness remains valid.
Suppose that\\ 
$(f,(g,h))\in P_{(\reals\To\reals)\t*\reals^n}$.
That is, 
$f=(f_1,f_2)$, $g=(g_1, g_2)$ and $h=x\mapsto v\mapsto h_1(x)(\pi_1(v))+h_2(x)(\pi_2(v))$ 
for\\ $(f_1,(g_1,h_1))\in P_{\reals\To\reals}$ and 
$(f_2, (g_2,h_2))\in P_{\reals^n}$.
Then, we need to show that
\begin{align*}
(f;\semext{\tMap}, (g;\semext{\Dsynrev{\tMap}_1},
x\mapsto v \mapsto h(x)(\semext{\Dsynrev{\tMap}_2}(g(x))(v))
))\in P_{\reals^n}
\end{align*}
i.e. (by definition)
\begin{align*}
    &(x\mapsto (f_1(x)(\pi_1 f_2(x)),\ldots, f_1(\pi_n f_2(x))),\\
    &(x\mapsto (\pi_1(g_1(x)(\pi_1 g_2(x))),\ldots, \pi_1(g_1(x)(\pi_n g_2(x)))),\\
    &x\mapsto v \mapsto h_1(x)(\sum_{i=1}^n !(\pi_i g_2(x))\otimes (\pi_i v))+\\
    &
    h_2(x)((\pi_2g_1(x)(\pi_1 g_2(x)))(\pi_1 v), \ldots, 
    (\pi_2g_1(x)(\pi_n g_2(x)))(\pi_n v))
    ))\in P_{\reals^n}
    \end{align*}
i.e. (by linearity of $v\mapsto h_1(x)(v)$)
\begin{align*}
    &(x\mapsto (f_1(x)(\pi_1 f_2(x)),\ldots, f_1(\pi_n f_2(x))),\\
    &(x\mapsto (\pi_1(g_1(x)(\pi_1 g_2(x))),\ldots, \pi_1(g_1(x)(\pi_n g_2(x)))),\\
    &x\mapsto v \mapsto\left(\sum_{i=1}^n h_1(x)(!(\pi_i g_2(x))\otimes (\pi_i v))\right)+\\&
    h_2(x)((\pi_2g_1(x)(\pi_1 g_2(x)))(\pi_1 v), \ldots, 
    (\pi_2g_1(x)(\pi_n g_2(x)))(\pi_n v))
    ))
    \in P_{\reals^n}
    \end{align*}
i.e. (by linearity of $v\mapsto h_2(x)(v)$)
\begin{align*}
    &(x\mapsto (f_1(x)(\pi_1 f_2(x)),\ldots, f_1(\pi_n f_2(x))),\\
    &(x\mapsto (\pi_1(g_1(x)(\pi_1 g_2(x))),\ldots, \pi_1(g_1(x)(\pi_n g_2(x)))),\\
    &x\mapsto v \mapsto\sum_{i=1}^n h_1(x)(!(\pi_i g_2(x))\otimes (\pi_i v))+\\&
    h_2(x)(0,\ldots,0,(\pi_2g_1(x)(\pi_i g_2(x)))(\pi_i v),0 \ldots, 
    0)
    ))
    \in P_{\reals^n}
    \end{align*}
Using the fact that $((f^1,\ldots,f^n),((g^1,\ldots,g^n,
x\mapsto v \mapsto h^1(x)(\pi_1 v) + \ldots + h^n(x)(\pi_n v))))\in P_{\reals^n}$
if $(f^i,(g^i,h^i))\in P_{\reals}$ (this is basic multivariate calculus),
it is enough to show that for $i=1,\ldots, n$,\\
\resizebox{\linewidth}{!}{\parbox{\linewidth}{
\begin{align*}
    &(x\mapsto f_1(x)(\pi_i f_2(x)),\\
    &(x\mapsto \pi_1(g_1(x)(\pi_i g_2(x))),\\
    &x\mapsto v \mapsto h_1(x)(!(\pi_i g_2(x))\otimes v)+
    h_2(x)(0,\ldots,0,(\pi_2g_1(x)(\pi_i g_2(x)))( v),0 \ldots, 
    0)
    ))\in P_{\reals}.
    \end{align*}}}\\
By definition of $(f_1,(g_1,h_1))\in P_{\reals\To\reals}$, it is enough to show that 
\begin{align*}
    &(x\mapsto \pi_i f_2(x),\\
    &(x\mapsto \pi_i g_2(x)),\\
    &x\mapsto v \mapsto
    h_2(x)(0,\ldots, 0,v,0,\ldots, 0)
    )\in P_{\reals}.
    \end{align*}
Now, this follows from basic multivariate calculus 
as $(f_2,(g_2,h_2))\in P_{\reals^n}$.

It follows that the proposed implementation of reverse AD for 
$\tMap$ is semantically correct.

\quad\\
Similarly, we can define the forward AD of $\tMap$.
We have that 
\begin{align*}
&\Dsyn{\tMap}_1\in\ALSyn((\reals\To (\reals\t* (\LinFun{\reals}{\reals})))\t*\reals^n, \reals^n)\\
&\Dsyn{\tMap}_2\in\ALSyn((\reals\To (\reals\t* (\LinFun{\reals}{\reals})))\t*\reals^n,\\
&\phantom{.................................................................}\LinFun{(\reals\To\reals)\t*\reals^n}{\reals^n}).
\end{align*}
We claim that the following is a correct implementation of the forward derivative 
of $\tMap$:
\begin{align*}
&\Dsyn{\tMap}_1(f,v)\defeq \tMap (f;\tFst, v)\\
&\Dsyn{\tMap}_2(f,v)(g,w)\defeq \tZipWith (f;\tSnd)\,v\,w +
\tMap\,g\,v. 
\end{align*}
This implementation leads to the following semantics
\begin{align*}
&\semext{\Dsyn{\tMap}_1}(f,v)=(\pi_1(f(\pi_1 v)),\ldots, \pi_1(f(\pi_n v)))\\
&\semext{\Dsyn{\tMap}_2}(f,v)(g,w)=((\pi_2(f(\pi_1 v)))(\pi_1 w),\ldots 
, 
(\pi_2(f(\pi_n v)))(\pi_n w))\\
&\qquad + (g(\pi_1 v),\ldots, g(\pi_n v)).
\end{align*}
We show correctness of this implementation again by extending the proof of our fundamental lemma
with the inductive case for $\tMap$. 
The correctness theorem then follows as before once the fundamental lemma has been 
extended.

Suppose that\\ 
$(f,(g,h))\in P_{(\reals\To\reals)\t*\reals^n}$.
That is, 
$f=(f_1,f_2)$, $g=(g_1, g_2)$ and $h=x\mapsto r\mapsto (h_1(x)(r),h_2(x)(r))$ 
for\\ $(f_1,(g_1,h_1))\in P_{\reals\To\reals}$ and 
$(f_2, (g_2,h_2))\in P_{\reals^n}$.
Then, we need to show that
\begin{align*}
(f;\semext{\tMap}, (g;\semext{\Dsyn{\tMap}_1},
x\mapsto r \mapsto \semext{\Dsyn{\tMap}_2}(g(x))(h(x)(r))
))\in P_{\reals^n}
\end{align*}
i.e. (by definition)
    \begin{align*}
        &(x\mapsto (f_1(x)(\pi_1 f_2(x)), \ldots, 
        f_1(x)(\pi_n f_2(x))),\\
        &(x\mapsto (\pi_1(g_1(x)(\pi_1 f_2(x))), \ldots, 
        \pi_1 (g_1(x)(\pi_n f_2(x)))),\\
        &x\mapsto r\mapsto 
        ((\pi_2(g_1(x)(\pi_1 g_2(x))))(\pi_1 h_2(x)(r)),\ldots,\\
        &\qquad (\pi_2(g_1(x)(\pi_n g_2(x)))(\pi_n h_2(x)(r))))\\
        &\qquad
        + (h_1(x)(r)(\pi_1 g_2(x),\ldots, h_1(x)(r)(\pi_n g_2(x))
        )
        )))
        \in P_{\reals^n}.
        \end{align*}
Observing that $(f^i, (g^i, h^i))\in P_{\reals}$ implies that 
\begin{align*}
    &(x\mapsto (f^1(x),\ldots, f^n(x)), \\
&(x\mapsto (g^1(x), \ldots, g^n(x)),\\
&x\mapsto r\mapsto (h^1(x)(r),\ldots, h^n(x)(r))))\in P_{\reals},\end{align*}
as derivatives of tuple-valued functions are computed componentwise,
it is enough to show that for each $1\leq i \leq n$, we have that 
\begin{align*}
    &(x\mapsto f_1(x)(\pi_i f_2(x)),\\
    &(x\mapsto \pi_1(g_1(x)(\pi_i f_2(x))),\\
    &x\mapsto r\mapsto 
    (\pi_2(g_1(x)(\pi_i g_2(x))))(\pi_i h_2(x)(r))
    + h_1(x)(r)(\pi_i g_2(x)
    )))
    \in P_{\reals}.
    \end{align*}
By definition of $P_{\reals\To\reals}$, 
as $(f_1, (g_1, h_1))\in P_{\reals\To\reals}$ 
it is now enough to show that\\ $(f_2;\pi_i , ( g_2;\pi_i, 
x\mapsto r\mapsto \pi_i (h_2(x)(r))))\in P_{\reals}$.
This follows as $(f_2, (g_2, h_2))\in P_{\reals^n}$
and derivatives of tuple-valued functions are computed componentwise.

It follows that the proposed implementation of forward AD for 
$\tMap$ is semantically correct.
\clearpage
\fi

\end{document}